\title{Designing Rules to Pick a Rule: Aggregation by Consistency}
\author[1,2]{Ratip Emin Berker}
\author[3]{Ben Armstrong\footnote{This work was primarily completed while affiliated with the University of Waterloo.}}
\author[1,2,4]{Vincent Conitzer}
\author[1]{Nihar B. Shah}
\affil[1]{Carnegie Mellon University}
\affil[2]{Foundations of Cooperative AI Lab (FOCAL)}
\affil[3]{Tulane University}
\affil[4]{University of Oxford}
\affil[ ]{\texttt{\{rberker, conitzer, nihars\}@cs.cmu.edu, research@benarmstrong.ca}}
\date{} 
\begin{document}

\maketitle

\begin{abstract}
    Given a set of items and a set of evaluators who all individually rank them, how do we aggregate these evaluations into a single societal ranking? Work in social choice and statistics has produced many aggregation methods for this problem, each with its desirable properties, but also with its limitations. Further, existing impossibility results rule out designing a single method that achieves every property of interest. Faced with this trade-off between incompatible desiderata, how do we decide which aggregation rule to use, \emph{i.e.}, what is a good \emph{rule picking rule}?
    
    In this paper, we formally address this question by introducing a novel framework for rule picking rules (RPRs). We then design a data-driven RPR that identifies the best aggregation method for each specific setting, without assuming any generative model. The principle behind our RPR is to pick the rule---among a set of ``acceptable'' rules that satisfy basic requirements---which maximizes the consistency of the output ranking if the data collection process were repeated. We introduce several consistency-related axioms for RPRs and show that our method satisfies them, including those failed by a wide class of natural RPRs. While we prove that the algorithmic problem of maximizing consistency is computationally hard, we provide a sampling-based implementation of our RPR that is efficient in practice. We run this implementation on known statistical models and find that, when possible, our method selects the maximum likelihood estimator of the data. Finally, we show that our RPR can be used in many real-world settings---such as peer review, contests, or political elections---to gain insights about how the rule currently being used can be modified or replaced to substantially improve the consistency of the process. 
    
    Taken together, our work bridges an important gap between the axiomatic and statistical approaches to rank aggregation, laying a robust theoretical and computational foundation for principled rule picking.
\end{abstract}

%SECTION: Introduction 
\section{Introduction}\label{sec:intro}

Suppose you have a collection of items, and evaluators who individually rank them. Finding the best method for aggregating such data into a single ranking of items is an age-old problem. But what makes an aggregation method \emph{good}? 

One common approach to this question, known as the \emph{axiomatic approach} in the social choice literature~\citep{Plott76:Axiomatic}, is to first select certain criteria (called \emph{axioms}) that the aggregation should satisfy, and then design rules that meet these axioms. However, celebrated impossibility results~\citep{Arrow63:Social,Gibbard73:Manipulation,Satterthwaite75:Strategy} prove even some fundamental axioms cannot be simultaneously satisfied, thereby eliminating all hope for a single ``ideal'' rule that fulfills every reasonable desideratum. On the other hand, even if the chosen axioms are satisfiable at once, there may be many rules that do so, making the selection among them somewhat arbitrary. 

Another common approach for picking the aggregation method, which we will refer to as the \emph{statistical approach}, is to treat the rankings as noisy estimates of an objective ground truth. By assuming a noise model (\emph{e.g.}, Plackett-Luce~\citep{Plackett75:Analysis,Luce59:Individual}, Mallows~[\citeyear{Mallows57:Non}]) for the data generating process, the aggregate ranking can be chosen as the one that maximizes the likelihood of the data under the assumed model. A key challenge is that the assumed model may not be accurate, which is commonly addressed with considering multiple generative models and then choosing one via cross validation~\citep{Zucchini00:Introduction}. However, irrespective of the model-selection process, the assumption of a generative model with a ground truth may be fundamentally flawed, especially in settings with legitimate differences of opinion, such as AI alignment~\citep{Ge24:Axioms}. Further, many natural voting rules with desirable properties are not the maximum likelihood estimator (MLE) for \emph{any} noise model~\citep{Conitzer05:Common,Conitzer09:Preference} and are thereby precluded by the statistical approach.

Given this vast array of tools from statistics and social choice, and no clear way of \emph{a priori} selecting from them, a natural question emerges: \emph{In a given setting, how do we pick which rule to use? In other words, what makes a good rule picking rule (RPR)}?

In this paper, we address this question. Unlike previous literature that largely focuses on picking an aggregate ranking, we want to explicitly pick a rule. Such an approach has several benefits: First, employing an RPR naturally leads to better interpretability of the process by providing a formal justification of why other rules (under which the winners could be different) were not adopted. This can mitigate malicious behavior by preventing a powerful actor (such as an incumbent running for reelection) from changing the rules of the game for their benefit. Second, different (perfectly reasonable) rules may be appropriate for different settings with different requirements, and RPRs offer a principled way of deciding which rule is the most appropriate for a given setting. Lastly, as we will see, our novel framework allows designing natural RPRs that can choose from \emph{any} set of rules, making it easy to continually incorporate novel rules into the aggregation process without changing the rule selection method.

~\\\noindent{\bf Our contributions are as follows:}
    \begin{enumerate}
        \item We introduce a novel framework for formally defining \emph{rule picking rules (RPR)} (\Cref{sec:prelim}). Our framework allows designing principled ways of adopting a rule appropriate for the data, without committing to a set of axioms or a generative model \emph{a priori}. 
        \item Inspired by prior work emphasizing the link between consistency and quality in related settings, we introduce our own RPR, \emph{Aggregation by Consistency ($\aba$)}, with the explicit goal of maximizing consistency if the data collection process were repeated (\Cref{sec:aba}). 
        \item We define several natural axioms for RPRs, and prove $\aba$ satisfies them, including those failed by a wide class of RPRs. For two axioms that $\aba$ fails, we prove impossibility results showing each of them are incompatible with the axioms $\aba$ satisfies (\Cref{sec:axiom}).
        \item We prove that the computational problem of checking if ``complete'' consistency be achieved for a given input (\emph{i.e.}, picking a rule that produces the exact same output) is $\NP$-complete (\Cref{sec:computational}).
        \item Nevertheless, we provide an implementation of $\aba$ that is efficient in practice and performs well in experiments on known distributions (\Cref{sec:experiments}). While our implementation makes no assumptions about the data generation process, we demonstrate experimentally that when a ground truth in fact exists, $\aba$ picks the rule which generates the ground truth ranking with maximum likelihood.
        \item We show that our approach can be applied to both score- and rank-based evaluations across a large variety of empirical settings, at times improving significantly upon the consistency given by rules used in practice. As such, $\aba$ can be used to guide decisions in these settings about which rule to adopt.
    \end{enumerate}
An implementation of our rule was awarded as one of the four winners in the \emph{2nd Computational Social Choice Competition} at the 33rd International Joint Conference on Artificial Intelligence (IJCAI 2024). In addition to checking axiomatic properties, the competition scored rules based on a specific welfare function that mapped the rules' outputs to utilities for voters, which was announced prior to the competition. Remarkably, even though we did not perform any tuning or optimization for this welfare function, our rule achieved 99.71\% of the total utility attained by a competing method explicitly designed to maximize this specific function. This result underscores the robustness and strong general performance of $\aba$ across diverse evaluation criteria.

Taken together, our work lays a robust theoretical and computational foundation for principled rule picking, paving the way for future work in this novel model. Our code and experiments are available \href{https://github.com/beng341/AgreementAggregation}{here}.

% SECTION: Overview 

\section{Overview of the Proposed Approach} \label{sec:highlevel}

Before introducing our formal framework we discuss the basic idea of our approach. Suppose we have two sets of voters who independently rank the same items. A popular measure of the ``quality'' of an aggregation method is the \emph{consistency} between its output on two independent sets of rankings. As such, we seek to design an RPR that picks the rule that maximizes this consistency.\footnote{We will restrict our RPR to pick from rules that satisfy some basic axiomatic properties such as neutrality, which dictates that all items being ranked are treated equally, \emph{i.e.}, permuting the items in evaluators' rankings should result in the output ranking being accordingly permuted. This avoids the pathological case of a constant function with maximal consistency.} This intuition is inspired by a number of related settings where past work has emphasized the importance of consistent outputs:
\paragraph{(1) Peer review} A classical setting in which evaluator rankings need to be aggregated is peer review, and much work has been done to optimize this process; see~\citet{Shah22:Overview} for an overview. Many experiments split reviewers into two panels evaluating the same data (\citet{Jecmen22:Near} show the merits of this methodology), where interpanel disagreement is interpreted as a shortcoming~\citep{Obrecht07:Examining,Fogelholm12:Panel,Pier17:Your,Bast20:How}. For example, in the NeurIPS 2014 and 2021 conferences, the panels disagreed on more than half of the accepted papers, which is taken as a sign of arbitrariness in the review process~\citep{Lawrence14:NIPS,Cortes21:Inconsistency,Beygelzimer23:Has}. Similar experiments use interpanel consistency to compare distributed peer review with an expert panel~\citep{Patat19:Distributed,Kerzendorf20:Distributed}. Overall, it is clear that the peer review community already cares about consistency as an indicator for quality. 

\paragraph{(2) Clustering} One could view the task of rule picking as \emph{learning} a rule from ranking data. Unlike earlier work on learning voting rules that assumes black-box access to a ground-truth rule~\citep{Procaccia09:Learnability}, our setting is unsupervised and hence is more closely related to clustering problems. Indeed,~\citet{Ailon05:Aggregating} show that clustering and rank aggregation are closely linked and can be approached with near-identical algorithms. Similar to the problem of rule picking, an important challenge in clustering is that of model selection (\emph{e.g.}, the number of clusters). Previous work has shown that picking the model in a way that maximizes stability---\emph{i.e.}, would obtain similar results on several data sets from the same underlying source---yields results with higher accuracy; see~\citet{vonLuxburg10:Clustering} for an overview of theoretical results in clustering stability. Again, we see that the consistency and the quality of the output are closely related in this setting. 

\paragraph{(3) Minimum-variance unbiased estimator (MVUE)} A statistical estimator is \emph{unbiased} if its expected value equals the true value of the parameter being estimated. Among all unbiased estimators, the estimator with the smallest mean squared error is the one with the lowest variance. Indeed, MVUEs are very commonly studied in statistics~\citep{Rao49:Sufficent,Chapman51:Minimum} and related bounds for unbiased estimators are frequently used for rank aggregation~\citep{Hajek14:Minimax,Khetan16:Data}. Consider applying an unbiased estimator to two i.i.d.\ data samples. Since the variance of a random variable is exactly half of the expected squared difference between its two i.i.d.\ copies, the estimator that leads to the smallest expected error between the two data sets is once again the one with the minimum variance, \emph{i.e.}, the MVUE. This motivates the maximum consistency aggregation in our setting, where unbiasedness is interpreted as basic constraints for any acceptable rule, which we achieve by restricting our RPR to pick from neutral and anonymous rules (\emph{i.e.}, those that treat all items and evaluators the same, respectively). 

\paragraph{(4) AI alignment} There is nascent interest in applying tools from social choice to AI alignment processes, such as reinforcement learning from human feedback (RLHF)~\citep{Conitzer24:Social,Dai23:Mapping,Mishra23:AI}, with particular emphasis on consistency. Much like peer review, RLHF inevitably relies on input from a limited set of evaluators, despite aiming for broad societal alignment. As such, aggregation methods that are robust to repetitions of the process can decrease the arbitrariness of the final AI model due to the choice of evaluators. As pointed out by~\citet{Conitzer24:Social}, the focus of social choice on producing consistent aggregations helps make it an appropriate tool for this setting.

While these four settings (peer review, clustering, MVUEs, and RLHF) motivate an RPR that maximizes the consistency between two independent evaluations, it is not \emph{ex ante} clear how one would implement that. After all, in practice, we often have only one copy of the process. To explain how we circumvent this, \textbf{\Cref{alg:aba} introduces our RPR, named \emph{Aggregation by Consistency (AbC)}}. Note that this is an informal description, and a formal definition is provided subsequently in \Cref{sec:aba} (\Cref{def:aba}).

\begin{algorithm}
\caption{Aggregation by Consistency ($\aba$) }\label{alg:aba}
    \KwIn{A set of evaluations over items, a set of acceptable (``candidate'') rules} 
    \KwOut{A chosen rule, to be used for aggregating the evaluations} 
    1. Split the evaluators uniformly at random into two groups, considering each group as a copy of the process (in line with the peer review experiments above)\;
    2. For each candidate rule, compute the rule's outputs separately on the two groups, and measure the disagreement between these two outputs\;
    3. Output the rule (among candidate rules) that minimizes this disagreement.
\end{algorithm}

Importantly, \Cref{alg:aba} is agnostic to the types of input and output of the rules it is picking among; it only requires that we define a measure of disagreement for comparing the outputs of the rules (Step 2). Our approach is therefore broadly applicable to a wide variety of settings with specific evaluation formats---\emph{e.g.} rankings, ratings/scores, approval sets---and specific desired outputs---\emph{e.g.}, an aggregate ranking, a single winner, a committee of winners. In this paper, we focus on rules that output an aggregate ranking.

As we will show, $\aba$ has a number of advantages. First, it satisfies several important axioms for RPRs, even those that are failed by a wide class of natural RPRs.   Second, $\aba$ does not assume that the data is coming from a ground truth; however, if a generative model is indeed a reasonable approximation of the data, then, as we show in our experiments, the associated MLE yields a high consistency across random splits. $\aba$ then chooses this estimator, thus obtaining the benefits of the statistical approach. Third, several important social choice axioms can be imposed on $\aba$ simply by restricting the set of candidate rules to those that satisfy them, thus obtaining the benefits of the axiomatic approach.  Lastly, any implementation of our method is easy to continually extend by simply implementing new rules and adding them to our set of candidate rules.

In order to formally define $\aba$, we next introduce our novel framework for rule picking rules.

%SECTION: Preliminaries

\section{Preliminaries and Rule Picking Rules}\label{sec:prelim}

We consider a set of \textit{voters} $\voters =\{1,2,\ldots, \nvoters\}$ each of whom individually evaluate a finite set of \textit{alternatives} $\cand$. For our axiomatic analysis (\Cref{sec:axiom}), it will be helpful to assume each voter provides a ranking of \emph{all} alternatives, although our implementation easily extends to other evaluations formats, as demonstrated by our experiments where voters provide partial rankings (\Cref{sec:gt_v_disagreement,sec:alma_rank}) or ratings/scores (\Cref{sec:score_data}).

A \emph{strict ranking} of alternatives is a total ranking of the elements in $\cand$, whereas a \emph{weak ranking} allows for ties. More formally, a weak ranking of $\cand$ is a complete and transitive binary relationship on $\cand$; a strict ranking is a weak ranking that is also asymmetric. We denote the set of all strict rankings of $\cand$ by $\tL(\cand)$, and the set of all weak rankings of $\cand$ by $\wL(\cand)$. For any (strict or weak) ranking $r$, we write $a \succ_r b$ if $a$ is strictly ranked above $b$ in $r$, and $b \succeq_r a$ otherwise (\emph{i.e.}, if $b$ is ranked weakly above $a$). We assume each voter $i \in \voters$ has a (strict) ranking $\sigma_i \in \tL(\cand)$ over the alternatives $A$. A \textit{(preference) profile} $\profile \in \tL(\cand)^\nvoters$ comprises the rankings of all voters.

\paragraph{Candidate rules}  A \emph{social welfare function (SWF)} is a mapping $f$ that, given a profile $\profile$, outputs a single weak ranking.\footnote{The reader might wonder why we focus on functions that get strict rankings as inputs but output a weak ranking. Restricting the output to strict rankings would require sometimes outputting multiple rankings in order to satisfy neutrality and anonymity (\emph{i.e.}, treating all alternatives and voters the same, respectively), \emph{e.g.}, in cyclic profiles. Using SWFs that return weak rankings (as in \citet[\oldS III]{Arrow63:Social}, \citet[\oldS 1.2.2]{Brandt16:Handbook}) allows us to  return a single ranking without violating these basic axioms.} To aggregate the rankings of the voters, we are interested in picking an SWF to use from a set of acceptable ``candidate'' rules. Importantly, our framework does not place any restrictions on the set of candidate rules. Indeed, our evaluations in \Cref{sec:experiments} will comprise various SWFs including MLEs of known noise models (\emph{e.g.}, the Plackett-Luce~\citep{Plackett75:Analysis,Luce59:Individual} and Mallows~[\citeyear{Mallows57:Non}] models) as well as various SWFs from social choice literature. One such class of SWFs we will sometimes pay attention to is the set of all (monotonic) positional scoring rules. 

\begin{defn}[Positional scoring rules]\label{def:pos} A (monotonic) \emph{positional scoring rule} is an SWF $f_s$ associated to a specified vector  ${s}=(s_j)_{j \in \{1,2,\ldots,\ncand\}}$ with $1=s_1\geq s_2 \geq \ldots \geq s_\ncand=0$. Given a profile $\profile$ as input, the output of $f_s$ is computed as follows:  for each alternative $a \in \cand$ and index $j \in  \{1,2,\ldots,\ncand\}$, let $\cmatrix_{\profile}[a,j]$ denote the number of voters in $\profile$ that rank alternative $a$ in the $j^\text{th}$ position. Further, define the total score of alternative $a$ in $\profile$ (with respect to vector $s$) as $t^{{s}}_{\profile}[a]=\sum_{j=1}^\ncand s_j\cmatrix_{\profile}[a,j]$. Then, on input profile $\profile$, the  SWF $f_{{s}}$  ranks the alternatives in decreasing total score, \emph{i.e.}, $a \succ_{f_{\boldsymbol{s}}(\profile)} b$ if and only if $t^{{s}}_{\profile}[a] > t^{{s}}_{\profile}[b]$. We denote the set of all (monotonic) positional scoring rules by $F_S$.
\end{defn}

In addition to being succinctly representable and easy to compute, positional scoring rules encapsulate many well-known SWFs, including \emph{plurality} ($f_p$, for $p=(1,0,\ldots,0)$), \emph{veto} ($f_v$, for $v=(1,\ldots,1,0)$), and \emph{Borda count} ($f_b$, for $b_j = \frac{\ncand-j}{\ncand-1}$).

\paragraph{Rule Picking Rules} We now introduce a novel framework for picking rules. Given a set of candidate rules (\emph{e.g.}, those that are the MLE of a noise model, or voting rules that satisfy certain axioms), we will define a function that maps a profile to the candidate rule(s) most appropriate for it.

\begin{defn}[Rule picking rules]
    A \emph{rule picking rule (RPR)} is a function $\rpr$ that given a set of SWFs $F$ (called \emph{candidate rules}) 
    and a profile $\profile$, outputs a subset of the candidate rules $Z(F,\profile) \subseteq \rules$. 
\end{defn}

An RPR offers a principled way to pick an SWF that should be used to aggregate the rankings in $\profile$. Ideally, an RPR will capture the idea that different (perfectly reasonable) SWFs may be appropriate for different profiles, \emph{e.g.}, because the profile resembles a specific distribution, or certain positions in the rankings give more information than others. In \Cref{sec:axiom}, we formalize these situations and define some natural axioms for RPRs.

To avoid ambiguity, ideally we would like to have $|\rpr(F,\profile)|=1$. However, just as many natural SWFs (\emph{e.g.}, positional scoring rules) can inevitably lead to ties among alternatives, as we will see, natural RPRs may lead to ties among SWFs for certain profiles, leading to $|\rpr(F, \profile)|>1$. In such cases, a tie-breaking order over $\rules$ can be used to pick the single rule to be adopted.

%SECTION: AbC

\section{Aggregation by Consistency (\emph{AbC})}\label{sec:aba}

Having introduced our framework of RPRs, we now formally present our proposed method. As discussed in \Cref{sec:highlevel}, we want our RPR to maximize the consistency between two independent sets of evaluations. Our approach is general can be implemented with any measure of distance comparing the outputs of a rule on the two groups of voters (to be used in Step 2 of \Cref{alg:aba}). For concreteness in our analysis and experiments, we turn to one such well-known measure between rankings.

\begin{defn}[Kendall-Tau distance with ties]\label{def:kt} Given a pair of weak rankings $r_1,r_2 \in \wL(\cand)$ and alternatives $a, b \in 
\cand$, let $ D_{r_1,r_2}^{a,b}$ be indicator variable that the rankings $r_1$ and $r_2$ strictly disagree about how alternatives $a$ and $b$ should be ordered. Similarly, let $T^{a,b}_{r_1,r_2}$ be the indicator variable that alternatives $a$ and $b$ are tied in at least one of $r_1$ or $r_2$. More formally,
\begin{align*}
    D_{r_1,r_2}^{a,b} &= \mathbb{I}[(a \succ_{r_1}b\text{ and }b \succ_{r_2} a)\text{ or }(b \succ_{r_1}a\text{ and }a \succ_{r_2} b)], \quad \text{and}\\
    T_{r_1,r_2}^{a,b} &= \mathbb{I}[(a \succeq_{r_1}b\text{ and }b \succeq_{r_1} a)\text{ or }(a \succeq_{r_2}b\text{ and }b \succeq_{r_2} a)]. 
\end{align*}
Then, the \emph{Kendall-Tau distance (with ties)} between $r_1$ and $r_2$ is
\begin{align}
    KT(r_1,r_2) = \sum_{ \{a,b\} \in \cand^2: a \neq b}\left( D_{r_1,r_2}^{a,b} + \frac{1}{2}T^{a,b}_{r_1,r_2}\right). \label{eq:kt}
\end{align}
\end{defn} 

It is worth noting that $KT$ is more traditionally defined for strict rankings, without the $T_{r_1,r_2}^{a,b}$ term. We explicitly add the term for ties in order to penalize indecisiveness; otherwise, a rule that always returns the empty ranking (all alternatives tied) would achieve zero disagreement (and hence maximum consistency), without violating neutrality or anonymity.  Weighing ties by $\frac{1}{2}$ is inspired by Kendall's Tau-b correlation coefficient~\citep{Kendall45:Treatment}. We now introduce our consistency-based RPR, which we formally define in \Cref{def:aba}.

\begin{rprfigure}{Aggregation by Consistency ($\aba$)}{def:aba}
    Given a profile $\profile$, consider the following random process: Initialize two sets $\voters_1 = \voters_2 = \emptyset$. For each voter $i \in \voters$, pick an index $j \in \onetwo$ uniformly at random and set $\voters_j \leftarrow \voters_j \cup \{i\}$. Let $\profile^{(j)}= (\sigma_i)_{i \in \voters_j}$ (\emph{i.e.}, the restriction of profile $\profile$ to voters in $\voters_j$) for each index $j \in \onetwo$. Then, given a set of candidate rules $F$, \emph{Agreement-based Aggregation ($\aba$)} is an RPR defined as
    \begin{align}\label{eq:aba}
        \aba(F, \profile)= \argmin_{f \in F}~~~\expc\left[KT\left(f(\profile^{(1)}),f(\profile^{(2)})\right)\right],
    \end{align}
    where the expectation is taken over splitting the profile $\profile$ into $\profile^{(1)}$ and $\profile^{(2)}$ by the random process described above.

\end{rprfigure}
 In words, $\aba$ returns the SWFs among $F$ that minimize, in expectation,\footnote{\label{fn:emptysplit} There is a $2^{-(n-1)}$ probability that  $\voters_j = \emptyset$ for some $j \in \onetwo$. In such a case, we take $f(\profile^{(j)})$ to be the empty ranking (all alternatives tied) for all $f \in F$. This effectively gives zero weight to all such splits when comparing expected disagreements.} the disagreement when applied separately to two sides of a random split of $\profile$. 
 We now illustrate which rule $\aba$ would pick on an example profile by showing how \eqref{eq:aba} can be computed or bounded for each candidate rule.

\begin{ex}\label{ex:aba_run}
    Fix some integer $k \geq 2$ and consider a profile $\profile$ with alternatives $\cand =\{a,b,c\}$ and $\nvoters = 3k$ voters. Suppose the voters comprise three groups, consisting of:
    \begin{itemize}
        \item Group 1: $k$ voters that rank $a \succ b \succ c$.
        \item Group 2: $k$ voters that rank $a \succ c \succ b$.
        \item Group 3: $k$ voters that rank $b \succ c \succ a$.
    \end{itemize}
    Consider the set of candidate rules $F=\{f_p, f_v\}$ (\emph{i.e.}, plurality and veto). As defined in \Cref{sec:prelim}, plurality $(f_p)$ simply ranks alternatives in decreasing order of the number of voters that rank them top. Veto $(f_v)$, on the other hand, ranks alternatives in increasing order of the number of voters that rank them bottom. Let us now evaluate the two candidate rules, plurality and veto, under our Aggregation by Consistency ($\aba$) RPR. Say $\profile^{(1)}$ and $\profile^{(2)}$ are the random variables corresponding to the two subprofiles resulting from splitting $\profile$ via the random process in \Cref{def:aba}. First, consider plurality $f_p$. For notational convenience, let $r_p^1 = f_p(\profile^{(1)})$ and $r_p^2 = f_p(\profile^{(2)})$ denote the random variables corresponding to the two output rankings. Since alternative $c$ does not appear on top of any voter's ranking, both $r_p^1$ and $r_p^2$ will rank $a \succ c$ unless all of the $2k$ voters ranking $a$ as their top alternative end up on the same side of split, in which case the ranking of the other side will have $a$ and $c$ tied. Similarly, both $r_p^1$ and $r_p^2$ will rank $b \succ c$ unless all of the $k$ voters ranking $b$ top end up on the same side of split. Thus,
    \begin{align*}        \expc\left[D_{r_p^1,r_p^2}^{a,c}\right]=  \expc\left[D_{r_p^1,r_p^2}^{b,c}\right] = 0; \quad \expc\left[T_{r_p^1,r_p^2}^{a,c}\right] \leq \frac{1}{2^{2k-1}}; \quad  \expc\left[T_{r_p^1,r_p^2}^{b,c}\right] \leq \frac{1}{2^{k-1}}\nonumber.
    \end{align*}
    Using the linearity of expectation, this implies the expectation in (\ref{eq:aba}) for $f_p$ is
    \begin{align}
        \expc\left[KT\left(r_p^1,r_p^2\right)\right] & = \expc \left[ D_{r^1_p,r_p^2}^{a,b} + \frac{1}{2}T^{a,b}_{r^1_p,r_p^2}\right] + \expc \left[ D_{r^1_p,r_p^2}^{a,c} + \frac{1}{2}T^{a,c}_{r^1_p,r_p^2}\right]+\expc \left[ D_{r^1_p,r_p^2}^{b,c} + \frac{1}{2}T^{b,c}_{r^1_p,r_p^2}\right] \nonumber \\ & \leq 1 + \left( 0+ \frac{1}{2} \cdot  \frac{1}{2^{2k-1}}\right)+\left( 0+ \frac{1}{2} \cdot  \frac{1}{2^{k-1}}\right) \leq 1+\frac{1}{2^4}+\frac{1}{2^2}=1.3125.\label{eq:exp}
    \end{align}
     Now, consider veto $f_v$. Say $r_v^1 = f_v(\profile^{(1)})$ and $r_v^2 = f_v(\profile^{(2)})$. Since the number of voters that rank $a$ and $b$ as their bottom alternative in $\profile$ are tied, $r_v^1$ and $r_v^2$ will either disagree about how $a$ and $b$ should be ranked, or both of them will tie $a$ and $b$. In other words, we cannot have $ D_{r^1_v,r_v^2}^{a,b}= T_{r^1_v,r_v^2}^{a,b}=0$. The same is true for any other pair of alternatives. This implies the expectation in (\ref{eq:aba}) for $f_v$ is
    \begin{align}
        \expc\left[KT\left(r_v^1,r_v^2\right)\right] & = \expc \left[ D_{r^1_v,r_v^2}^{a,b} + \frac{1}{2}T^{a,b}_{r^1_v,r_v^2}\right] + \expc \left[ D_{r^1_v,r_v^2}^{a,c} + \frac{1}{2}T^{a,c}_{r^1_v,r_v^2}\right]+\expc \left[ D_{r^1_v,r_v^2}^{b,c} + \frac{1}{2}T^{b,c}_{r^1_v,r_v^2}\right]   \geq 1.5. \label{eq:exv}
    \end{align}
    Comparing \eqref{eq:exp} and \eqref{eq:exv}, we see that $\aba(F, \profile)=\{f_p\}$. Choosing plurality over veto indeed makes sense for aggregating the rankings in $\profile$, as the plurality (first place) scores of each alternative in $\profile$ clearly give more information about how they compare to each other than their veto (last place) scores. 
\end{ex}

We now discuss three extensions of $\aba$, which we later implement (\Cref{sec:experiments}).

\paragraph{Approximating expected disagreement} While computing the expected disagreement in \eqref{eq:aba} for each SWF can be difficult for more complicated profiles and candidate rules than in \Cref{ex:aba_run}, in practice $\aba$ can be approximated for finite number of candidate rules $|F|$ via Monte Carlo sampling, \emph{i.e.}, by splitting voters via the random process in \Cref{def:aba}, computing the disagreement $KT(f(\profile^{(1)}),f(\profile^{(2)}))$ for each SWF $f \in F$, and repeating the process for a desired number of splits, eventually returning the SWF with the minimum average disagreement.

\paragraph{Infinitely many candidate rules} The sampling approach above works even for certain candidate rule sets $\rules$ with infinite $|F|$. In such cases, optimization algorithms can be used to find the minimizer of (\ref{eq:aba}). For example, if the set of candidate rules is the set of all positional scoring rules (\Cref{def:pos}), \emph{i.e.}, $F=F_S$, one can run a constrained optimization program with the scoring vector $(s_j)_{j \in \{1,2,\ldots,\ncand\}}$ as the variables. Indeed, our implementation of $\aba$ for the experiments in \Cref{sec:experiments} relies on Monte Carlo sampling and includes (among other rules) an optimization over all (infinitely many) positional scoring rules.

\paragraph{Partial rankings} A more general setting than the one we have considered so far is that of \emph{partial rankings}, where each voter $i \in \voters$ ranks a subset of alternatives $\cand_i \subseteq \cand$. This setup is more appropriate for certain settings in which we would like to utilize $\aba$, such as peer review. To extend $\aba$ to partial rankings, we must take into account that each alternative is no longer ranked by every voter. As a result, some alternatives will naturally have more evaluations in a single side of a random split, while others' evaluations will be more evenly split. Ideally, disagreements over alternatives that are well-represented on both sides should be penalized more harshly than those concerning alternatives that appear primarily on one side, where the other side may lack sufficient information. Indeed, in the extreme case where all evaluations of an alternative end up on the same side of a random split, it seems unreasonable to expect any SWF to place this alternative consistently across the split. This can be achieved by replacing the $KT$ function in (\ref{eq:aba}) with the \emph{weighted $KT$ function}~\citep{Kumar10:Generalized}, where the weight of each alternative is set according to how evenly they are represented across a split. Indeed, our implementation that was awarded in the IJCAI-24 competition mentioned in \Cref{sec:highlevel} used this method, as well as some of our experiments in \Cref{sec:experiments}.

\paragraph{Score data} Some elicitation systems may require voters to submit scores for the alternatives rather than ranking them. In such settings, one option is to convert each voter's scores into a ranking, and then apply $\aba$ to pick a rank aggregation rule. Alternatively, we can also extend $\aba$ to directly pick among aggregation functions that take scores as input rather than rankings (\emph{e.g.}, maximum, geometric/arithmetic mean). To do so, we split the scores received by each alternative via a process analogous to that in \Cref{def:aba}, once again choosing the aggregation method that outputs most consistent results across the split. Indeed, one of our experiments in \Cref{sec:experiments} implement this extension.

%SECTION: Axioms
\section{Axiomatic Analysis of \emph{AbC}}\label{sec:axiom}
Having introduced our rule picking rule $\aba$, we would like to investigate its axiomatic properties. To do so, we now introduce some natural axioms for RPRs, focusing on the setting of full rankings introduced in \Cref{sec:prelim}. While some of these axioms are inspired by their counterparts for SWFs, not all SWF axioms can be easily translated to our RPR framework. After all, while an SWF maps preferences over alternatives to an aggregate ranking of alternatives (\emph{i.e.}, its inputs and output are rankings of the same set), an RPR must pick SWFs from candidate rules based on preferences over alternatives, not over candidate rules.

\subsection{Consistency axioms}\label{sec:consistency}

Our primary goal in employing RPRs is to effectively identify voting rules that are well-suited to a given profile. It thus becomes highly desirable for these RPRs to respond in a predictable and consistent manner when the profile changes. To rigorously analyze this desired consistency, we introduce axioms that formalize the expected behaviors of RPRs in response to changes in the profile. This approach adapts established social choice theory axioms, originally designed for SWFs, for application to RPRs, as well as introducing new axioms specific to our setting.

\subsubsection*{\underline{Reversal symmetry} }
One basic change to a profile is to flip the ranking of every voter, so their formerly top ranked alternative is now ranked bottom, and so on. In such a situation, we would ideally like the SWF(s) picked by our RPR to flip too, due the symmetry between the two directions of reading a ranking (top to bottom versus bottom to top). More formally, given a positional scoring rule $f_s \in F_S$, we define the reverse of $f_s$, denoted $\rev(f_s)$, as the scoring rule $f_{s'}$ associated with vector $s'=(s'_1,s'_2, \ldots, s'_{\ncand})=(1-s_\ncand,1-s_{\ncand-1}, \ldots, 1-s_{1})$. For instance, the reverse of plurality is veto, whereas Borda count is its own reverse. For a set of rules $F \subseteq F_S$, we write $\rev(F)=\{\rev(f): f \in F\}$. For any (strict or weak) ranking $r \in \tL(\cand) \cup \wL(\cand)$, we say $\rev(r)$ is the reversed ranking (\emph{e.g.}, $a \succ b \succ c$ becomes $c \succ b \succ a$). Lastly, we use $\rev(\profile)= \{\rev(\sigma_i)\}_{i \in N}$ to denote the reverse of a profile $\profile$. With this notation in place, we now define reversal symmetry,\footnote{Our definition for reversal symmetry is inspired by the homonymous property for rank aggregation methods introduced by \citet{Saari94:Geometry}. In that context, reversal symmetry dictates that if we reverse every voter's ranking, then the output of the method should also be the reverse of its original output. \citet{Holliday23:Split} adapt this axiom to social choice functions (which, given a profile, return a subset of alternatives); here, reversal symmetry requires that if an alternative is the unique winner of a profile, it should not be a winner of the reversed profile. Our axiom for RPRs is more demanding in the sense that we not only require the picked positional scoring rules to change, but also that they should be replaced exactly by their reverse rules.

} our first axiom for RPRs.

\begin{defn}[Reversal symmetry]\label{def:revsym}
    An RPR $\rpr$ satisfies \emph{reversal symmetry} if for any subset of positional scoring rules $F \subseteq F_S$ such that $\rev(F)=F$ and for all profiles $\profile$, we have $\rev(\rpr(\rules, \profile))=  \rpr (\rules, \rev(\profile))$. 
\end{defn}
To see why this is a natural property, suppose for example the ``signal'' in the votes is concentrated at the top: for each voter, the choice of the top alternative is statistically informative but the remainder of the ranking is not.  In such a case, a reasonable RPR should choose plurality.  But if we flip all the votes, then the signal is concentrated at the bottom, and a reasonable RPR should choose veto. As shown in \Cref{thm:axiombundle} below, $\aba$ satisfies this axiom, which follows from the symmetries of the Kendall-Tau distance function.

\subsubsection*{\underline{Shuffling consistency}} 
As with the example above, in some profiles certain parts of the rankings may be more informative than others. To axiomatize the behavior of RPRs in such situations, we will now introduce the novel concept of \emph{shuffling} a profile. Intuitively, we would like to capture the idea that if we uniformly ``shuffle'' an interval of positions in all voters' rankings, exactly where in this interval an alternative ends up reveals no information about how it compares to other alternatives in the interval. Thus, for such a ``shuffled profile,'' we would want our RPR to pick rules that treat these positions equivalently. For example, if in every voter's ranking we uniformly at random permute all of the alternatives they ranked 2$^\text{nd}$ to $|\cand|^{\text{th}}$, whether an alternative is ranked 2$^\text{nd}$ or 3$^\text{rd}$ by any voter no longer makes a difference. Given such a profile, an RPR that, for instance, picks from positional scoring rules should ideally pick one that gives positions 2 and 3 roughly equal weight.

We now formally define shuffling. Intuitively, to shuffle a given subset of $t \leq \ncand$ indices in every voter's ranking, we would like to create a new profile where all $t!$ permutations of these indices are equally represented. To do, we will eventually split the profile uniformly between these permutations. Since the number of voters may not be divisible by $t!$, we will first create a multiple of $\ncand!$ copies of the profile, which ensures that we can assign an equal number of copies for each of the $t!$ permutations.

\begin{defn}[Shuffled profile]\label{def:shuffle}
    Given a profile $\profile$ and a subset of indices $S \subseteq \{1,2,\ldots,\ncand\}$, the \emph{$k$-shuffling of $\profile$ with respect to $S$} (denoted $\shuffle{S}{k}$) is a profile obtained as follows: For each voter $i \in \voters$,
    \begin{itemize}
        \item Create $k\cdot \ncand!$ copies of voter $i$'s ranking $\sigma_i$ and separate them into $|S|!$ groups of equal size, each assigned to a unique permutation of the indices in $S$.
        \item Modify the votes in each group so that the alternatives ranked in the positions in $S$ are permuted according to the permutation assigned to that group. 
        \item Add all copies to the final profile.
    \end{itemize}
\end{defn}
For example, say $\profile$ consists of a single ranking $a \succ b \succ c$, and we are interested in its 1-shuffling with respect to indices $\{1,2\}$. By the first step of \Cref{def:shuffle}, we first create $k\cdot \ncand! =1 \cdot 3! = 6$ copies of this ranking. We then split them into $|S|!=2!=2$ groups of size 3 each. The first group is assigned to the identity permutation of the first two positions. The second group is assigned to the permutation that reverses the first two positions. Hence, $\shuffle{\onetwo}{1}$ consists of six rankings, with three of them being $a \succ b \succ c$ and three being $b \succ a \succ c$.

Our next axiom dictates an RPR's behavior in the extreme case of shuffling all but one position.

\begin{restatable}[Plurality-shuffling consistency]{defn}{pscdef}\label{def:psc}
    Say we are given a finite set of positional scoring rules $\rules \subset \rules_S$ that  contains plurality (\emph{i.e.}, $f_p \in F$) and a profile $\profile$ such that $f_p(\profile)$ contains no ties. Then, an RPR $\rpr$ satisfies \emph{plurality-shuffling consistency} if for any such $F$ and $\profile$, there is a $k\in \mathbb{Z}_+$ so that $\rpr(\rules, \shuffle{[2,\ncand]}{k'})=\{f_{p}\}$ for all $k'\geq k$, \emph{i.e.}, $\rpr$ picks \emph{only} $f_p$ for the $k'$-shuffling with respect to $[2,\ncand]=\{2,3,\ldots,\ncand\}$.
\end{restatable}

Intuitively, for a (sufficiently large) profile where the top position of the votes gives an unambiguous ranking (under plurality), but the remaining $m-1$ positions are effectively indistinguishable, an RPR satisfying plurality-shuffling consistency identifies the rule that treats these $m-1$ positions equally (\emph{i.e.}, plurality) as the only appropriate rule. Such a profile, for example, can approximate settings where the voters clearly know their top alternative, but are indifferent for the remaining spots. Plurality-shuffling consistency is a natural property that respects the symmetry in the profile and identifies where the ``signal'' in the votes is concentrated. Despite this, a large class of RPRs, which we introduce next, all fail plurality-shuffling consistency, as we will show in \Cref{thm:axiombundle}.

\begin{defn}[Welfare-maximizing RPRs]\label{def:welfaremax}
    An RPR $\rpr$ is \emph{welfare-maximizing} if there exists a utility function $u: \tL(\cand) \times \wL(\cand) \rightarrow \mathbb{R}$ such that $\rpr(\rules, \profile)=\argmax_{f \in F} \sum_{i=1}^n u(\sigma_i, f(\profile))$ for all profiles $\profile$ and candidate rule sets $\rules$. 
\end{defn}

Welfare-maximizing RPRs capture an approach common in prior work: interpreting the optimal voting rule as one that  maximizes social welfare with respect to some utility function~\citep{Caragiannis11:Voting,Gershkov17:Optimal}. Indeed, this is how voting rules in the IJCAI-24 competition mentioned in \Cref{sec:highlevel} were scored. As \Cref{def:welfaremax} puts no restrictions on the utility function $u$, welfare-maximizing RPRs constitute a wide class. As such, whether a specific member of this class satisfies an axiom may depend on its associated utility function. For example, it is relatively straightforward to show that a welfare-maximizing RPR satisfies reversal symmetry if its utility function is reversal symmetric, \emph{i.e.}, $u(r,r')=u(\rev(r),\rev(r'))$ for all $r \in \tL(\cand)$ and $r' \in \wL(\cand)$. Nevertheless, we show in \Cref{thm:axiombundle} that \emph{all} welfare-maximizing RPRs fail plurality-shuffling consistency.

On the other hand, this is not the case for $\aba$, which indeed satisfies the axiom. To see this, take a random split of the shuffled profile $\shuffle{[2,\ncand]}{k}$. Given the balanced nature of all positions except the first, any rule that treats a distinction between alternatives in these positions as a signal will inevitably observe the reversed signal on the other side of the split, leading to a higher disagreement. We formalize this in \Cref{thm:axiombundle}.

\subsubsection*{\underline{Union consistency}} 
Next, we study the behavior of RPRs when we combine sets of voters. The next axiom we introduce, \emph{union consistency}, is inspired by an analogous axiom for voting rules, simply named consistency~\citep{Young75:Social}. Informally, it dictates that whenever an alternative is the winner of a voting rule when applied separately to the rankings of two distinct sets of voters, the same alternative should still win when we bring those sets of voters together. We now define our version of the axiom, specifically for RPRs.

For any two profiles $\profile_a$ and $\profile_b$ over the same alternatives $\cand$ but with two disjoint sets of voters $N_a$ and $N_b$, respectively, let $\profile_a + \profile_b$ denote the profile with the rankings of all voters in $N_a \sqcup N_b$. 

\begin{defn}[Union consistency] An RPR $\rpr$ satisfies \emph{union consistency} if for any candidate rule set $F$ and profiles $\profile_a$ and $\profile_b$ such that $\rpr(\rules, \profile_a) \cap \rpr(\rules, \profile_b) \neq \emptyset$, we have
\begin{align*}
    \rpr(\rules, \profile_a+\profile_b) = \rpr(\rules, \profile_a) \cap \rpr(\rules, \profile_b).
\end{align*}
\end{defn}

Unlike previous axioms, $\aba$ does not satisfy UC, but (as we show below) neither does any RPR satisfying reversal symmetry and plurality-shuffling consistency, giving us our first impossibility result for RPRs. Further, we argue that union consistency is less significant for an RPR than its counterpart for voting rules: just because a rule is appropriate for two different sets of rankings does not necessarily mean it is appropriate for their union, especially if they are differently distributed. Indeed, in the extreme case of a single voter, \emph{any} sensible aggregation rule (\emph{i.e.} that returns the ranking itself) is appropriate.

\subsubsection*{\underline{Main result of \Cref{sec:consistency}}} 

We now present \Cref{thm:axiombundle}, which states our results regarding the consistency axioms in \Cref{sec:consistency}. 

\begin{restatable}{thm}{axiombundle}\label{thm:axiombundle}
    (1) $\aba$ satisfies reversal symmetry.\\ (2)     Any welfare-maximizing RPR fails plurality-shuffling consistency; $\aba$ satisfies it.\\ (3) No (anonymous) RPR can satisfy all three of reversal symmetry, plurality-shuffling consistency, and union consistency.
 \end{restatable}

The proof can be found in \Cref{app:consistency}. The axioms in \Cref{thm:axiombundle} are specific for RPRs, demonstrating $\aba$ responds consistently to changes in the profile. Next, we will turn our attention to existing axioms for SWFs and how $\aba$ (and RPRs in general) interact with them.

\subsection{Preserved axioms}\label{sec:preserve}

In this section, we investigate which axiomatic properties (for SWFs) are retained by an RPR when all SWFs in the set of candidate rules satisfy them.

While using an RPR has advantages beyond just outputting an aggregate ranking (see \Cref{sec:intro}), it is true that an RPR $\rpr$, paired with a set of candidate rules $\rules$, \emph{induces} an SWF:
\begin{align*}
    f^\rules_\rpr(\profile) \eqdef f(\profile) \text{ where }\rpr(\rules, \profile)=\{f\},
\end{align*}
\emph{i.e.}, $f_\rpr^\rules$ is the SWF that first maps an input profile to an SWF using $\rpr$, and then applies that SWF to the profile. For $f^\rules_\rpr$ to be well-defined, we must have $|\rpr(\rules, \profile)|=1$. As we are interested in the properties of $f^\rules_{\aba}$, for this subsection alone, we restrict our setting to candidate rules and profiles for which $|\aba(\rules, \profile)|=1$, and to RPRs $\rpr$ for which $|\rpr(\rules, \profile)|=1$ for these profiles.

Clearly, the axiomatic properties of $f_{\rpr}^\rules$ as an SWF for any RPR $\rpr$ depend on our selection of the candidate rules $F$. By restricting $\rules$ to SWFs that all satisfy a certain property, can we ensure that so will $f_\rpr^\rules$? More formally, we say an RPR $\rpr$ \emph{preserves} a property $P$ if whenever $P$ is true for all candidate rules $f \in \rules$, then $P$ is true for $f^\rules_\rpr$. For example, recall that an SWF is anonymous if permuting the voters $\voters$ in a profile does not change the output ranking of the SWF, \emph{i.e.}, all voters are treated equally. Similarly, an SWF is neutral if permuting the alternatives $\cand$ in a profile results in the output ranking of the SWF being permuted in the same way, \emph{i.e.}, all alternatives are treated equally. It is immediate from \Cref{def:aba} that $\aba$ preserves these two properties. On the other hand, this is not true for all RPRs,  \emph{e.g.}, for an alternative $a \in \cand$, say the RPR $\rpr_a$ maps each profile $\profile$ to the candidate rule $f \in F$ that ranks $a$ highest in $f(\profile)$. Then, $f_{\rpr_a}^\rules$ is clearly not neutral, even if all $f \in \rules$ are. Still, as we show in \Cref{thm:preservebundle} below, certain fundamental social choice axioms are preserved by \emph{all} RPRs. 

However, not every axiom is as easily preserved. Given a weak ranking $r\in \wL(\cand)$ and an alternative $a \in \cand$, say $\rank_r(a)=1+ |\{b \in \cand: b \succ_r a \}|$ is the rank of $a$ in $r$. We say an SWF $f$ satisfies \emph{monotonicity} if for any profile $\profile$ and alternative $a \in \cand$, if $\profile'$ is the same as $\profile$ except some voters now rank $a$ higher, then $\rank_{f(\profile)}(a) \geq \rank_{f(\profile')}(a)$. In words, monotonicity dictates that promoting an alterative in some voters' rankings while keeping all else constant should not hurt the alternative in the final output ranking. It turns out $\aba$ does not preserve monotonicity. This is  because promoting an alternative may still cause the rule picked by $\aba$ to change to a SWF that ranks that alternative further below, even if it does not hurt that alternative under any fixed candidate rule's output ranking. Still, much like union consistency, we show that preserving monotonicity is not compatible with another axiom satisfied by $\aba$, reversal symmetry.

We now present \Cref{thm:preservebundle}, which states our main results from \Cref{sec:preserve}.

\begin{restatable}{thm}{preservebundle}\label{thm:preservebundle}
    (1) $\aba$ preserves anonymity and neutrality.
    
    \noindent(2) Any RPR $Z$ preserves the Smith criterion, Condorcet consistency, majority winner, pairwise majority consistency, and unanimity.
    
    \noindent(3) No (anonymous) RPR can satisfy reversal symmetry and preserve monotonicity.
\end{restatable}

For readers unfamiliar with the axioms in (2) of \Cref{thm:preservebundle}, we provide the formal definitions in \Cref{app:rpr_preserve}, along with the proof of the full theorem. Statement (2) in the theorem demonstrates a strength of not just $\aba$, but of our RPR framework in general, showing how one can still reap the benefits of the axiomatic approach simply by restricting the set of candidate rules $\rules$ to certain SWFs.

Overall, we have shown that $\aba$ satisfies many natural axioms (reversal symmetry, plurality-shuffling consistency, and preservation of the axioms in \Cref{thm:preservebundle}); as for axioms it fails (union consistency and the preservation of monotonicity), we have given impossibility results showing each of them is incompatible with axioms that $\aba$ satisfies. In \Cref{app:axiom_violation}, we apply the data-driven approach of \citet{caiata2025} to show experimentally that violations of monotonicity and union consistency are typically rare across well-known distributions, as well as on real-life data.

%SECTION: Computational Problem

\section{Computational Problem: $\perfpos$}\label{sec:computational}

Recall from \Cref{def:aba} that $\aba$ picks the SWF(s) from candidate rule set $\rules$ that minimizes the expected disagreement over a random split. In this section, we show that when $F=F_S$ (candidate rules are the set of all positional scoring rules), minimizing the disagreement over even a given split is algorithmically hard. Consider the following computational problem: We are given a strict ranking $\sigma_i \in \tL(A)$ for each voter $i \in \voters$, as well as an even split of voters $\voters=\voters_1 \sqcup \voters_2$ with $|N_1|=|N_2|$. Then, $\perfpos$ (\textbf{Perf}ect \textbf{Pos}itional) asks:

\begin{center}
\begin{minipage}{11cm}     Is there a positional scoring rule $f_s \in F_S$ that achieves perfect consistency over this split, \emph{i.e.}, obtains $KT(f_s(\profile^{(1)}),f_s(\profile^{(2)}))=0$?
\end{minipage}
\end{center}
Here, for each $ j\in \onetwo$, $\profile^{(j)} = (\sigma_i)_{i \in N_j}$ is the restriction of the profile $\profile$ to voters in $N_j$. We now show that this problem is hard.
\begin{restatable}{thm}{nphard}\label{thm:perfpos}
    $\perfpos$ is \NP-complete.
\end{restatable}
The proof, which is a reduction from 3SAT, can be found in \Cref{app:perfpos}. \Cref{thm:perfpos} shows that when $F=F_S$,
it is even hard to decide whether it is possible to have no disagreement at all. This also proves that the problem of computing the minimal possible disagreement for a given split is not only hard, but also hard to approximate to any multiplicative factor (since any algorithm with such a guarantee must return a solution with disagreement 0 when possible). While technically distinct, our result is aligned with other hardness results for optimizing over positional scoring rules, \emph{e.g.}, for picking the rule most consistent with an underlying true ranking to which we are given partial access~\citep{Caragiannis19:Optimizing}.

It is worth noting that \Cref{thm:perfpos} does not rule out efficient implementations of $\aba$ with candidate rule sets $\rules$ that do not contain all positional scoring rules. In particular, if $\rules$ is finite and only contains efficiently implementable rules, minimizing disagreement for a given split is easy, since the disagreement of each SWF can be computed one by one. We leave studying the complexity of computing $\aba$ with different sets of candidate rules for future work. Further, despite the complexity result, as we will experimentally show in the next section, an implementation of an approximation of $\aba$ using Monte Carlo sampling (see the discussion of extensions to $\aba$ in \Cref{sec:aba}) is computationally efficient in practice and yields desirable results, even when $F_S \subseteq F$ (all positional scoring rules are included in the candidate rule set).

\section{Experiments}\label{sec:experiments}

We now give experimental results evaluating $\aba$ and discuss their implications. Given a profile $\profile$, in order to approximate the expectation in Equation~\eqref{eq:aba}, our implementation samples a number of random splits of $\profile$ according to the procedure in \Cref{def:aba}, computing the disagreement for each candidate rule on each split, and returning the rule with the minimum average disagreement. We first describe procedures and parameters common through all our experiments. Additional details regarding the experiments, including specific datasets, distributions we use, and parameter settings can be found in \Cref{sec:appendix_experiments}. 
\begin{itemize}
    \item \emph{Number of Splits}: With two exceptions, all experiments report the average disagreement resulting from each rule over 10 random splits of voters. The exceptions are: When evaluating axioms (\Cref{app:axiom_violation}, \autoref{fig:axiom_violation_rates_appendix}) we report the average over 50 splits, and when evaluating score data (\Cref{sec:score_data}, \autoref{tab:score_data}) we report the average over 1000 splits. These values were decided based on computational and time constraints.
    \item \emph{Splitting Procedure}: When creating two groups of voters we assign each voter to one group or the other uniformly at random, as detailed in \Cref{def:aba}. In the case that no voters are assigned to one group, we assign that group a single weak ranking in which all alternatives are tied (see Footnote~\ref{fn:emptysplit}).
    \item \emph{Weighting Partial Rankings}: 
    In settings where voters submit partial rankings, we weigh each alternative based on the number of times it appears on each side of a split, as discussed in \Cref{sec:aba}. This serves to correct for mismatches in the amount of information known about each alternative (\emph{i.e.}, the number of times each alternative is ranked) between the two sides of the split. Consider a split of a fixed profile $\profile$ into ($\profile^{(1)}, \profile^{(2)})$ via the process in \Cref{def:aba}. For any alternative $a \in \cand$, let $m_a$ denote the minimum number of times $a$ is included in rankings in $\profile^{(1)}$ or $\profile^{(2)}$, and let $t_a$ refer to the total number of times $a$ is included in a ranking across the entire profile $\profile$. Then the weight of $a$ is $w_a = \frac{\gamma^{m_a}-1}{\gamma^{t_a/2}-1}$. This ensures that the largest weights are assigned to alternatives that are evenly split across the split ($m_a=t_a/2$), whereas any alternative that does not appear on one side of the split ($m_a=0$) gets zero weight, as it is unreasonable to expect this alternative to be placed consistently across the split by any SWF. All our experiments use a constant factor of $\gamma = 2$. Weights are used as a scaling factor in computing the Kendall-Tau distance (as in \citealt{Kumar10:Generalized}); when alternatives $a$ and $b$ are in opposite positions across two rankings then they contribute $w_aw_b$ to the weighted Kendall-Tau distance (as opposed to the unweighted setting where any misordered pair adds 1 to the distance).
    \item \emph{Computational Power}: All experiments are performed locally on a 2022 M2 Macbook Air with 16 GB of memory. Each individual experiment shown in the paper took between 0.5 and 3 hours to complete.
\end{itemize}

As discussed in \Cref{sec:aba}, the SWFs we test include an optimization over all positional scoring rules (``Best Positional Scores''), which uses simulated annealing to search for the scoring vector minimizing the average disagreement for a given profile (see \Cref{app:bestpos} for details). ``Trimmed Borda'' refers to a modified version of Borda count where the maximum and minimum rank received by each alternative is deleted from each split, before computing its Borda score with the remaining evaluations \citep{Meyer22:Analysis,Stafford25:RoRI}. All reported distances are normalized by dividing the (weighted) Kendall-Tau distance resulting from each rule to the maximum (weighted) Kendall-Tau distance obtainable for that split (\emph{i.e.}, the distance between two reversed rankings). We now explain the setup of each experiment separately, along with a discussion of their results.

\begin{figure}[t]
    \centering
    \includegraphics[scale=0.43]{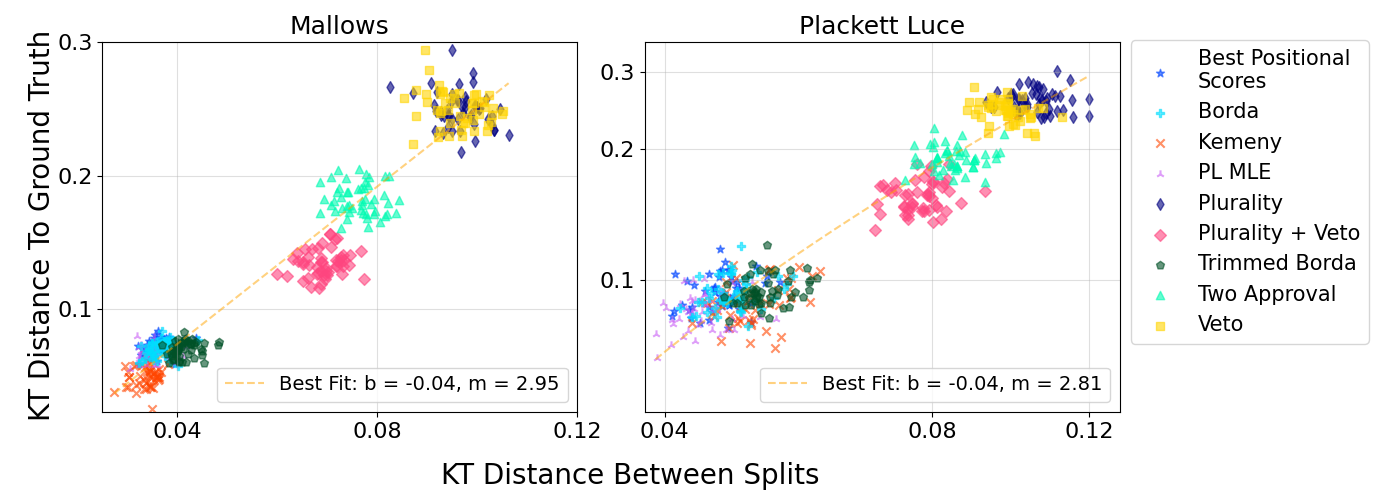}
    \caption{
    Log-log plots of $KT$ distance between the ground truth ranking and the ranking generated by SWFs on the complete profile vs.\ $KT$ distance between rankings generated by SWFs on splits of the profile. The profiles consist of partial rankings drawn from the Mallows and Plackett-Luce (PL) models. Each point shows average distance over 10 splits for a single profile. In both cases, the MLE of the model outperforms other SWFs in both axes. 
    }
    \label{fig:ground_truth_vs_central_ranking}
\end{figure}

\subsection{Ground truth distance vs.\ disagreement}  \label{sec:gt_v_disagreement}

While $\aba$ does \emph{not} assume a ground truth, what happens if a generative model with a ground truth is indeed a reasonable approximation of the data? In order to investigate this question, we apply $\aba$ to partial rankings synthetically drawn from two well-known distributions: Mallows~[\citeyear{Mallows57:Non}] and Plackett-Luce~\citep{Plackett75:Analysis,Luce59:Individual}. We use profiles of $100$ voters and $100$ items; each voter ranks $10$ items and each item is ranked $10$ times. Since each distribution has a known ground truth, for each SWF in the candidate rule set, we can measure the distance between this ground truth ranking and the ranking generated by the SWF when applied to the entire data set. We then calculate each SWF's disagreement over random splits, following the procedure in \Cref{def:aba}. 
Since $\aba$ picks the SWF that minimizes this disagreement, we would ideally like to observe that the rule that minimizes the distance between splits (most consistent rule) corresponds to the rule which minimizes the distance between its output on the full profile and the ground truth (the rule with the minimum error).

Indeed, this is the case: 
\autoref{fig:ground_truth_vs_central_ranking} highlights a clear positive relationship between distance to ground truth and disagreement between splits, confirming our intuition from MVUEs (\Cref{sec:highlevel}). Importantly, for both Mallows and Placket-Luce, the MLE of the model (Kemeny and PL MLE, respectively) outperforms the other SWFs in both axes. As each MLE has the lowest disagreement for its model, $\aba$ would indeed pick it when given data from this model, hence obtaining the benefits of the statistical approach.

\subsection{Scientific peer review (score data)}
\label{sec:score_data}

In some cases evaluators submit scores rather than rankings. As discussed in the extensions listed in \Cref{sec:aba}, $\aba$ is still applicable in this case. Rather than considering functions for aggregating sets of preference rankings, we can also use our method to consider functions for aggregating sets of review scores, in order to pick the one yielding the most consistent results.
We explore this setting on a dataset consisting of peer review scores given to proposals for observatory telescope time \citep{Kerzendorf20:Distributed}. In this data, 136 reviewers have provided feedback on between 4 and 8 proposals each. We filter data to consider the 119 proposals that received at least 6 scores (No proposal was reviewed by more than 8 reviewers).

Rather than splitting the set of reviewers to two sides as done in previous sections, we individually split the 6 to 8 scores received by each unique proposal. Each split is generated by randomly placing half of the scores into one side of the split, and the remaining half into the other side. Where there are an odd number of scores we select a random score to leave out of either side, resulting in each proposal having a split into two sides of equal size.
We then run each score aggregation function we consider (listed below) on each side of the split for every proposal, resulting two sets of aggregate scores. From these we generate two rankings over all proposals, one for each side of the split. We measure the Kendall-Tau distance between these two rankings. This process (creating splits, finding an aggregate score for each split, converting scores to a ranking, measuring Kendall-Tau distance) is repeated over 1000 randomized trials to generate 1000 distance scores, and we report the average score for each function. We consider the following aggregation functions: Arithmetic Mean, Min, Max, Median, Geometric Mean, and Trimmed Mean (which takes the arithmetic mean of scores after trimming the highest and lowest score from each split).

\begin{table*}[t]
\centering
\begin{tabular}{@{}ccccccc@{}}
\toprule
Arithmetic Mean              & Min               & Max               & Median            & Geometric Mean & Trimmed Mean \\ \midrule
$0.364 \pm 0.001$ & $0.444 \pm 0.001$ & $0.409 \pm 0.001$ & $0.371 \pm 0.001$ & $0.369 \pm 0.001$ & $0.371 \pm 0.001$
\\ \bottomrule
\end{tabular}
\caption{Mean (and standard error of the mean) of $KT$ distances over 1000 splits resulting from several aggregation functions on review scores from \citet{Kerzendorf20:Distributed}.}
\label{tab:score_data}
\end{table*}

Our results, shown in \autoref{tab:score_data}, can help guide best practices for aggregating such review scores. For example, past work has suggested that items with a single high score are often prioritized in peer review~\citep{Nierstrasz00:Identify} -- corresponding to rankings generated by the Max function. In contrast, our results demonstrate that aggregation using Max scores leads to high levels of disagreement compared to functions that directly incorporate the score of multiple reviewers, such as Arithmetic/Geometric Mean.

\subsection{Scientific peer review (rank data)}\label{sec:alma_rank}

\begin{figure}[t]
    \centering
    \begin{subfigure}[t]{0.49\textwidth}
        \centering
        \includegraphics[width=\textwidth]{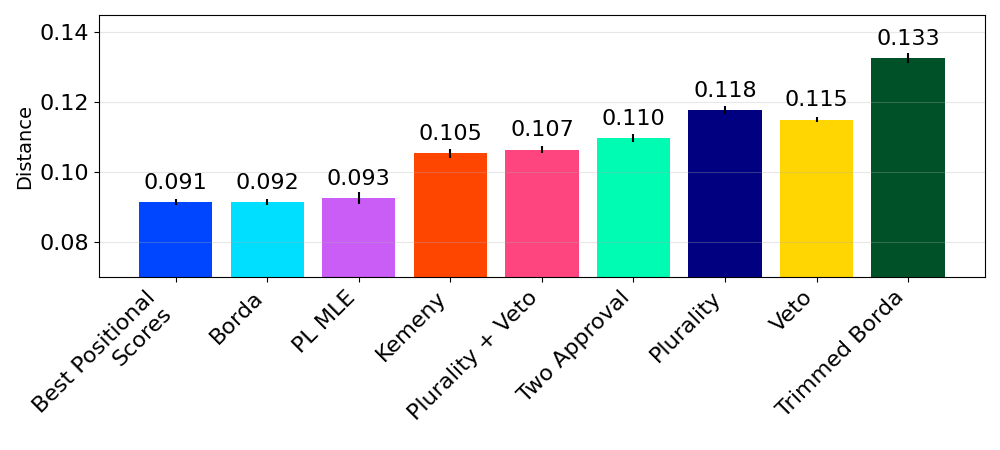}
        \caption{
        Distance between splits of partial rankings for each SWF on reviews of ALMA Cycle 10 project proposals.
        } 
        \label{fig:alma_cycle10}
    \end{subfigure}
    \hfill
    \begin{subfigure}[t]{0.49\textwidth}
        \centering
        \includegraphics[width=\textwidth]{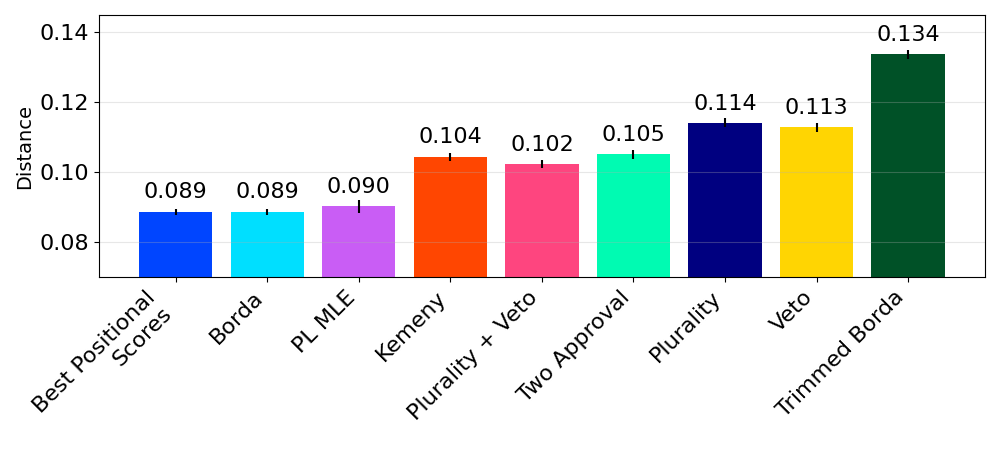}
        \caption{
        Distance between splits of partial rankings for each SWF on reviews of ALMA Cycle 11 project proposals.
        }
        \label{fig:alma_output}
    \end{subfigure}
    \caption{Split distance and standard error of several rules on partial rankings over project proposals for the Atacama Large Millimeter Array (ALMA).
    } 
    \label{fig:alma_bar_plots}
\end{figure}

Peer review data can also take the form of rankings. Here we consider anonymized rankings provided by evaluators of projects proposed for the Atacama Large Millimeter Array (ALMA) \citep{Meyer22:Analysis}, which is the largest ground-based radio telescope on earth. 

We conduct our experiments on two sets of  proposals, from Cycle 10 and Cycle 11. In this review process, each proposer is asked to review. Cycle 10 contains 1635 proposals and 1635 reviewers, while Cycle 11 contains 1729 proposals and 1729 reviewers. In each cycle, every reviewer is assigned 10 proposals to review, and every proposal is assigned to 10 reviewers. Each reviewer provides a strict ranking of the 10 proposals assigned to them. 

We run $\aba$ on the rankings provided by the reviewers. The mean Kendall-Tau distances over 10 splits for several SWFs are displayed in \Cref{fig:alma_bar_plots}. We see a pattern consistent with the results of \Cref{fig:ground_truth_vs_central_ranking}: PL MLE, Borda, and optimized positional scores result in very similar distances, suggesting a natural lower bound on the Kendall-Tau distance based on the data and the noise induced by generating splits. Kemeny rule,\footnote{Note that, due to computational limits, in this setting we use the best ranking found by the Kemeny function (\Cref{app:kemeny}) within a time limit of 15 minutes per split.} on the other hand, performs notably worse than this lower bound. Our experiment also finds that better consistency is achieved by positional scoring rules which provide more information about the profile (\textit{i.e.}, Two Approval and Plurality + Veto provide two ``bits'' of information, compared to one bit from Plurality and Veto, whereas Borda and the optimized positional scores provide significantly more information).

Importantly, our method can be used to evaluate proposed changes to current practices. For example, \citet{Meyer22:Analysis} consider removing the minimum \& maximum score of each alternative before computing the Borda score, termed `Trimmed Borda', 
but our experiment suggests this method increases disagreement on their dataset compared to Borda count, significantly more than it did on data generated from known noise models (\Cref{fig:ground_truth_vs_central_ranking}).

\subsection{Political elections} Similar to studying the impact of changes to peer-review practices, we can use our implementation of $\aba$ to evaluate the methods being actively used in real-world elections. Accordingly, \autoref{fig:empirical_scatter} considers running $\aba$ on several political elections using data from Preflib \citep{Mattei13:Preflib}. These elections took place in several municipalities across the United States between 2008 and 2012; all used Instant Runoff Voting (IRV) to determine a winner (see \Cref{sec:city-election} for the definition of IRV), which we implement as one of our candidate rules. These elections differ from our other data in that they represent human political preferences, and they have a highly varied number of voters (between 2477 and 262312) and candidates (between 4 and 25). See \autoref{tab:preflib_city_election_details} in \Cref{sec:appendix_experiments} for additional details on each election evaluated in this experiment. Unsurprisingly, we observe that elections with consistently low disagreement across voting methods (\emph{e.g.}, 3, 23) have few candidates, while elections in which no voting method achieves a low disagreement (\textit{e.g.}, 11, 19, 21) have many candidates. That is, there is likely to be at least some disagreement between rankings when ballots are quite long---simply because there are many more opportunities for rankings to differ. 

While IRV was used in practice during each of these elections it is, in some cases, the rule which gives the \emph{least} consistency (\emph{e.g.}, elections 1, 16, 18). In 21 of the 25 elections we show in \autoref{fig:empirical_scatter} there is some rule that achieves zero disagreement, however this rule is different between elections, confirming out intuition from \Cref{sec:intro} that different rules are appropriate for different settings. By always selecting the rule that results in the minimum disagreement, $\aba$ provides a reliable method of selecting a consistent ranking.

\subsection{Aggregating sporting contests}

\begin{figure}[t]
    \centering
    \begin{subfigure}[t]{0.49\textwidth}
        \centering
        \includegraphics[width=\textwidth]{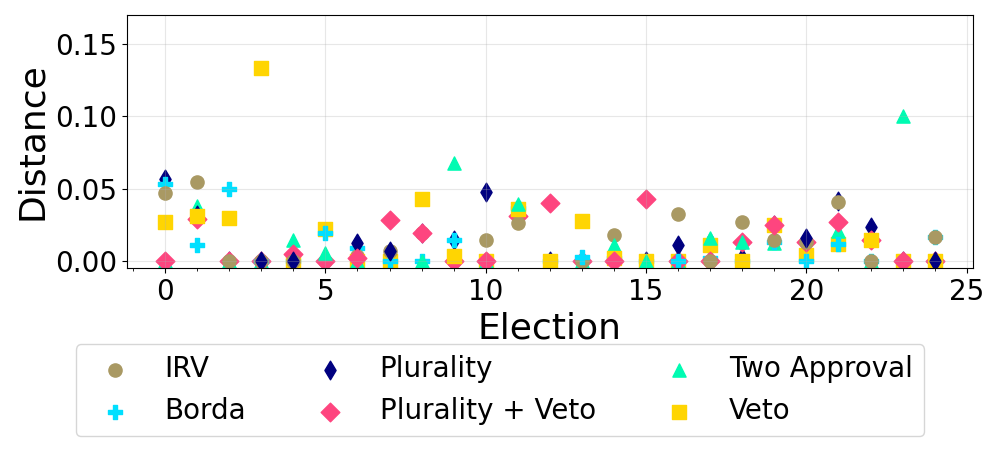}
        \caption{Distance between splits of rankings in empirical election data from cities using Instant Runoff Voting (IRV). Details on data in this plot are found in \autoref{tab:preflib_city_election_details}} 
        \label{fig:empirical_scatter}
    \end{subfigure}
    \hfill
    \begin{subfigure}[t]{0.49\textwidth}
        \centering
        \includegraphics[width=\textwidth]{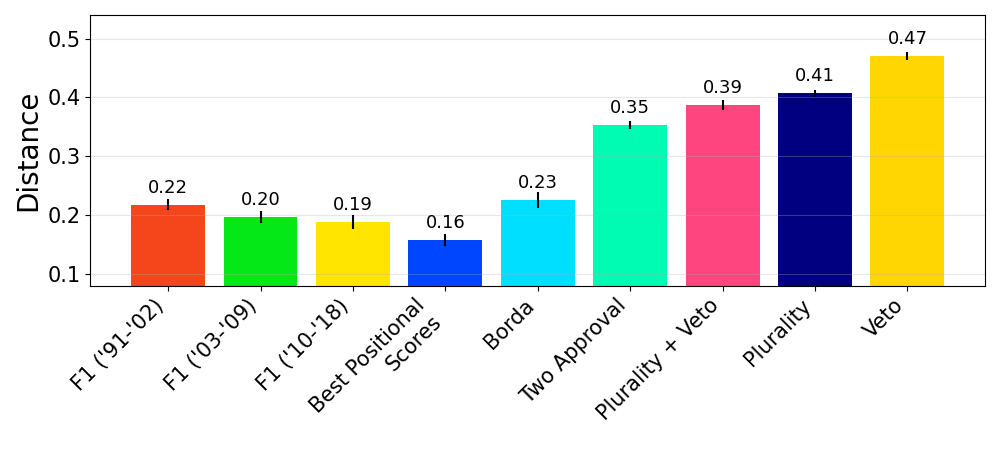}
        \caption{Distance between splits for SWFs aggregating rankings of drivers in F1 races.
        F1 vectors are evaluated only for the specific years in which that rule was in use.
        }
        \label{fig:f1_ave}
    \end{subfigure}
    \caption{Split distance with several rules on data from political elections and Formula One races.} 
    \label{fig:olympic_f1_splits}
\end{figure}

We also consider real-world settings in which participants are ranked across many events, such as sporting competitions. In these settings, each event assumes the role of a voter: the ordinal position of the competitors in each event translates directly into a ranking over competitors, which must then be aggregated into a single ranking across all events. We evaluate $\aba$ on two sporting settings: Formula One (F1) races in this section and data from the Olympics in \Cref{app:olympics}.

We use data showing F1 results for each season from 1991 to 2018, inclusive, compiled by \citet{boehmer2022quantitative}. Each season corresponds to a set of rankings (hence a profile), where each ranking records the order in which the participating drivers finished a given race. During this time period, F1 used three distinct positional scoring rules (updating the rule in 2003 and 2010) on the individual races to determine overall winners:

\begin{itemize}
    \item 1991 - 2002: $(10, 6, 4, 3, 2, 1, 0, 0, ...,0)$
    \item 2003 - 2009: $(10, 8, 6, 5, 4, 3, 2, 1, 0, 0, ...,0)$
    \item 2010 - 2018: $(25, 18, 15, 12, 10, 8, 6, 4, 2, 1, 0, 0, ...,0)$
\end{itemize}

From this data, we remove races and drivers until each race includes only drivers that competed in all events that year; we refer to the Preflib repository of \cite{Mattei13:Preflib} for details on this processing. This results in 27 years of data with a mean of 13.7 (mode of 16) races per year, and a mean of 20.6 (mode of 20) drivers per race. We then apply our standard $\aba$ methodology on this data: We generate splits for each year, apply several scoring rules to generate a ranking for each side of a split (including the three rules employed by F1), and measure the average distance between splits for each rule.

\Cref{fig:f1_ave} compares the average distances resulting from each rule. The three leftmost columns show the distances of existing F1 rules, applied \textit{only} to the years in which they were being employed. This shows that both changes to the rule being used in F1 competitions were beneficial in increasing consistency. 
The other rules shown in \Cref{fig:f1_ave} are evaluated over \textit{all} years of data. We see that recent F1 rules are significantly better than well-known positional scoring rules, including Borda count. However, our optimization of the scoring vector is able to provide some advantage over the employed rules (noting that our Best Positional Score vector is evaluated over all years while F1 rules are evaluated only on a subset of the data). These effects are demonstrated further when we evaluate all rules on each time period separately (see \Cref{app:formulaone}).

\section{Additional Related Work} 

In this section, we further elaborate on prior work conceptually related to our work. 

\paragraph{Rule selection} The question of comparing different voting rules has been widely studied in the social choice literature. For an overview of the axiomatic and computational properties of various voting rules, see~\citet{Brandt16:Handbook}. There has been earlier empirical work on collecting voters' preferences over voting rules themselves~\citep{Sertel03:Selecting,Kara05:Does,Aldrich14:Sophisticated}. However, it has been shown that (unsurprisingly) in many real-life cases voters simply prefer the rules that benefit their preferred candidate~\citep{Blais15:Citizens,Weber20:Choosing}. In general, it is clear that even social choice theorists themselves cannot reach a consensus on what is the best voting rule~\citep{Laslier12:And}. There has also been interest in using tools from automated reasoning for instantiating and generating formal justifications for using a voting rule~\citep{Cailloux16:Arguing} or imposing an axiom~\citep{Boixel20:Automated}. Similar tools were used for bypassing voting rules altogether by picking a set of axioms that impose a certain outcome~\citep{Schmidtlein23:Voting}. 

\paragraph{MLE  approach} The idea of treating votes as noisy estimates of a ground truth goes back to Condorcet~[\citeyear{CaritatMarquisdeCondorcet85:Essai}], who designed a noise model where every voter has a fixed probability of ranking each pair of alternatives correctly, and solved it for two and three alternatives. \citet{Young95:Optimal} later extended this model to an arbitrary number of alternatives, showing that its MLE is equivalent to an earlier rule introduced by \citet{Kemeny59:Mathematics}. \citet{Conitzer05:Common} later studied the question of which voting rules are MLEs for some noise model where votes are sampled independently, in particular showing that any rule that violates consistency cannot be an MLE. Combined with the result that a continuous and neutral social choice function (rules outputing a set of winners) is consistent if and only if it is a scoring rule \citep{Young75:Social}, this shows that no other (neutral and continuous) social choice function can be the MLE of any noise model. \citet{Conitzer09:Preference} did a similar analysis for social preference functions (rules outputting a set of rankings) providing an exact characterization of MLE rules as simple ranking scoring functions. Similar analyses followed for voting in multi-issue domains \citep{Xia10:Aggregating}, with partial rankings \citep{Xia11:Maximum}, and on social networks (where voters no longer vote independently) \citet{Conitzer13:Maximum}. In the opposite direction, \citet{Soufiani14:Statistical} study the MLE of two known noise models to identify whether they satisfy certain axiomatic properties, and \citet{Xia16:Bayesian} extend their results. \citet{Tideman14:Which}, on the other hand, construct a spatial noise model specifically to approximate data from actual elections, and evaluate the performance of common voting rules under this model. Lastly, a related but distinct approach to classify voting rules is distance-rationalizatability (DR) \citep{Meskanen08:Closeness,Elkind09:Distance}, which requires a rule to map each election to the result of the closest consensus according to some metric. \citet{Elkind10:Good} combine the MLE and DR approaches to better understand and compare rules, and in this framework declare Kemeny to be the best rule, as it fits both frameworks with the same underlying function. It is worth noting, however, that DR only applies to social choice functions. 

\paragraph{Statistical model dependence in estimation from pairwise comparisons} 
Another line of literature on model (in)dependence focuses on estimation from pairwise comparisons. Commonly used statistical models in this setting include the Bradley–Terry–Luce (BTL) ~\citep{bradley1952rank,Plackett75:Analysis} and Thurstone models ~\citep{thurstone1927method}, which all fall under the class of parameter-based models (also known as random utility models). However, recent studies ~\citep{shah2016stochastically,shah2018simple,Heckel19:Active} have demonstrated that estimators derived from more general ``permutation-based'' models offer two notable advantages. First, when data are generated from these broader permutation-based models, these estimators exhibit significantly better performance. Second, even when data originate from parameter-based models such as BTL or Thurstone, the guarantees provided by these general estimators are within logarithmic factors of those achieved by estimators specifically tailored to parameter-based models.

%SECTION: Conclusion

\section{Conclusions and Future Work}\label{sec:conclusion}
In this paper, we introduced the problem of finding good rule picking rules, and argued for its importance.  We also introduced a specific rule, $\aba$, and showed via theoretical analysis and experiments that $\aba$ is able to attain the benefits of both the axiomatic and statistical approaches. As demonstrated by our experiments in \Cref{sec:experiments}, $\aba$ provides a principled way for users to evaluate the impact of modifications to existing methods by running $\aba$ on their own data. In some cases, the answer is that it is an improvement: both changes to F1 rules significantly reduced disagreement (\Cref{fig:f1_ave}). In others, the proposed modification hurts consistency: \citet{Meyer22:Analysis} consider adding outlier rejection to Borda for peer review, but our experiment suggests this method (Trimmed Borda) increases disagreement on their data. By running similar experiments on their own datasets, implementers of our method can determine which aggregation rules, among the ones they are considering, provide the most consistent results for their own applications.

We now discuss several limitations of our approach, and also provide avenues of future work. One of the main limitations is that our approach treats voters' rankings as fixed, whereas in practice voters may choose their rankings based on the RPR being employed. It is of interest to understand RPRs---and $\aba$---more generally under strategic behavior of voters. Such behavior makes it harder to study the impact of a proposed rule change on consistency by running $\aba$ on prior data. 

A separate limitation is that our sampling-based implementation \Cref{sec:experiments} only approximates $\aba$. However, in addition to its desirable behavior on known distributions, our implementation exhibits strong axiomatic properties: an experiment we provide in \Cref{app:axiom_violation} shows that even the two axioms failed by the exact $\aba$ (preserving monotonicity and union consistency) are rarely violated in practice by our implementation, on both simulated and real-world data. Still, it is worth exploring theoretical bounds on how likely our sampling-based implementation is to output the true $\aba$ winner from \Cref{def:aba}. In our experiments, we see that even a small number of splits works well in practice, with additional splits leading to qualitatively identical results. Since the disagreement scores of a given rule for several splits of a given profile are i.i.d., bounding the variance in terms of intuitive properties for the profile (capturing, for example, how close it is to being ``borderline'') can allow using standard tail bounds for bounding the required number of splits. We believe identifying the correct properties of a profile for such a statistical analysis is a worthwhile direction.

Another future avenue is to leverage the generality of \Cref{alg:aba} to study $\aba$ on different output formats (other than rankings) and different distance functions (other than Kendall-Tau distance). For example, in practice, ALMA rankings from \Cref{sec:alma_rank} are used for determining $k$ proposals to be funded for some fixed $k<|\cand|$. Thus, running $\aba$ to pick among multi-winner rules (that return a subset of $k$ alternatives) by using a distance function---such as Jaccard dissimilarity---that compare such subsets may be more appropriate for such data. In \Cref{app:jaccard} we find that experiments using Jaccard dissimilarity provide qualitatively similar results to those in \Cref{sec:alma_rank} when used on the same data. Further, randomized decisions (methods that output a distribution over subsets of $k$ winners) based on the peer-review process are gaining popularity~\citep{heyard2022rethinking, goldberg2025principled}, adding to the richness of this setting. 

Similar questions exist for single-winner elections: while the 0-1 loss (is the same candidate picked on both sides of a split?) emerges as a natural option for measuring disagreement, one can also use other notions of distance between the alternatives chosen on the two sides of a split. For example, \citet[Prop.\ 26]{Berker25:Independence} introduce the ``clone distance'' between any pair of candidates, which measures the dissimilarity of the candidates from the perspective of the voters. Crucially, measuring this distance requires no external information about the candidates, only their positions in the profile. Another alternative would be to repeatedly apply the single-winner rule to the profile, removing the winner after each step. By comparing the resulting rankings from the two sides using another weighted variant of the Kendall-Tau distance (similar to \citet{Kumar10:Generalized}) the implementer can then balance between the two extremes (0-1 loss vs.\ full ranking) by deciding how much weight to put on the first position of the ranking relative to the others. Overall, studying the tradeoff between various distance functions is important for future applications of $\aba$.

Lastly, while we developed and studied several axioms, future work can investigate RPRs under other axioms, either translated from social choice axioms or novel axioms for rule picking rules. Such axioms may inform the design of novel RPRs that focuses on properties other than consistency. Overall, RPRs (and $\aba$) open the door to principled, data-driven rule selection, paving the way for diverse real-world applications.

\section*{Acknowledgements}
We thank John Carpenter and Andrea Corvillon from the Atacama Large Millimeter/submillimeter Array (ALMA), as well as Kate Larson and Ariel Procaccia for helpful discussions. We also thank Sinan Karaböcüoğlu and Boğaçhan Arslan for implementing our method for the \emph{2nd Computational Social Choice Competition} at the 33rd International Joint Conference on Artificial Intelligence (IJCAI 2024). R.E.B.\ and V.C.\ thank the Cooperative AI Foundation, Macroscopic Ventures (formerly Polaris Ventures
/ the Center for Emerging Risk Research), and Jaan Tallinn's donor-advised fund at Founders Pledge for financial support. R.E.B.\ is also supported by the Cooperative AI PhD Fellowship. The work of N.S.\ was supported by NSF CAREER award 1942124 and ONR N000142512346.
\bibliography{dairefs}

\appendix

\section{Omitted Proofs}\label{app:proofs}

In this section, we give the proofs that were omitted in the main body of the paper.

\subsection{Proof of \Cref{thm:axiombundle}}\label{app:consistency}
We first recall our main result from \Cref{sec:consistency}, regarding the consistency axioms for RPRs. 
\axiombundle*

We prove each claim in the theorem as a seperate proposition. 
\begin{restatable}{prop}{reverse}\label{prop:reverse}
    $\aba$ satisfies reversal symmetry. 
\end{restatable}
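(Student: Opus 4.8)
The plan is to reduce reversal symmetry to two elementary facts: a commutation identity between reversing a scoring rule and reversing a profile, and the invariance of the Kendall--Tau distance under simultaneous reversal of both its arguments. Throughout, write $\Phi_\profile(f) \eqdef \expc\left[KT\left(f(\profile^{(1)}),f(\profile^{(2)})\right)\right]$ for the $\aba$ objective, so that $\aba(F,\profile)=\argmin_{f\in F}\Phi_\profile(f)$, and fix a subset $F\subseteq F_S$ with $\rev(F)=F$.

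First I would establish the key lemma: for every positional scoring rule $f\in F_S$ and every profile $\profile$ we have $f(\rev(\profile))=\rev(\rev(f)(\profile))$. Reversing every voter's ballot sends an alternative ranked in position $j$ to position $\ncand+1-j$, so $\cmatrix_{\rev(\profile)}[a,j]=\cmatrix_{\profile}[a,\ncand+1-j]$. Writing $f=f_s$ and $\rev(f)=f_{s'}$ with $s'_j=1-s_{\ncand+1-j}$, an index substitution gives $t^{s'}_{\profile}[a]=n-t^{s}_{\rev(\profile)}[a]$, where $n=|\voters|$ and we use that every alternative is ranked by every voter, so $\sum_{j} \cmatrix_{\profile}[a,j]=n$. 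Hence $a$ outscores $b$ under $s'$ on $\profile$ exactly when $b$ outscores $a$ under $s$ on $\rev(\profile)$, which is precisely the claimed identity; the empty-subprofile convention of Footnote~\ref{fn:emptysplit} is consistent here, since the empty ranking is its own reverse.

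Next I would record that $KT(\rev(r_1),\rev(r_2))=KT(r_1,r_2)$ for all weak rankings $r_1,r_2$: reversing both rankings swaps ``above'' and ``below'' in both arguments simultaneously, leaving each disagreement indicator $D^{a,b}_{r_1,r_2}$ and each tie indicator $T^{a,b}_{r_1,r_2}$ unchanged, so the sum defining $KT$ in \eqref{eq:kt} is preserved term by term. With these two facts the proof proceeds by coupling the random splits: I would couple the split of $\rev(\profile)$ to that of $\profile$ using the same per-voter coin flips, and since reversing ballots does not change which voter lands in which group, $(\rev(\profile))^{(j)}=\rev(\profile^{(j)})$ for $j\in\onetwo$. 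Then for any $f\in F$, applying the key lemma and then the reversal invariance of $KT$ gives, split by split, $KT(f((\rev\profile)^{(1)}),f((\rev\profile)^{(2)}))=KT(\rev(f)(\profile^{(1)}),\rev(f)(\profile^{(2)}))$; taking expectations yields $\Phi_{\rev(\profile)}(f)=\Phi_{\profile}(\rev(f))$. Finally, since $\rev$ is an involution on $F$ (using $\rev(F)=F$), minimizing $\Phi_\profile(\rev(f))$ over $f\in F$ selects exactly the $\rev$-preimages of the minimizers of $\Phi_\profile$, giving $\aba(F,\rev(\profile))=\rev(\aba(F,\profile))$, which is reversal symmetry.

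I expect the main obstacle to be the bookkeeping in the key lemma --- getting the index reversal $j\mapsto \ncand+1-j$ and the affine relation $t^{s'}_\profile[a]=n-t^{s}_{\rev(\profile)}[a]$ exactly right, and confirming that the resulting score comparison flips strict order while preserving ties, so that both reversed strict orders and reversed ties come out correctly. Everything after that is a formal manipulation of $\argmin$ under the involution $\rev$, together with the coupling of the two split processes.
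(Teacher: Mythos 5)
Your proposal is correct and follows essentially the same route as the paper's proof: the score identity $t^{s'}_{\profile}[a]=n-t^{s}_{\rev(\profile)}[a]$ yielding the commutation of $\rev$ with scoring rules, the invariance of $KT$ under joint reversal, and the concluding $\argmin$ argument under the involution $\rev$ on $F$. Your explicit coupling of the two split processes and the check of the empty-subprofile convention are details the paper leaves implicit, but the argument is the same.
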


\begin{proof}
    Given any positional scoring rule $f_s \in F$ and a profile $\profile$, say $f_{s'}$ is the reverse rule of $f_s$, \emph{i.e.}, $s'=(s'_1,s'_2, \ldots, s'_{\ncand})=(1-s_\ncand,1-s_{\ncand-1}, \ldots, 1-s_{1})$. The total scores assigned to any alternative $a \in \cand$ by $f_{s'}$ when run of $\rev(\profile)$ (see \Cref{def:pos}) is 
    \begin{align*}
        t_{\rev(\profile)}^{s'}[a]= \sum_{j=1}^{\ncand} s'_j M_{\rev(\profile)}[a,j]= \sum_{j=1}^{\ncand} (1-s_{\ncand+1-j}) M_{\profile}[a,\ncand+1-j]=\sum_{j=1}^{\ncand} (1-s_{j}) M_{\profile}[a,j]=\nvoters-t^s_{\profile}[a],
    \end{align*}
    implying that $f_s(\profile) =\rev ( f_{s'}(\rev(\profile)))=\rev ( \rev(f_s)(\rev(\profile)))$. Additionally, we note that the Kendall-Tau distance (\Cref{def:kt}) is symmetric with respect to reversals, \emph{i.e.}, $KT(r_1,r_2) = KT (\rev(r_1),\rev(r_2))$. Thus, 
    \begin{align*}
        KT(f_s(\profile^{(1)}),f_s(\profile^{(2)})) 
        &=KT(\rev ( \rev(f_s)(\rev(\profile^{(1)}))),\rev ( \rev(f_s)(\rev(\profile^{(2)}))))\\&= KT( \rev(f_s)(\rev(\profile^{(1)})), \rev(f_s)(\rev(\profile^{(2)}))) .
    \end{align*}
This implies that if $f_s$ minimizes \eqref{eq:aba} in \Cref{def:aba} over random splits of $\profile$, then $\rev(f_s)$ minimizes it over random splits of $\rev(\profile)$, and therefore $\aba(\rules,\profile) = \rev( \aba(\rules, \rev(\profile)))$.
\end{proof}

\begin{restatable}{prop}{welfare}\label{prop:welfare-fails}
    Any welfare-maximizing RPR $\rpr$ fails plurality-shuffling consistency.
\end{restatable}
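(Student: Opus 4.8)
The plan is to refute this universal axiom by exhibiting a single valid instance on which any welfare-maximizing $\rpr$ misbehaves. I would fix $\ncand \geq 3$, take the candidate set $F = \{f_p, f_b\}$ (plurality and Borda, so that $f_p \in F$ and $|F| \geq 2$), and choose any profile $\profile$ whose plurality scores $n_a \coloneqq \cmatrix_{\profile}[a,1]$ are pairwise distinct, which guarantees that $f_p(\profile)$ has no ties (so the instance is admissible for plurality-shuffling consistency). The crux is a \emph{profile-level} fact that has nothing to do with the utility function $u$: on the shuffled profile $\shuffle{[2,\ncand]}{k}$, \emph{every} positional scoring rule outputs the same strict ranking, namely the plurality order $r_p \coloneqq f_p(\profile)$. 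Granting this, the contradiction for welfare-maximizing RPRs follows immediately, so the real work lives in the count-matrix computation that establishes it.

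To prove the claim, I would compute the count matrix of $\shuffle{[2,\ncand]}{k}$ directly. Position $1$ is never touched by a shuffle of $[2,\ncand]$, so $\cmatrix_{\shuffle{[2,\ncand]}{k}}[a,1] = k\,\ncand!\, n_a$. For each lower position $j \in \{2,\dots,\ncand\}$, the defining uniform permutation of positions $2,\dots,\ncand$ places each non-top alternative into each lower slot equally often; a short symmetry count gives $\cmatrix_{\shuffle{[2,\ncand]}{k}}[a,j] = k\,\ncand\,(\ncand-2)!\,(\nvoters - n_a)$, independent of $j$. Substituting into the score formula of \Cref{def:pos} and factoring out the common positive constant, for any scoring vector $s$ with $C \coloneqq \sum_{j=2}^{\ncand} s_j$ the pairwise comparison collapses to $t^{s}_{\shuffle{[2,\ncand]}{k}}[a] > t^{s}_{\shuffle{[2,\ncand]}{k}}[b] \iff (n_a - n_b)(\ncand - 1 - C) > 0$. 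The key (and only delicate) step is the strict inequality $\ncand - 1 - C > 0$: since $s_\ncand = 0$ and each $s_j \le 1$, we have $C \le \ncand - 2$, so the factor $\ncand - 1 - C$ is strictly positive for every admissible $s$. Hence the ordering is governed \emph{solely} by the sign of $n_a - n_b$, so every scoring rule reproduces the plurality order; as the $n_a$ are distinct, this order is exactly the tie-free ranking $r_p$.

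It then remains to derive the contradiction. Let $u$ be the utility function witnessing that $\rpr$ is welfare-maximizing. By the claim, every $f \in F \subseteq F_S$ produces the \emph{identical} ranking $r_p$ on $\shuffle{[2,\ncand]}{k}$, so the welfare objective $\sum_{i} u(\sigma_i, f(\shuffle{[2,\ncand]}{k}))$ takes the same value for all $f \in F$, independently of $u$ and $k$. Consequently the $\argmax$ is the whole set, i.e.\ $\rpr(F, \shuffle{[2,\ncand]}{k}) = F = \{f_p, f_b\} \neq \{f_p\}$ for every $k \in \mathbb{Z}_+$, which is even stronger than what is needed and directly violates plurality-shuffling consistency. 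I expect the main obstacle to be the count-matrix computation and, within it, pinning down the strict positivity of $\ncand - 1 - C$; the conceptual heart is that once no scoring rule can break the symmetry among the shuffled lower positions, a welfare-maximizer is structurally unable to single out plurality, since it cannot distinguish rules whose outputs coincide.
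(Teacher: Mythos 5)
Your proof is correct and follows essentially the same route as the paper's: compute the count matrix of $\shuffle{[2,\ncand]}{k}$, show every admissible positional scoring rule's pairwise comparison reduces to the sign of $(n_a-n_b)$ times a strictly positive factor (the paper's $D>0$ is your $\ncand-1-C>0$), conclude all candidate rules output the identical ranking, and hence any welfare-maximizing RPR returns the entire set $F$ rather than $\{f_p\}$. The only cosmetic difference is that you instantiate a specific $F=\{f_p,f_b\}$ while the paper derives $Z(F,\cdot)=F$ for every $F\subseteq F_S$; both suffice since the axiom is violated as soon as $|F|\ge 2$.
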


\begin{proof}
Take any $f_s \in F_S$. We will show that for any $k \in \mathbb{Z}^+$, we have $f_s(\shuffle{[2,\ncand]}{k})=f_p(\profile)$, where $f_p$ is plurality. Recall from \Cref{def:pos} we have $s_1=1$ and $s_\ncand=0$ WLOG. By \Cref{def:pos,def:shuffle}, the total score assigned to a given alternative $a \in \cand$ by $f_s$ on input $\shuffle{[2,\ncand]}{k}$ (say $t_k[a]$) is
\begin{align*}
    t_k[a]&= M_{\profile}[a,1]\cdot k\cdot \ncand !s_1 + \sum_{i=2}^{\ncand} M_{\profile}[a,i]\cdot  \sum_{j=2}^{\ncand}  \frac{k \cdot \ncand!}{(\ncand-1)}s_j \\&=M_{\profile}[a,1]\cdot k\cdot \ncand ! +\frac{k \cdot \ncand!}{(\ncand-1)} \left( \sum_{j=2}^{\ncand-1} s_j\right) (\nvoters-M_{\profile}[a,1])\\
    &= \frac{k \cdot \ncand!}{(\ncand-1)} \left( \sum_{j=2}^{\ncand-1} s_j\right) \nvoters +M_{\profile}[a,1] \cdot  k \cdot \ncand ! \left(1-\frac{\sum_{j=2}^{\ncand-1} s_j}{\ncand-1}\right) \\
    & \equiv C + D \cdot M_{\profile}[a,1],
\end{align*}
where $C$ and $D$ does not depend on $a$. Since $1 =s_1 \geq s_2 \geq \ldots \geq s_\ncand = 0$, we have $\sum_{j=2}^{\ncand-1} s_j < m-1$ and therefore $D>0$. This implies that for any $a,b \in \cand$ we have
\begin{align*}
    a \succ_{f_s(\shuffle{[\ncand-1]}{k})} b &\Leftrightarrow t_k[a] > t_k[b] \\&\Leftrightarrow M_{\profile}[a,1] > M_{\profile}[b,1] \\&\Leftrightarrow a \succ_{f_p(\profile)} b.
\end{align*}
    Since $f_s$ was arbitrarily chosen, this implies all rules in $F_S$ return the same output on $\shuffle{[\ncand-1]}{k}$. Hence, for any welfare-maximizing RPR $Z$ and any $\rules \subseteq \rules_S$, we have $Z(F,\shuffle{[\ncand-1]}{k})=F$, showing that the rule fails \Cref{def:psc}. 
\end{proof}
Since all $f_s \in F_S$ produce the same output on $\shuffle{[\ncand-1]}{k}$, one might wonder whether it matters which rule an RPR returns. However, in practice an RPR might be employed to ``lock in'' a rule for future use, allowing us to use it in future profiles coming from the same source without re-running the RPR at each round, potentially increasing interpretability. Hence, with a plurality-shuffling-consistent rule, we would be able to keep on using $f_p$ in future profiles that are approximately (but not exactly) shuffled, as future repetitions of the same process are likely to once again have their signal concentrated in the top position. Moreover, identifying that no $f_s \in F_S$ provides any information beyond $f_p$ has a computational advantage: while the output of $f_p$ can be computed in $O(n+m)$ time, an arbitrary $f_s \in F_S$ might require $O(mn)$ time.

\begin{restatable}{prop}{pscprop}\label{prop:psc}
    $\aba$ satisfies plurality-shuffling consistency.
\end{restatable}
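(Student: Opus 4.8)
The plan is to leverage the fact, already established in the proof of \Cref{prop:welfare-fails}, that on the \emph{full} shuffled profile $\shuffle{[2,\ncand]}{k}$ every positional scoring rule $f_s \in F_S$ produces the same output, namely the plurality ranking $f_p(\profile)$, which is tie-free by hypothesis. The content of the proposition is therefore not about the full profile but about which rule is most robust to a \emph{random split} of it: I would show that plurality's expected disagreement across a split decays to zero strictly faster, in $k$, than that of any other rule in the finite set $F$, so that for all large $k$ plurality becomes the unique minimizer of \eqref{eq:aba}.

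First I would reduce the per-pair disagreement to a signal-versus-noise comparison. Fix a rule $f_s$ and a pair $\{a,b\}$. For a random split, assign each copy an indicator $\epsilon \in \{\pm 1\}$ for its side; then $f_s$'s score of $a$ on side $1$ equals $\tfrac12(t_k[a]+\delta_a)$ with $\delta_a = \sum_{\text{copies}} \epsilon\, s_{\mathrm{rank}(a)}$ a centered sum, and on side $2$ equals $\tfrac12(t_k[a]-\delta_a)$. Hence $f_s$ orders $a,b$ oppositely on the two sides exactly when $|\Xi|>G$, and ties them on one side when $|\Xi|=G$, where $G=G_s^{ab}=t_k[a]-t_k[b]$ is the deterministic gap and $\Xi=\Xi_s^{ab}=\sum_{\text{copies}}\epsilon\,(s_{\mathrm{rank}(a)}-s_{\mathrm{rank}(b)})$. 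By \Cref{prop:welfare-fails}, $G=D\,(M_{\profile}[a,1]-M_{\profile}[b,1])$ with $D>0$, so $G>0$ for every pair precisely because $f_p(\profile)$ is tie-free; the expected contribution of the pair to $KT$ is thus $\Pr[|\Xi|>G]+\tfrac12\Pr[|\Xi|=G]$.

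Next I would compute $G$ and $\mathrm{Var}(\Xi)$ explicitly from the shuffle structure (each non-top alternative is spread uniformly over positions $2,\dots,\ncand$ across a voter's copies). Plurality corresponds to $s=(1,0,\dots,0)$, where $\Xi$ is a plain $\pm1$ sum over only the $a$-top and $b$-top copies; any $f_s\neq f_p$ shrinks the gap (as $\sum_{j\ge 2}s_j>0$ forces $D_s<D_p$) and reshapes the noise. The algebraic heart of the argument is to show that the signal-to-noise ratio $Q_s^{ab}=G^2/\mathrm{Var}(\Xi)$ is strictly largest for plurality on every pair; for $\ncand=3$ this reduces to the elementary inequality $\tfrac14<\tfrac12+\rho$ with $\rho=M_{\profile}[c,1]/(M_{\profile}[a,1]+M_{\profile}[b,1])\ge 0$, and I expect the same computation to yield a clean inequality for general $\ncand$.

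The final step—and the main obstacle—is to upgrade this strict ratio ordering to a strict ordering of exponential decay rates. Since $G=\Theta(k\cdot\ncand!)$ and $\mathrm{Var}(\Xi)=\Theta(k\cdot\ncand!)$, each $\Pr[|\Xi|>G]$ decays exponentially, and $\expc[KT]$ is (up to the doubly-exponentially negligible empty-split event of Footnote~\ref{fn:emptysplit}) a finite sum of such terms. Because $\Xi$ is a bounded-weight Rademacher sum whose threshold $G$ sits at the large-deviation scale (the per-copy threshold is $\Theta(1)$, not $o(1)$), the second-moment ratio $Q_s^{ab}$ alone does not determine the rate; I would instead invoke Cramér's theorem to write each rate as a rate function and argue it is strictly maximized by plurality on every pair. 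Granting a strict per-pair rate advantage for $f_p$, the total disagreement of $f_p$ decays strictly faster than that of any fixed $f_s\neq f_p$ (the slowest-decaying pair dominates the sum), and since $F$ is finite a single threshold $k$ serves simultaneously against all competitors. This yields $\aba(F,\shuffle{[2,\ncand]}{k'})=\{f_p\}$ for all $k'\ge k$, as required.
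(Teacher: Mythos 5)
Your setup is sound and matches the paper's in substance: you correctly reduce the per-pair contribution to a tail event for a centered fluctuation $\Xi$ against a deterministic gap $G$ (the paper's version of this is the identity $X_1^{f,k}+X_2^{f,k}=\lambda^f k(p_a-p_b)\ncand!/(\ncand-1)$, which turns disagreement into $\Pr[X_1<0]$ and ties into $\Pr[X_1=0]$), and you correctly diagnose that second moments are useless here because the threshold sits at the large-deviation scale. But the proof has a genuine gap exactly where you write ``Granting a strict per-pair rate advantage for $f_p$'': that strict rate comparison \emph{is} the theorem, and it is not obvious from Cram\'er's theorem alone. For a general $f_s\neq f_p$ both the gap $G$ and the fluctuation $\Xi$ change simultaneously --- $G$ shrinks by the factor $\lambda^f=(\ncand-1)-\sum_{i\ge2}s_i^f$, while $\Xi$ becomes a weighted sum of independent centered binomials with weights $(s_i^f-s_j^f)$ on the shuffled-position pairs and $(1-s_i^f)$ on the top-versus-shuffled pairs --- and one must actually verify that the resulting rate function evaluated at the new threshold is strictly smaller than plurality's. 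The paper does this with a concrete construction: writing $X_1^{f,k}=Y+\sum_{i=2}^{\ncand}(1-s_i^f)Z_i$, it lower-bounds $\Pr[X_1^{f,k}\le 0]$ by the event $\{Z_2\le \varepsilon kN_p,\ Z_{\ncand}\le-\varepsilon kN_p(1-s_2^f),\ Z_i\le 0\}$ and shows via the binomial entropy bounds that the exponent gain $h(r_p+\varepsilon)+h(r_p-\varepsilon(1-s_2^f))-2h(r_p)$ is strictly positive for small $\varepsilon$ precisely because $s_2^f>0$ makes the compensating downward deviation cheaper than the upward one. Without this (or an equivalent explicit comparison of rate functions), the argument does not close; your $m=3$ signal-to-noise computation does not substitute for it, as you yourself note.

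Two smaller points. First, the pair involving a candidate $c$ with $M_{\profile}[c,1]=0$ cannot give a \emph{strict} advantage to $f_p$ (there plurality's disagreement probability is exactly zero and only the tie term, of order $2^{-kp_a\ncand!}$, survives, which a competitor can match); the paper handles this by proving only a weak inequality for that single exceptional pair and relying on $\ncand\ge3$ to guarantee a strictly dominated pair exists. Second, your ``slowest-decaying pair dominates'' step is fine once per-pair strict rate domination is in hand, but the paper's per-pair bounds $L(k)>U(k)$ for $k\ge k_f^{a,b}$ make the summation over pairs and the $\max$ over the finitely many $f\in F$ immediate, without needing to reason about rates of sums at all.
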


\begin{proof}
    Fix some finite $\rules \subset \rules_S$. If $|\rules|=1$ we are done. Otherwise, given any profile $\profile$ (that satisfies the condition in \Cref{def:psc}, \emph{i.e.}, $f_p(\profile)$ contains no ties) with $\nvoters=n$ voters an $\ncand=m \geq 3$ alternatives,\footnote{If $m=2$, there is a single (monotonic) positional scoring rule satisfying \Cref{def:pos}: the one with score vector $(1,0)$, which is also equivalent to $f_p$.} define $\profile_k = \shuffle{[2,m]}{k}$ for all $k$. Let $M$ be the matrix such that for each alternative $a \in  \cand$ and index $i \in [m]$, $M[a,i]$ indicates the number of voters in $\profile$ that rank $a$ in the $i^{\text{th}}$ position, and let $M_k$ be the analogous matrix for $\profile_k$.  

    For any $f \in F$ and any two candidates $a, b \in \cand$, say $\dis{f,k}{a,b}$ is the indicator random variable that the two output rankings resulting from applying $f$ to the two sides of a random split of $\profile_k$ into $\profile^{(1)}_k$ and  $\profile^{(2)}_k$ (where each voter is independently and uniformly placed in one of the two sets, as defined in \Cref{def:aba}) disagree about $a$ and $b$'s position; \emph{i.e.},

    $$\dis{f,k}{a,b}= \mathbb{I}[(a \succ_{f(\profile^{(1)}_k)}b\text{ and }b \succ_{f(\profile^{(2)}_k)} a)\text{ or }(b \succ_{f(\profile^{(1)}_k)}a\text{ and }a \succ_{f(\profile^{(2)}_k)} b)].$$
    Similarly, say $\tie{f,k}{a,b}$ is the indicator random variable that $f$ ties $a$ and $b$ on one of the two profiles; \emph{i.e.},
    $$T_{r_1,r_2}^{a,b} = \mathbb{I}[(a \succeq_{f(\profile^{(1)}_k)}b\text{ and }b \succeq_{f(\profile^{(1)}_k)} a)\text{ or }(a \succeq_{f(\profile^{(2)}_k)}b\text{ and }b \succeq_{f(\profile^{(2)}_k)} a)].$$
    Then, using linearity of expectation and the fundamental bridge, we get
    \begin{align}
        \expc \left[KT(f(\profile_k^{(1)}),f(\profile^{(2)}_k))\right] =  \expc \left[\sum_{a,b \in \cand} \dis{f,k}{a,b} + \frac{1}{2} \tie{f,k}{a,b} \right]= \sum_{a,b \in \cand} \Pr[\dis{f,k}{a,b}=1]+ \frac{1}{2}\Pr[\tie{f,k}{a,b}=1].\label{eq:sum_dis}
    \end{align}    
    The rest of the proof will follow from the next lemma.
    \begin{lemma}\label{lemma:shuffle}
        For any $f \in F \setminus \{f_p\}$ there exists a $k_f \in \mathbb{Z}^+$ such that
        \begin{enumerate}
            \item for all $k \geq k_f$ we have $\Pr[\dis{f,k}{a,b}=1]+ \frac{1}{2}\Pr[\tie{f,k}{a,b}=1] > \Pr[\dis{f_p,k}{a,b}=1]+ \frac{1}{2}\Pr[\tie{f_p,k}{a,b}=1]$ for all $a,b \in \cand$ such that $M[a,1]>0$ and $M[b,1]>0$;
            \item If there exists $c \in \cand$ such that $M[c,1]=0$, then $\Pr[\dis{f,k}{a,c}=1]+ \frac{1}{2}\Pr\left[\tie{f,k}{a,c}=1\right] \geq \Pr[\dis{f_p,k}{a,c}=1]+ \frac{1}{2}\Pr[\tie{f_p,k}{a,c}=1]$ for all $a \in \cand$ and $k \in \mathbb{Z}^+$.
        \end{enumerate}
    \end{lemma}

    Since $f_p(\profile)$ contains no ties, there can be at most one $c \in \cand$ with $M[c,1]=0$. Since $m \geq 3$, the lemma implies that  $ \sum_{a,b \in \cand} \Pr[\dis{f,k}{a,b}=1]+ \frac{1}{2}\Pr[\tie{f,k}{a,b}=1]> \sum_{a,b \in \cand} \Pr[\dis{f_p,k}{a,b}=1]+ \frac{1}{2}\Pr[\tie{f_p,k}{a,b}=1]$ for all $k \geq k_f$. By definition of $\aba$, this implies that for all $k \geq k_f$ we have $f \notin \aba(\rules, \shuffle{[m-1]}{k})$, since (\ref{eq:sum_dis}) is strictly greater for $f$ than for $f_p$. Since $|F|$ is finite, picking $k  = \max_{f \in F \setminus\{f_p\}}k_f$ will ensure $\aba(\rules, \shuffle{[m-1]}{k'}=\{f_p\}$ for all $k' \geq k$, as desired. We will now prove the lemma.
    \begin{proof}[Proof of \Cref{lemma:shuffle}]
        For any $f \in F$, say $\{s^f_i\}_{i \in [m]}$ is the scoring vector associated with $f$. Fix any two candidates $a,b \in \cand$. Say $M[a,1]=p_a$ and $M[b,1]=p_b$, with $p_a>p_b$ WLOG.\footnote{We cannot have $p_a=p_b$ as by assumption $f_p(\profile)$ contains no ties.} By \Cref{def:shuffle}, we have $M_k[a, 1] = kp_am !$ and $M_k[b, 1] = kp_bm ! $, whereas $M_k[a, i] = \frac{k(n-p_a)m !}{m-1}$ and $M_k[b, i] = \frac{k(n-p_b)m !}{m-1}$ for all $1< i\leq m$. Similarly, let $M_k[(a,i);(b,j)]$ be the number of voters in $\profile_k$ that rank $a$ in the $i$th position \emph{and} $b$ in the $j$th position (with $i \neq j$). Then
        \begin{align*}
            M_k[(a,i);(b,j)] = \begin{cases}
               k p_a C_1&\text{if }i=1\\
               k p_b C_1&\text{if }j=1\\
               k C_2  &\text{otherwise}
            \end{cases},
        \end{align*}
        where $C_1 \equiv \frac{m!}{m-1}$ and $C_2 \equiv \frac{(n-p_a-p_b)m !}{(m-1)(m-2)} $. Given a random split of $\profile_k$ into $\profile_k^{(1)}$ and $\profile_k^{(2)}$, say $X^{f,k}_i$ is the random variable indicating the total score of $a$ minus the total score of $b$ in $\profile^{(i)}_k$ according to $f$ for each $i \in \onetwo$. Naturally, regardless of the split, $X^{f,k}_1+X^{f,k}_2$ must add up to the total score of $a$ minus the total score of $b$ in $\profile_k$ according to $f$, so
        \begin{align*}
            X^{f,k}_1 +X^{f,k}_2 &= \sum_{i=1}^m s^f_i M_k[a,i] - \sum_{i=1}^m s^f_i M_k[b,i]\\ &= kp_am !-kp_bm! + \left(\frac{k(n-p_a)m !}{m-1}- \frac{k(n-p_b)m !}{m-1}\right) \sum_{i=2}^{m} s^f_i \\&= \frac{\lambda^fk(p_a-p_b)m !}{m-1}
        \end{align*}
        where $\lambda^f= (m-1) - \sum_{i=2}^{m} s^f_i>0$, as $s^f_m=0$. Then, using symmetry of the two sides and the fact that $p_a>p_b$, we get
        \begin{align*}
            \Pr[\dis{f,k}{a,b}=1] = \Pr[X^{f,k}_1\cdot X^{f,k}_2 < 0] &= \Pr\left[X^{f,k}_1\cdot \left(\frac{\lambda^fk(p_a-p_b)m !}{m-1}-X^{f,k}_1\right) < 0\right] \\
            &= \Pr\left[X^{f,k}_1 < 0\right] + \Pr\left[X^{f,k}_1 > \frac{\lambda^f(p_a-p_b)m !}{m-1}\right] \\ &= 2\Pr[X^{f,k}_1<0].
        \end{align*}
        A similar analysis gives $ \Pr[\tie{f,k}{a,b}=1]= 2\Pr[X^{f,k}_1=0]$. Now, if $a$ and $b$ are ranked $i$th and $j$th in a ranking, respectively, the score assigned to $a$ minus that assigned to $b$ by $f$ for that ranking is $s^f_i-s^f_j$. Say $\text{Bin}({z})$ is the fair binomial distribution where $z$ is the number of experiments and the probability of success for each experiment is $1/2$. Then,
        \begin{align}
            &X^{f,k}_1 \sim \sum_{i \neq j \in [m]} (s^f_i-s^f_j)  \fbinom{M_k[(a,i);(b,j)]}\nonumber \\
            &= \sum_{i < j \in [m]} (s^f_i-s^f_j) \left( \fbinom{M_k[(a,i);(b,j)]}-\fbinom{M_k[(a,j);(b,i)]}\right)\nonumber \\
            &= \sum_{i < j \in [m]} (s^f_i-s^f_j) \left( \fbinom{M_k[(a,i);(b,j)]}+\fbinom{M_k[(a,j);(b,i)]}-M_k[(a,j);(b,i)] \right) \nonumber\\
            &= \sum_{i < j \in [m]} (s^f_i-s^f_j) \left( \fbinom{M_k[(a,i);(b,j)]+M_k[(a,j);(b,i)]}-M_k[(a,j);(b,i)] \right) \nonumber\\
            &= \sum_{i < j \in [2,m]} (s^f_i-s^f_j) \left( \fbinom{2kC_2}-kC_2\right) + \sum_{i=2}^{m} (1-s^f_i) \left( \fbinom{k(p_a+p_b)C_1}-kp_bC_1\right) \label{eq:dist_f}        \end{align}
        In particular, for plurality, \eqref{eq:dist_f} gives us:
        \begin{align}
            X^{f_p,k}_1 \sim \sum_{i=2}^{m}  \left( \fbinom{k(p_a+p_b)C_1}-kp_bC_1\right)=\fbinom{k(p_a+p_b)m !}-kp_bm ! \label{eq:dist_veto}
        \end{align}

        Now, fixing some $f \in F \setminus \{f_p\}$, using \eqref{eq:dist_f}, we write $X^{f,k}_1 = Y+ \sum_{i=2}^{m}(1-s^f_i)  Z_i$, where
        $$
        Y \sim  \sum_{i<j \in [2,m]} (s^f_i-s^f_j) \left( \fbinom{2kC_2}-kC_2\right) \text{, and }$$ $$Z_i \sim \left( \fbinom{k(p_a+p_b)C_1}-kp_bC_1\right) \text{ for all }i \in [2,m],
        $$ 
     each sampled independently. It is easy to check that $\expc{[Y]}=0$ and that $Y$ is a symmetric distribution. We will treat two cases separately.

        \noindent\textbf{Case 1:} $p_b=0$. In this case, we cannot have $X_1^{f_p,k} < 0$, as seen by (\ref{eq:dist_veto}). Therefore, 
        \begin{align}
            \Pr[\dis{f_p,k}{a,b}=1]+ \frac{1}{2}\Pr[\tie{f_p,k}{a,b}=1] =0+ \Pr[X^{f_p,k}_1=0]= 2^{-kp_am !}. \label{eq:case1_veto}
        \end{align}
        For $f$, on the other hand, we have
        \begin{align}
            \Pr[X_1^{f,k} = 0] \geq \Pr[Y=0; Z_i=0\text{ for all }i \in [2,m]]= \frac{\Pr[Y=0]}{\left(2^{kp_aC_1}\right)^{m-1}}={\Pr[Y=0]}\cdot {2^{-kp_am !}},\nonumber \\
            \text{ and } \Pr[X_1^{f,k} < 0] \geq \Pr[Y < 0; Z_i=0\text{ for all }i \in [2,m]]=\frac{1-\Pr[Y = 0]}{2} \cdot 2^{-kp_am !},\nonumber \\
            \Rightarrow  \Pr[\dis{f_p,k}{a,b}=1]+ \frac{1}{2}\Pr[\tie{f_p,k}{a,b}=1] =  2\Pr[X_1^{f,k} < 0] + \Pr[X_1^{f,k} = 0]  \geq 2^{-kp_am !} \label{eq:case1_f}.
        \end{align}
        Since $a \in A$ and $k \in \mathbb{Z}^+$ were arbitrarily chosen, combining (\ref{eq:case1_veto}) and (\ref{eq:case1_f}) gives us statement 2 from the lemma. 

        \noindent \textbf{Case 2:} $p_b>0$. Say $\biprob{N}{r}$ is $\Pr[A \leq rN]$ for $A \sim \fbinom{N}$, given that $0< r < \frac{1}{2}$ and $rN$ is an integer. By \citet[Lemma 4.7.2]{Ash90:Information}, we know that
        \begin{align}
            \frac{2^{-N(1-h(r))}}{\sqrt{8Nr(1-r)}} \leq \phi(N,r) \leq 2^{-N(1-h(r))}, \label{eq:binom_tail}
        \end{align}
        where $h(r)=-r\log r - (1-r) \log(1-r)$ is the binary entropy function. We will use \eqref{eq:binom_tail} to upper bound the error probability for $f_p$ and lower bound the error probability for $f$. For convenience, define 
        \begin{align*}
            r_p \eqdef \frac{p_b}{p_a+p_b}, \quad h_p \eqdef h(r_p),  \quad N_p\eqdef(p_a+p_b)C_1.
        \end{align*}
        First we start with $f_p$. By (\ref{eq:dist_veto}) and (\ref{eq:binom_tail}), we have
        \begin{align*}
            \Pr[\dis{f_p,k}{a,b}=1]+ \frac{1}{2}\Pr[\tie{f_p,k}{a,b}=1] &=  2\Pr[X_1^{f_p,k} < 0] + \Pr[X_1^{f_p,k} = 0] \\& \leq  2\Pr[X_1^{f_p,k} \leq 0] = 2\biprob{k(p_a+p_b)m !}{r_p} \\
            & \leq 2^{1-kN_p(m-1)(1-h_p)} \equiv U(k),
        \end{align*}
    where we will use $U(k)$ as an upper bound that depends on $k$. Now consider $f$. Since $f \neq f_p$ and since $1=s^f_1\geq s^f_2\geq \ldots s^f_m =0$, we must have $s^f_2>0$. Fix some $\varepsilon > 0$ such that $\varepsilon < \min\left(\frac{p_a-p_b}{2(p_a+p_b)}, \frac{p_b}{(p_a+p_b)(1-s^f_2)}\right)$ (if $s^f_2=1$, then ensuring $\varepsilon< \frac{p_a-p_b}{2(p_a+p_b)}$ is sufficient). Recalling that $X^{f,k}_1 = Y+ \sum_{i=2}^{m}(1-s^f_i) Z_i$ and that $s^f_m=0$, we have
        \begin{align*}
            \Pr[X_1^{f,k} \leq 0] &\geq \Pr[Y \leq 0] \cdot  \Pr[Z_{2} \leq  \varepsilon k N_p  ]   \cdot  \Pr[Z_{m} \leq - \varepsilon k N_p (1-s^f_2)  ]  \cdot \prod_{i =3}^{m-1} \Pr[Z_i \leq 0],\\
            \Pr[X_1^{f,k} < 0] &\geq  \Pr[Y < 0] \cdot  \Pr[ Z_{2} \leq  \varepsilon k  N_p  ]   \cdot  \Pr[Z_{m} \leq - \varepsilon k N_p (1-s^f_2)  ]  \cdot \prod_{i =3}^{m-1} \Pr[Z_i \leq 0]\\&=  \Pr[Y > 0] \cdot   \Pr[Z_{2} \leq  \varepsilon k N_p ]   \cdot  \Pr[Z_{m} \leq - \varepsilon k N_p (1-s^f_2)  ]  \cdot \prod_{i =3}^{m-1} \Pr[Z_i \leq 0];
        \end{align*}
        therefore, 
        \begin{align*}
             \Pr[\dis{f,k}{a,b}=1]+ \frac{1}{2}\Pr[\tie{f,k}{a,b}=1] &=  2\Pr[X_1^{f,k} < 0] + \Pr[X_1^{f,k} = 0] \\&= \Pr[X_1^{f,k} < 0] + \Pr[X_1^{f,k} \leq 0] \\
             &\geq\Pr[Z_{2} \leq  \varepsilon k N_p ]   \cdot  \Pr[Z_{m} \leq - \varepsilon k N_p (1-s^f_2)  ]  \cdot \prod_{i =3}^{m-1} \Pr[Z_i \leq 0] \\
             &\geq \Pr[Z_{2} \leq  \lfloor \varepsilon k N_p \rfloor ]   \cdot  \Pr[Z_{m} \leq - \lceil \varepsilon k N_p (1-s^f_2) \rceil  ]  \cdot \prod_{i =3}^{m-1} \Pr[Z_i \leq 0] \\
            & \geq \biprob{kN_p}{r_2(k)} \cdot \biprob{kN_p}{r_m(k)} \cdot (\biprob{kN_p}{r_p} )^{m-3}  
        \end{align*}
        where  $r_2(k) = r_p +\frac{\lfloor \varepsilon k N_p  \rfloor}{kN_p} $ and  $r_m(k) =r_p -\frac{\lceil \varepsilon k N_p (1-s^f_2) \rceil}{kN_p}$. As $k\rightarrow \infty$, we have $r_2(k) \rightarrow r_p+\varepsilon \in \left(0,\frac{1}{2}\right)$ and   $r_m(k) \rightarrow r_p-\varepsilon(1-s^f_2) \in \left(0,\frac{1}{2}\right)$; hence, we can choose $k$ large enough such that $0 < r_i < 1/2$ for both $i \in \{2,m\}$. Further, by construction, $kN_p \cdot r_i(k)$ is an integer for all $k \geq 0$ and $i \in \{2,m\}$. Define $h_i(k) \eqdef h_i(r_i(k))$ for $i \in \{2,m\}$. By \eqref{eq:binom_tail}, we then have
        \begin{align*}
            \Pr[\dis{f,k}{a,b}=1]+ \frac{1}{2}&\Pr[\tie{f,k}{a,b}=1] \geq \biprob{kN_p}{r_2(k)} \cdot \biprob{kN_p}{r_m(k)} \cdot (\biprob{kN_p}{r_p} )^{m-3}   \\ &\geq \frac{2^{-kN_p(1-h_2(k))}}{\sqrt{8kN_pr_2(k)(1-r_2(k))}} \cdot \frac{2^{-kN_p(1-h_m(k))}}{\sqrt{8kN_pr_m(k)(1-r_m(k))}}\cdot \left(   \frac{2^{-kN_p(1-h_p)}}{\sqrt{8kN_pr_p(1-r_p)}} \right)^{m-3}\\
            & =    \frac{2^{-kN_p(m-1-h_2(k)-h_m(k)-(m-3)h_p)}}{\sqrt{(8kN_p)^{m-1} r_2(k)(1-r_2(k))r_m(k)(1-r_m(k))(r_p(1-r_p))^{m-3}}} \equiv L(k),  
        \end{align*}
       where we will use $L(k)$ as a lower bound that depends on $k$. Now, we would like to show that there exists $k_f^{a,b}$ such that $L(k)>U(k)$ for all $k \geq k_f^{a,b}$. We have
        \begin{align} \label{eq:ratio}
            \frac{L(k)}{U(k)}=\frac{2^{kN_pP(k)-1}}{\sqrt{(8kN_p)^{m-1} r_2(k)(1-r_2(k))r_m(k)(1-r_m(k))(r_p(1-r_p))^{m-3}}},
        \end{align}
        where $P(k)=(m-1)(1-h_p)-(m-1-h_2(k)-h_m(k)-(m-3)h_p)=h_2(k)+h_m(k) - 2 h_p$. We would like to show $\lim_{k \rightarrow \infty} \frac{L(k)}{U(k)} = \infty$, giving us our desired relationship. We recall that $r_2(k) \rightarrow r_p+\varepsilon$ and $r_2(k) \rightarrow r_p-\varepsilon(1-s^f_2)$ as $k\rightarrow \infty$, so the denominator of \eqref{eq:ratio} scales as $\Theta(k^{\frac{m-1}{2}})$ for large $k$. As for the nominator, we have $ \lim_{k \rightarrow \infty} P(k) = h(r_p+\varepsilon)+h(r_p-\varepsilon(1-s^f_2))-2h(r_p) \equiv H(\varepsilon)$. We have $H(0)=0$ and $H'(0)=h'(r_p)-(1-s^f_2)h'(r_p)=s^f_2h'(r_p)>0$ since $s^f_2>0$ and
        \begin{align*}
            h'(r_p) =  -\log r_p -1 +\log(1-r_p) +1 = \log\left(\frac{1}{r_p} -1\right) > \log(2-1) = 0,
        \end{align*}
        as $r_p = \frac{p_b}{p_a+p_b} < \frac{1}{2}$. We can therefore set $\varepsilon >0$ small enough such that $H(\varepsilon)>0$. Then the nominator of \eqref{eq:ratio} scales as $\Theta(2^{kN_pH(\varepsilon)})$ for large $k$, dominating the denominator. Hence, $\lim_{k \rightarrow \infty} \frac{L(k)}{U(k)} = \infty$, as desired. Therefore, we can choose a large enough $k_f^{a,b}$ such that for all $k \geq k_{f}^{a,b}$ we have
        \begin{align*}
            \Pr[\dis{f,k}{a,b}=1]+ \frac{1}{2}\Pr[\tie{f,k}{a,b}=1] \geq L(k) > U(k) \geq \Pr[\dis{f_p,k}{a,b}=1]+ \frac{1}{2}\Pr[\tie{f_p,k}{a,b}=1].
        \end{align*}
        Since there are finitely many candidates $\cand$, picking $k_f = \max_{(a,b) \in \cand^2} k_f^{a,b}$ is sufficient for proving statement 1 from the lemma.
    \end{proof}
\end{proof}

 \begin{restatable}{prop}{union}\label{prop:impossibility-union}
     No (anonymous) RPR can satisfy all three of reversal symmetry, plurality-shuffling consistency (PSC), and union consistency (UC). 
 \end{restatable}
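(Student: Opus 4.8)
The plan is to work with the smallest candidate set that both axioms can bite on, namely $F=\{f_p,f_v\}$, which satisfies $\rev(F)=F$ and contains plurality. The strategy is to exhibit a single profile $Q$ that admits two different union decompositions, one forcing an RPR $\rpr$ to output $\{f_p\}$ and the other forcing $\{f_v\}$, and then to invoke anonymity to conclude these outputs must coincide. The engine driving the second decomposition is reversal symmetry: it converts any plurality-forcing profile into a veto-forcing one, since $\rpr(F,P)=\{f_p\}$ implies $\rpr(F,\rev(P))=\rev(\{f_p\})=\{f_v\}$.

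First I would build two base profiles $\profile_1,\profile_2$ over $\cand=\{a_1,\dots,a_{\ncand}\}$ (with $\ncand\ge 3$) whose first-place counts are \emph{complementary and tie-free}: arrange $\profile_1$ so that $a_i$ is ranked first by exactly $i-1$ voters, and $\profile_2$ so that $a_i$ is ranked first by exactly $\ncand-i$ voters. Both have distinct first-place counts, so $f_p(\profile_1)$ and $f_p(\profile_2)$ contain no ties and plurality-shuffling consistency applies to each. Let $P_1,P_2$ be the $k$-shufflings of $\profile_1,\profile_2$ with respect to $[2,\ncand]$, with $k$ chosen above both PSC thresholds, so that $\rpr(F,P_1)=\rpr(F,P_2)=\{f_p\}$. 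The key computation is that in a $[2,\ncand]$-shuffling each ranking $r$ receives a number of copies depending only on $\mathrm{top}(r)$: from \Cref{def:shuffle}, a base voter whose top is $\mathrm{top}(r)$ contributes exactly $k\ncand$ copies of $r$. Hence in $Q:=P_1+P_2$ the multiplicity of $r$ is $k\ncand\big((i-1)+(\ncand-i)\big)=k\ncand(\ncand-1)$ where $a_i=\mathrm{top}(r)$, a constant independent of $r$. Thus every one of the $\ncand!$ rankings appears equally often in $Q$, so $Q$ is the uniform profile and in particular $\rev(Q)=Q$ as multisets of ballots.

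Then I would read off the contradiction from the two decompositions of $Q$. Since $\rpr(F,P_1)\cap\rpr(F,P_2)=\{f_p\}\neq\emptyset$, union consistency gives $\rpr(F,Q)=\{f_p\}$. On the other hand, $Q=\rev(Q)=\rev(P_1)+\rev(P_2)$, and reversal symmetry gives $\rpr(F,\rev(P_j))=\{f_v\}$ for $j\in\onetwo$; since $\{f_v\}\cap\{f_v\}\neq\emptyset$, union consistency gives $\rpr(F,Q)=\{f_v\}$. The two profiles $P_1+P_2$ and $\rev(P_1)+\rev(P_2)$ have identical ballot multisets, so anonymity forces $\rpr$ to return the same set on both. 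Therefore $\{f_p\}=\{f_v\}$, which is false for $\ncand\ge 3$, yielding the desired impossibility.

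The hard part will be producing a single \emph{reversal-symmetric} profile that is simultaneously a union of PSC-certified plurality-forcing profiles; the delicacy is that PSC only certifies shufflings of tie-free bases, so one cannot simply shuffle a single ranking into a uniform block. The complementary-count trick is what resolves this: forcing the two bases' first-place counts to sum to the constant $\ncand-1$ makes their shuffled union collapse to the uniform profile, which is automatically reversal-balanced and re-decomposes under $\rev$ into veto-forcing pieces. Two technical points must be handled with care: a common shuffling parameter $k$ exceeding both PSC thresholds must be used (so the multiplicities line up and both pieces are genuinely forced to $\{f_p\}$), and anonymity must be invoked at the very end to equate $\rpr$ on the two multiset-equal profiles, which is precisely why the statement is restricted to anonymous RPRs.
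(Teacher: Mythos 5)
Your proof is correct and follows essentially the same route as the paper's: both build two tie-free base profiles with complementary first-place counts, shuffle them past their PSC thresholds at a common $k$, and observe that the union is the uniform profile (hence equal to its own reverse as a multiset), with anonymity and reversal symmetry then clashing with union consistency. The paper instantiates this concretely with $m=3$ and derives $\rpr(F,\profile)=\{f_p,f_v\}$ from reversal symmetry on the self-reversed union, whereas you work with general $m\ge 3$ and apply union consistency to both decompositions to get $\{f_p\}=\{f_v\}$ directly---a cosmetic difference in the endgame.
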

 
 \begin{proof}
    Assume an RPR $\rpr$ satisfies PSC and reversal symmetry. We will show that $\rpr$ necessarily fails UC. Fix $\cand=\{a,b,c\}$ and $F=\{f_p,f_v\}$. Say $\profile_a$ consists of two voters ranking $a \succ b \succ c$ and one voter ranking $b \succ a \succ c$. Similarly, say $\profile_b$ consists of two voters ranking $c \succ b \succ a$ and one voter ranking $b \succ c \succ a$. We have $f_p(\profile_a) =  a \succ b \succ c$ and $f_p(\profile_b)= c \succ b \succ a$, so both profiles satisfy the preconditions of \Cref{def:psc}. By PSC, there exist $k_a,k_b$ such that for all $k'_a \geq k_a$ and $k'_b \geq k_b$ we have $\rpr(\rules, \shufflep{[2,m]}{k'_a}{\profile_a})=\rpr(\rules, \shufflep{[2,m]}{k'_b}{\profile_b})=\{f_p\}$. Pick $k =\max\{k_a,k_b\}$ and let $\profile = \shufflep{[2,m]}{k}{\profile_a} +  \shufflep{[2,m]}{k}{\profile_b}$. It is straightforward to check that $\profile=\rev(\profile)$ up to a permutation of voters, with each $r \in \tL(\cand)$ appearing $6k$ times. By anonymity and reversal symmetry, this implies $\rpr(\rules, \profile) = \rev(\rpr(\rules, \profile))$, which is only possible if $\rpr(\rules, \profile)= \{f_v,f_p\}$, proving UC is violated. 
\end{proof}

\subsection{Proof of \Cref{thm:preservebundle}}\label{app:rpr_preserve}
\preservebundle*
We prove each claim as a separate proposition. 

\begin{restatable}{prop}{anoneut}\label{prop:aba_anoneut}
    $\aba$ preserves anonymity \& neutrality.
\end{restatable}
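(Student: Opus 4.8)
The plan is to reduce both claims to a single structural fact: for each fixed candidate rule, the $\aba$ objective is invariant under the relevant symmetry of the profile. Once that is in hand, the $\argmin$ in \eqref{eq:aba} is unchanged, so $\aba$ selects the same rule on the transformed profile, and the induced SWF $f^{F}_{\aba}$ inherits the property directly from that selected rule. Concretely, for a fixed rule $f$ and profile $\profile$ I would write $\Phi(f,\profile) \eqdef \expc\left[KT\left(f(\profile^{(1)}),f(\profile^{(2)})\right)\right]$ for the quantity being minimized, recalling that the expectation is an average over the $2^{\nvoters}$ equally likely assignments of voters to the two groups (\Cref{def:aba}), and establish invariance of $\Phi(f,\cdot)$ under voter and alternative permutations respectively.

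For anonymity, I would fix a permutation $\tau$ of the voters $\voters$ and the relabeled profile $\tau(\profile)$. Relabeling voters induces a bijection on the set of $2^{\nvoters}$ splits that preserves the resulting pair of multisets of rankings in the two groups. Since every $f \in F$ is anonymous, $f(\profile^{(j)})$ depends only on the multiset of rankings in group $j$, so the averand $KT(f(\profile^{(1)}),f(\profile^{(2)}))$ agrees term-by-term under this bijection; hence $\Phi(f,\tau(\profile)) = \Phi(f,\profile)$ for every $f$, giving $\aba(F,\tau(\profile)) = \aba(F,\profile)$. Writing $\{f\} = \aba(F,\profile)$, anonymity of $f$ then yields $f^{F}_{\aba}(\tau(\profile)) = f(\tau(\profile)) = f(\profile) = f^{F}_{\aba}(\profile)$.

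For neutrality, I would fix a permutation $\pi$ of the alternatives $\cand$ acting on rankings and profiles in the obvious way. The two observations driving the argument are that permuting alternatives commutes with the split—for any fixed assignment of voters, $(\pi(\profile))^{(j)} = \pi(\profile^{(j)})$, since $\pi$ touches neither the voters nor how they are grouped—and that $KT$ is invariant under a common relabeling of alternatives, $KT(\pi(r_1),\pi(r_2)) = KT(r_1,r_2)$, because \eqref{eq:kt} sums the disagreement and tie indicators over unordered pairs $\{a,b\}$ and $\pi$ merely permutes these pairs. Combining these with neutrality of $f$ (so $f(\pi(\profile^{(j)})) = \pi(f(\profile^{(j)}))$) gives, on every split, $KT(f((\pi\profile)^{(1)}),f((\pi\profile)^{(2)})) = KT(f(\profile^{(1)}),f(\profile^{(2)}))$, and therefore $\Phi(f,\pi(\profile)) = \Phi(f,\profile)$ for all $f \in F$. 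Thus $\aba(F,\pi(\profile)) = \aba(F,\profile) = \{f\}$, and neutrality of $f$ delivers $f^{F}_{\aba}(\pi(\profile)) = f(\pi(\profile)) = \pi(f(\profile)) = \pi\left(f^{F}_{\aba}(\profile)\right)$.

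I do not anticipate a serious obstacle: the argument is essentially that $\Phi$ is a symmetric functional of the candidate rules' outputs, and each symmetry acts ``through'' the rules without altering the expected disagreement. The only points needing care are the two invariances of $KT$ (immediate for the anonymity case, a one-line check for neutrality) and verifying that the empty-split convention of Footnote~\ref{fn:emptysplit} respects both symmetries—it does, since the all-tied ranking is independent of the voters and satisfies $\pi(\text{all-tied}) = \text{all-tied}$—so that the invariance of $\Phi$ holds split-by-split rather than merely in expectation.
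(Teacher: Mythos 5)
Your proposal is correct and follows essentially the same route as the paper, whose proof is a one-line remark that the claim ``follows definitionally'' from the anonymity and neutrality of the Kendall-Tau distance and of the $\aba$ objective. You simply spell out that remark in full---invariance of the expected disagreement under voter and alternative permutations, hence invariance of the $\argmin$, hence inheritance by the induced SWF---including the correct observation that the empty-split convention respects both symmetries.
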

\begin{proof}
    The proof follows definitionally from the anonymity and neutrality of the Kendall-Tau distance (\Cref{def:kt}), and therefore of the $\aba$ function (\Cref{def:aba}).
\end{proof}

Next, we formally define the social choice axioms given in part (2) of \Cref{thm:preservebundle}. Given a profile $\profile$ and $a,b \in \cand$, we say $a$ pairwise defeats $b$ if $| \{i \in N: a \succ_{\sigma_i} b \}| > | \{i \in N: b \succ_{\sigma_i} a \}|$. Then, we say an SWF $f \in F$ satisfies\ldots
\begin{itemize}
    \item \ldots the \emph{Smith Criterion (SC)} if the alternative(s) ranked highest in $f(\profile)$ belongs to the Smith set of $\profile$, \emph{i.e.}, the smallest set $S \subseteq \cand$ such that every $a \in S$ pairwise defeats every $b \in \cand \setminus S$.
    \item \ldots\emph{Condorcet Consistency (CC)} if it satisfies the SC for all profiles $\profile$ that have a Smith set containing a single alternative (called the Condorcet winner).
    \item \ldots\emph{Majority Winner (MW)} if is satisfies CC whenever the Condorcet winner is the top-ranked candidate of a majority ($>\ncand$) of voters.
    \item \ldots\emph{Pairwise Majority Consistency (PMC)} if whenever there exists $r \in \wL(\cand)$ such that $a \succ_r b$ if and only if $a$ pairwise defeats $b$ in $\profile$, then $f(\profile)=r$. \citep{Ge24:Axioms}
    \item \ldots \emph{unanimity} if whenever $\sigma_i=r$ for all $i \in \voters$, then $f(\profile)=r$.
\end{itemize}

\begin{restatable}{prop}{rpreserve}\label{prop:rpr_preserve}
    Any RPR $Z$ preserves the Smith criterion, Condorcet consistency, majority winner, pairwise majority consistency, and unanimity. 
\end{restatable}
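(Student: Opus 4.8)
The plan is to give a single, uniform argument that exploits a structural feature shared by all five properties: each of them is \emph{local}, in the sense that it constrains only the output of a rule on a single profile, and does so in terms of that same profile. Recall that the induced SWF satisfies $f^\rules_\rpr(\profile) = f(\profile)$, where $\{f\} = \rpr(\rules, \profile)$; in particular, on any input $\profile$, the output of $f^\rules_\rpr$ is literally the output of some individual candidate rule $f \in \rules$ applied to that very profile.

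The key observation I would make is that each of the five axioms can be phrased in the form: ``for every profile $\profile$ lying in some class $\mathcal{C}$ --- where membership in $\mathcal{C}$ is determined by $\profile$ alone --- the output ranking must satisfy a constraint $\Phi_\profile$ that again depends only on $\profile$.'' For instance, unanimity takes $\mathcal{C}$ to be the profiles in which every voter submits a common ranking $r$ and sets $\Phi_\profile$ to be ``the output equals $r$''; the Smith criterion takes $\mathcal{C}$ to be all profiles and $\Phi_\profile$ to be ``the top-ranked alternatives lie in the Smith set of $\profile$'' (which is a function of $\profile$); and Condorcet consistency, majority winner, and pairwise majority consistency fit the same template with their respective restrictions on $\mathcal{C}$ and their respective output targets $\Phi_\profile$. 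Crucially, none of these axioms relates the outputs on two \emph{different} profiles to one another.

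Given this, the preservation argument is immediate and identical for all five. Suppose every $f \in \rules$ satisfies such a local property $P$, and fix an arbitrary profile $\profile \in \mathcal{C}$. Let $\{f\} = \rpr(\rules, \profile)$ be the rule selected. Because $f$ satisfies $P$ and $\profile \in \mathcal{C}$, the ranking $f(\profile)$ satisfies $\Phi_\profile$; since $f^\rules_\rpr(\profile) = f(\profile)$, the output of the induced rule satisfies $\Phi_\profile$ as well. As $\profile \in \mathcal{C}$ was arbitrary, $f^\rules_\rpr$ satisfies $P$. I would state this template once and then record, for each of the five axioms, the corresponding choice of $\mathcal{C}$ and $\Phi_\profile$, verifying in each case that both are determined by the single profile.

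The only real obstacle here is conceptual rather than computational: one must confirm that each axiom is genuinely local in the stated sense, i.e., that neither its hypothesis nor its output requirement references more than one profile. This is exactly what separates these axioms from anonymity, neutrality, and (preserving) monotonicity, all of which compare outputs across distinct profiles and can therefore break once the RPR is allowed to pick different candidate rules for the related profiles. I would close by briefly contrasting with monotonicity, to emphasize that the argument genuinely relies on locality and does not extend to cross-profile axioms.
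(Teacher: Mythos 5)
Your proposal is correct and matches the paper's proof exactly: the paper also observes that each of these axioms has the form ``if $\profile$ satisfies conditions $X$, then $f(\profile)$ must satisfy conditions $Y$,'' so that $f_\rpr^\rules(\profile)=f(\profile)$ for some $f \in \rules$ immediately inherits the property. Your write-up merely makes the locality template more explicit, which the paper leaves implicit.
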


\begin{proof}
    Like \Cref{prop:rpr_preserve}, the proof follows from the definitions of the axioms. Each of them are of the form ``if $\profile$ satisfies conditions $X$, then $f(\profile)$ must satisfy conditions $Y$.'' Since for each $\profile$ we have $f_\rpr^\rules(\profile)=f(\profile)$ for some $f \in \rules$, restricting $F$ to rules that satisfy this axiom will ensure $f_\rpr^\rules(\profile)$ will satisfy $Y$ whenever $\profile$ satisfies $X$.
\end{proof}

Before proving the general impossibility result in part (3) of \Cref{thm:preservebundle}, we will show that $\aba$ does not preserve monotonicity. 

\begin{ex} \label{ex:aba_nomono}
    Say $\rules =\{f_p, f_v\}$ (plurality and veto). Both of these rules are monotonic, as all positional scoring rules are. Fix $k \in \mathbb{Z}^+$ and consider the following profile $\profile$:
\begin{itemize}
    \item Group 1: $k$ voters rank $a \succ b \succ c\succ d$.
    \item Group 2: $k$ people voted $d\succ c\succ a\succ b$.
    \item Group 3: $k$ people voted $d\succ c\succ a\succ b$.
    \item Group 4: $k$ voters rank $b \succ a \succ c\succ d$.
    \item Group 5: $3k$ voters rank $c \succ a \succ b \succ d$.
    \item Group 6: $3k$ voters rank $d \succ b \succ a\succ c$.
\end{itemize}

The veto scores (the number of voters that rank them bottom) of $a,b,c,d$ are $0,2k,3k,5k$, respectively. By an analogous argument to that in \Cref{ex:aba_run}, $\Pr[f_v(\profile^{(1)})=f_v(\profile^{(2)})= (a \succ b \succ c\succ d)] \rightarrow 1$ as $k \rightarrow \infty$, where the probabilities are taken over the random process in \Cref{def:aba}. Hence, the expectation in Equation (\ref{eq:aba}) of \Cref{def:aba} approaches 0  for $f_v$ as $k$ grows. However, the plurality scores (the number of voters that rank them top) of $a$ and $b$ are tied, implying the the probability that $f_p(\profile^{(1)})$ and $f_p(\profile^{(2)})$ will disagree about $a$ and $b$ converges to 1 as $k \rightarrow \infty$. This implies the expectation in (\ref{eq:aba}) approaches at least 1 for $f_p$ as $k$ grows. Therefore, there exists a $k_1$ such that $\aba(\rules, \profile)=\{f_v\}$ for all $k \geq k_1$, and therefore $f_{\aba}^\rules(\profile)= (a \succ b \succ c\succ d)$.
Now say the voters in group 1 and 2 promote $b$ by a single spot, to get the profile $\profile'$:
\begin{itemize}
    \item Group 1: $k$ voters rank $b \succ a \succ c\succ d$.
    \item Group 2: $k$ people voted $d\succ c\succ b\succ a$.
    \item Group 3: $k$ people voted $d\succ c\succ a\succ b$.
    \item Group 4: $k$ voters rank $b \succ a \succ c\succ d$.
    \item Group 5: $3k$ voters rank $c \succ a \succ b \succ d$.
    \item Group 6: $3k$ voters rank $d \succ b \succ a\succ c$.
\end{itemize}
Importantly, we have $\profile'= \rev(\profile)$ up to permuting Groups 1 with 2, 3 with 4, and 5 with 6. By anonymity and reversal symmetry, this implies $\aba(\rules, \profile')=\{f_p\}$, so  $f_{\aba}^\rules(\profile')= (d \succ c \succ b\succ a)$. Thus, by promoting $b$, we have gotten $\rank_{f_{\aba}^\rules(\profile)}(b) = 2 < 3 = \rank_{f_{\aba}^\rules(\profile')}(b)$, which violates monotonicity. 
\end{ex}

We now show how the same example can be used for other (anonymous) RPRs satisfying reversal symmetry, in order to prove part (3) of \Cref{thm:preservebundle}.
\begin{restatable}{prop}{impmono}\label{prop:impossible-mono}
     No (anonymous) RPR can satisfy reversal symmetry and preserve monotonicity.
\end{restatable}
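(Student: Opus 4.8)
The plan is to reuse the profile $\profile$ and its promoted version $\profile'$ constructed in \Cref{ex:aba_nomono}, together with the candidate set $F=\{f_p,f_v\}$ (both monotonic, being positional scoring rules), but to replace the $\aba$-specific tie-probability computation with an argument valid for \emph{any} anonymous RPR $\rpr$ satisfying reversal symmetry. The starting point is the structural fact already observed in the example: $\profile'=\rev(\profile)$ up to a permutation of voters, and $\rev(F)=F$ since $\rev(f_p)=f_v$. Combining anonymity with reversal symmetry (\Cref{def:revsym}) then gives $\rpr(F,\profile')=\rpr(F,\rev(\profile))=\rev(\rpr(F,\profile))$. Working in the single-valued regime of \Cref{sec:preserve}, this forces $\rpr(F,\profile)$ to be exactly one of $\{f_p\}$ or $\{f_v\}$, with the reverse rule selected on $\profile'$.

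I would then split into two cases according to which rule $\rpr$ locks onto, exhibiting in each case a monotonicity violation via a different direction of the single adjacent swap relating $\profile$ and $\profile'$. If $\rpr(F,\profile)=\{f_v\}$, I use the promotion $\profile\to\profile'$, which advances $b$: here $f^F_\rpr(\profile)=f_v(\profile)$ ranks $b$ second, while $f^F_\rpr(\profile')=f_p(\profile')$ ranks $b$ third, so promoting $b$ strictly worsens its rank. If instead $\rpr(F,\profile)=\{f_p\}$, I regard $\profile$ as obtained from $\profile'$ by the reverse swap, which advances $a$; now $f^F_\rpr(\profile')=f_v(\profile')$ ranks $a$ first (tied with $b$), whereas $f^F_\rpr(\profile)=f_p(\profile)$ ranks $a$ third, again violating monotonicity. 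The only computations required are the plurality and veto tallies on $\profile$ and $\profile'$, which I would tabulate exactly as in \Cref{ex:aba_nomono}; in both profiles the first-place counts and last-place counts of $(a,b,c,d)$ come out proportional to $(0,k,3k,5k)$ and $(k,k,3k,5k)$ in the relevant arrangements, pinning down all the needed ranks.

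The main obstacle, and the reason this is more delicate than \Cref{ex:aba_nomono}, is that we cannot evaluate $\rpr(F,\profile)$ explicitly, so the counterexample must be robust to whichever rule the RPR selects. The key idea that resolves this is that the pair $(\profile,\profile')$ is reversal-symmetric, so its two ``directions''—promoting $b$ upward versus promoting $a$ upward—are mirror images of one another; whatever choice $\rpr$ makes, one of these two promotions produces the rank increase that witnesses the failure. I would close by noting that no single fixed promotion direction suffices: promoting $b$ alone, for instance, does \emph{not} contradict monotonicity when $\rpr$ happens to pick $f_p$ (there $b$'s rank improves from third to first), which is precisely why both directions of the symmetric pair are needed to cover the two cases.
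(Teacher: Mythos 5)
Your proposal is correct and follows essentially the same route as the paper's proof: the same pair of profiles from \Cref{ex:aba_nomono}, the same use of anonymity plus reversal symmetry to force $\rpr(F,\profile')=\rev(\rpr(F,\profile))$, and the same two-case analysis in which whichever rule is selected, one direction of the symmetric promotion (of $b$ from $\profile$ to $\profile'$, or of $a$ from $\profile'$ to $\profile$) witnesses the monotonicity violation. The only blemish is a harmless numerical slip in the quoted tallies (the veto counts on $\profile$ and the plurality counts on $\profile'$ are $(0,2k,3k,5k)$, not $(0,k,3k,5k)$), which does not affect any of the induced rankings or the conclusion.
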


\begin{proof}
    Say RPR $\rpr$ satisfies reversal symmetry and is anonymous. Take $\profile, \profile'$ and $\rules$ from \Cref{ex:aba_nomono}. We consider two cases.\\
    \noindent\textbf{Case 1:}  $Z(\rules, \profile) = \{f_v\}$. Then, by the same reasoning as in the proof of \Cref{ex:aba_nomono}, we must have $Z(\rules, \profile') = \{f_p\}$ (as that example only uses anonymity and reversal symmetry to show this), and monotonicity is violated by $f^\rules_Z$. \\
    \noindent\textbf{Case 2:}  $Z(\rules, \profile) = \{f_p\}$. In that case, $f_Z^\rules(\profile)=(d \succ c \succ a=b)$, \emph{i.e.}, $a$ and $b$ are tied in the bottom. By reversal symmetry, we have $Z(\rules, \profile') = \{f_v\}$ and therefore $f_Z^\rules(\profile')=(a=b \succ c \succ d)$. However, one can go from $\profile'$ to $\profile$ by simply promoting $a$ by a spot in the rankings of groups 1 and 2. This implies that  $\rank_{f_{\rpr}^\rules(\profile')}(a) = 1 < 3 = \rank_{f_{\rpr}^\rules(\profile)}(a)$, so promoting $a$ results in increasing its rank, therefore $f_{\rpr}^\rules$ is not monotonic. 

    In either case, we see that $f_{\rpr}^\rules$ violates monotonicity. Since $f_p$ and $f_v$ are both monotonic, this proves that $\rpr$ does not preserve monotonicity.
\end{proof}

\subsection{Proof of \Cref{thm:perfpos}} 
\label{app:perfpos}

We recall the complexity result from \Cref{sec:computational}.

\nphard*

We first introduce a generalization of the computational problem $\perfpos$, which we will call $k\-\perfpos$: For each voter $i \in \voters$, we are given a strict ranking $\sigma_i \in \tL(A_i)$ over a \emph{subset} of alternatives $A_i \subseteq A$, with $|A_i|=k$. We are also given a split of voters $\voters=\voters_1 \sqcup \voters_2$ with $|N_1|=|N_2|$. For each $j \in \onetwo, a \in \cand$, and $i \in [\nballot ]$, $\cmatrix_j[a,i]$ indicates the number of voters in $\voters_j$ that rank $a$ in their $i^{\text{th}}$ position. Then, $\perfpos$ asks: is there a positional scoring rule $f_s \in F_S$ that achieves zero disagreement over this split, $\emph{i.e.}$, is there a vector ${s}=(s_i)_{i \in [k]}$ with $1=s_1\geq s_2\geq \ldots \geq s_\nballot  =0$ such that for all $a,b\in \cand$
\begin{align*}
    (T_1[a]-T_1[b])(T_2[a]-T_2[b])>0,
\end{align*}
where $T_j[a]= \sum_{i=1}^\nballot  \cmatrix_j[a,i] s_i$ for any $a \in \cand$ and $j \in \onetwo$. 

Clearly $k\-\perfpos$ contains $\perfpos$ (for $k=\ncand$). However, as we show next, it is not harder: Given an instance of $k\-\perfpos$ (with input profile $\profile$), complete the ranking of each voter $i \in \voters$ to a strict ranking over all alternatives by placing the remaining $\ncand-k$ alternatives in $\cand \setminus \cand_i$ at the bottom of $\sigma_i$, giving rise to a new (complete) profile $\profile'$. Then define $\profile''=\shufflep{[k,\ncand]}{1}{\profile'}$, \emph{i.e.}, the $1-$shuffling of $\profile'$ with respect to positions $[k,\ncand]=\{k,k+1,\ldots,\ncand\}$ (\Cref{def:shuffle}). Then, we claim the answer to the original $k\-\perfpos$ instance with profile $\profile$ is a yes if and only if the answer to the $\perfpos$ instance using $\profile''$ (and the same split as $\profile$) is a yes. Indeed, if there exists a positional scoring rule $f_{s}$ with $1= s_1 \geq s_2 \geq \ldots  \geq s_k=0$ that achieves zero disagreement with $\profile$, then $f_{s''}$ with $s''=(s_i)_{i \in [\ncand]}$ defined as $s''_i=\begin{cases}
    s_i &\text{if }i \leq k\\
    0 & \text{otherwise}
\end{cases}$
achieves zero disagreement with $\profile''$. On the other hand, given a positional scoring rule $f_{s''}$ with $1= s''_1 \geq s''_2 \geq \ldots \geq s''_\ncand=0$ that achieves zero disagreement with $\profile''$, define $x= \frac{s''_k +s''_{k+1}+\ldots + s''_\ncand}{\ncand-k+1}$ and define vector $s=(s_i)_{i \in [k]}$ as $s_i = (s''_i-x)/(1-x)$ for all $i \in [k-1]$ and $s_k=0$ (We have $x<1$ since $s''_\ncand=0$). We will show $f_s$ achieves zero disagreement with $\profile$. Given any $a \in \cand$ and $j \in \onetwo$, the total score assigned by $f_{s''}$ to $a$ on input $\profile''^{(j)}$ (restriction of $\profile''$ to voters in $N_j$) is 
\begin{align*}
    T''_j[a]= \sum_{i=1}^{k-1} k! M_j[a,i] s''_i +  \sum_{i=k}^{\ncand} k! \frac{|N_j|-\sum_{i'=1}^{k-1}M_j[a,i']}{\ncand-k+1} s''_i =  k!\left(x|N_j|+\sum_{i=1}^{k-1}  M_j[a,i] (s''_i-x)  \right)
\end{align*}
by \Cref{def:shuffle}. The score assigned by $f_{s}$ to $a$ on input $\profile^{(j)}$ (restriction of $\profile$ to voters in $N_j$), on the other hand, is 
\begin{align*}
    T_j[a]= \sum_{i=1}^{k}  M_j[a,i] s_i =\sum_{i=1}^{k-1}  M_j[a,i] \frac{s''_i-x}{1-x}= \frac{T''_j[a]-k!x|N_j|}{k!(1-x)}.
\end{align*}

Since the total score of every alternative is just shifted by a constant and then rescaled by another constant, for any $a,b \in \cand$ we have $T_j''[a]>T_j''[b]$ if and only if $T_j[a]>T_j[b]$. As $f_{s''}$ achieves zero disagreement with $\profile''$, this proves $f_{s}$ achieves zero disagreement with $\profile$. Thus, in the proof of \Cref{prop:kperf} below, we reduce 3SAT to $k\-\perfpos$, which proves the NP-hardness for $\perfpos$ too (membership follows from the fact that positional scoring rules are easy to compute). 

\begin{prop}\label{prop:kperf}
    $k\-\perfpos$ is $\NP$-hard.
\end{prop}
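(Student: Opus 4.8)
The plan is to reduce \textsc{3Sat} to $k\text{-}\perfpos$, treating the scoring vector $s=(s_i)_{i\in[k]}$ (with $1=s_1\ge\cdots\ge s_k=0$) as the certificate encoding a satisfying assignment. The first ingredient I would establish is a \emph{realizability lemma} exploiting the fact that each voter ranks only $k$ of the (many) alternatives: the side-$j$ score $T_j[a]=\sum_i \cmatrix_j[a,i]\,s_i$ of an alternative $a$ depends only on the multiset of positions $a$ occupies among voters in $\voters_j$. By padding the remaining $k-1$ slots of each newly added voter with fresh dummy alternatives, I can realize, for any designated pair $(a,b)$ and any target integer ``difference pattern'' $c=(c_i)$ with $\sum_i c_i=0$, a situation where $T_j[a]-T_j[b]=\sum_i c_i s_i$ on a chosen side $j$, independently of the other side. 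Pairs I do not care about are made strictly and unconditionally consistent by ranking one member at position $1$ and the other at position $k$ for a dominating block of voters on both sides, forcing their score gap to be at least $s_1-s_k=1>0$ for every monotone $s$. Crucially, since $KT=0$ forces both output rankings to be strict and identical (the $T$-term of \Cref{def:kt} penalizes any tie), the question becomes exactly: \emph{is there a monotone $s$ inducing the same strict, tie-free order on both sides?}

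Next I would build \textbf{variable gadgets}. The key observation is that, within the monotone polytope, the \emph{second-difference} forms $\Delta_\ell(s)=s_{i_\ell-1}-2s_{i_\ell}+s_{i_\ell+1}$ are sign-indefinite (positive for locally convex $s$, negative for locally concave $s$), and by spacing the interior indices $i_1<\cdots<i_n$ widely apart with locally linear ``ramps'' between them, the $n$ signs $\mathrm{sign}(\Delta_\ell)$ can be chosen independently. I associate the truth value of $x_\ell$ with $\mathrm{sign}(\Delta_\ell)$, realized by a gadget pair on side $1$ (an alternative placed at positions $\{i_\ell-1,i_\ell+1\}$ against one placed at $\{i_\ell,i_\ell\}$), whose score difference is $\pm\Delta_\ell$; I place the \emph{same} form on side $2$, so this pair is automatically consistent and merely ``reads off'' the truth value. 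Strictness (no ties) forces $\Delta_\ell\neq 0$, so every variable receives a crisp assignment. Negated literals are handled by swapping the two alternatives (using $-\Delta_\ell$).

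The heart of the reduction is the \textbf{clause gadget}, and I expect this to be the main obstacle. For a clause $(\ell_1\vee\ell_2\vee\ell_3)$ I want a single pair that is strictly consistent across the split if and only if at least one literal is true. The natural attempt sets the side-$1$ difference to $\pm\Delta_{\ell_1}\pm\Delta_{\ell_2}\pm\Delta_{\ell_3}$ and the side-$2$ difference to a form that is positive for all monotone $s$, so that consistency demands $\pm\Delta_{\ell_1}\pm\Delta_{\ell_2}\pm\Delta_{\ell_3}>0$. The difficulty is that ``the signed sum is positive'' is a threshold/majority condition, not a genuine $3$-way \textsc{Or}: a single satisfied literal need not dominate two violated ones. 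My proposed fix is to weight the three literal-forms at geometrically separated scales (e.g.\ $\Delta$ at coarse, medium, and fine granularities, implemented by rescaling the gadget's voter multiplicities) and to add auxiliary gadgets that pin each $|\Delta_\ell|$ into a controlled window, so that any single true literal's contribution provably outweighs all false contributions. Verifying that these upper/lower magnitude bounds can be imposed simultaneously with sign-indefiniteness \emph{and} a strict, tie-free common order is exactly the delicate step — it is where the global monotonicity coupling $s_1\ge\cdots\ge s_k$ fights against treating the gadgets as independent, and it is the part I would spend the most care on.

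Finally, I would close the reduction with the two-directional correctness argument. Given a satisfying assignment, I would choose $s$ to be convex or concave around each $i_\ell$ according to $x_\ell$'s value (linear on the ramps), check that this realizes a strict order matching on both sides, and verify that every clause pair is satisfied, so $KT=0$. Conversely, any $s$ achieving $KT=0$ yields crisp signs $\mathrm{sign}(\Delta_\ell)$, hence an assignment, and the strict consistency of each clause pair forces the corresponding clause to be satisfied, so $\phi$ is satisfiable. Since the construction uses $O(n+m)$ alternatives and voters of bounded description, it is polynomial. Combined with the $k\text{-}\perfpos\to\perfpos$ transformation already established in the excerpt (and the trivial membership of $\perfpos$ in $\NP$, as positional scoring rules are easy to evaluate), this establishes $\NP$-hardness of $k\text{-}\perfpos$ and hence $\NP$-completeness of $\perfpos$.
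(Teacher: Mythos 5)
Your high-level route matches the paper's: both reduce from 3SAT, encode the truth assignment in the scoring vector, realize per-pair linear forms in $s$ via padding with dummy alternatives, and neutralize all irrelevant pairs with a dominating block. The genuine divergence is in the bit-encoding and, critically, the clause gadget, and it is there that your proposal has a gap that is not merely ``delicate'' but provably unfixable as stated. You encode $x_\ell$ in the sign of the second difference $\Delta_\ell$ and propose a clause form $w_1\sigma_1|\Delta_{\ell_1}|+w_2\sigma_2|\Delta_{\ell_2}|+w_3\sigma_3|\Delta_{\ell_3}|$ (signs $\sigma_i=\pm1$ for literal truth) that should be positive iff some $\sigma_i=+1$, achieved by ``geometrically separated'' weights and a magnitude window $|\Delta_{\ell_i}|\in[c,C]$. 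No such weights and window exist: requiring that literal $i$ alone being true forces positivity gives $w_i c > \big(\sum_{j\neq i} w_j\big)C$ for each $i$, and summing the three inequalities yields $c > 2C$, a contradiction. Geometric separation instead yields a \emph{lexicographic} semantics (the highest-weighted literal dictates the sign), which is not an OR. The missing idea is a \emph{constant offset}: OR of $\pm1$ inputs is a linear threshold function only once you add a constant (positive iff $\sigma_1+\sigma_2+\sigma_3>-3$). This is precisely what the paper's clause gadget does -- the side-2 difference $T_2[c_i]-T_2[d_i]$ carries the affine term $2z-1$ (realized via $s_1=1$ and $s_\nballot=0$) alongside the literal contributions, and the ``all literals false'' case is then bounded below $0$ using only \emph{one-sided} magnitude bounds.

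That last point exposes a second structural weakness of your encoding. The paper puts the truth value in the magnitude of the first difference $s_i-s_{i+1}$ and uses a variable gadget whose two sides scale this difference by different factors ($\nballot+2$ versus $1/\varepsilon$), so that sign-agreement \emph{excludes the middle zone} $[\varepsilon, 1/(\nballot+2)]$; this hands the clause gadget the quantitative separation (True contributes $>2$, False contributes $<2/7$) for free, and the converse direction needs nothing more. Your sign-of-$\Delta_\ell$ encoding gives only $\Delta_\ell\neq 0$ from tie-freeness, so all of the quantitative control must come from the auxiliary ``pinning'' gadgets, which must enforce both $|\Delta_\ell|>c$ (a non-convex union of half-spaces, encodable by the sign-agreement trick but requiring careful integral realization of the constant $c$ in the position-count matrices) and $|\Delta_\ell|<C$ with $C<2c$, all simultaneously with global monotonicity of $s$ and a common strict order on every gadget pair. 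None of this is carried out, and it is exactly the load-bearing part of the argument. In short: the skeleton is sound and parallel to the paper's, but the clause gadget as designed cannot work without the affine offset, and the magnitude-control machinery your encoding necessitates (and the paper's encoding avoids) is left entirely unbuilt.
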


\begin{proof}
    We will be reducing from 3SAT. Say $\phi$ is a 3CNF formula with clauses $C_1,C_2,\ldots,C_\nclause$ and binary variables $x_1,x_2,\ldots, x_\nvars$. For each $\set \in \onetwo$, we will construct and instance of $\perfpos$ by first specifying $\cmatrix_\set[a,i]$ for alternative $a \in \cand$ and position $i \in [k]$, and then explicitly designing rankings that is consistent with that $\cmatrix_\set$. We start with setting $\nballot =\nvars+2$ and $\varepsilon=\frac{1}{7(\nballot+2)}$.

    For each $i \in [\nvars]$, add two candidates $a_i$ and $b_i$, with
    \begin{align*}
        \cmatrix_1[a_i,j]&= \begin{cases}
            1+(\nballot +3)(i-1) &\text{if }j=1\\
            \nballot  +2 &\text{if }j=i+1\\
            0 &\text{otherwise}
        \end{cases} \\\cmatrix_1[b_i,j]&= \begin{cases}
            (\nballot +3)(i-1) &\text{if }j=1\text{ and }i \neq 1 \\
            \nballot +2 &\text{if }j=i\\
            1 &\text{if }j=\nballot \\
            0 &\text{otherwise}
        \end{cases}
    \end{align*}
    and
    \begin{align*}
        \cmatrix_2[a_i,j]&= \begin{cases}
            1+(1/\varepsilon+1)(i-1)  &\text{if }j=1\\
            1/\varepsilon &\text{if }j=i+1\\
            0 &\text{otherwise}
        \end{cases} \\ \cmatrix_2[b_i,j]&= \begin{cases}
            (1/\varepsilon+1)(i-1)  &\text{if }j=1\text{ and }i \neq 1\\
            1/\varepsilon &\text{if }j=i \\
            1 &\text{if }j=\nballot \\
            0 &\text{otherwise}
        \end{cases}.
    \end{align*}
    For each $i \in [\nclause]$, say the clause $C_i$ in $\phi$ consist of variables $\{x_{j}\}_{j \in V_i}$ for $V_i \subseteq  [\nvars]$, with $1 \leq |V_i|\leq 3$. For this clause $C_i$, add two candidates $c_j,d_j$, with
        \begin{align*}
        \cmatrix_1[c_i,j]= \begin{cases}
            \nvars(\nballot +3)+2i &\text{if }j=1\\
            0 &\text{otherwise}
        \end{cases} \quad\quad \quad \cmatrix_1[d_i,j]= \begin{cases}
            \nvars(\nballot +3)+2i-1 &\text{if }j=1\\
            1 &\text{if }j=\nballot \\
            0 &\text{otherwise}
        \end{cases}.
        \end{align*}

For constructing $\cmatrix_2[c_i,j]$ and $\cmatrix_2[d_i,j]$, say $z\in \{0,1,2,3\}$ is the number of negated literals in $C_i$. To build $\cmatrix_2[c_i,j]$ and $\cmatrix_2[d_i,j]$ for each $j \in [\nballot]$, start from the following:
        \begin{align*}
        \cmatrix_2'[c_i,j]& = \begin{cases}
            2z +  \nvars(1/\varepsilon+1) +6(\nballot+3) (i-1)&\text{if }j=1\\
            1 &\text{if }j=\nballot\\
            0 &\text{otherwise}
        \end{cases}\\ \cmatrix_2'[d_i,j]&= \begin{cases}1+
            \nvars(1/\varepsilon+1) +6(\nballot+3)(i-1) &\text{if }j =1\\
            2z &\text{if }j =\nballot \\
            0 &\text{otherwise}
        \end{cases}.
        \end{align*}
        Now for each $j\in V_i$, do the following:
        \begin{itemize}
            \item If $x_{j}$ appears non-negated in $C_i$, then add $2(\nballot+2)$ to $\cmatrix_2'[c_i,j]$ and $2(\nballot+2)$ to $\cmatrix_2'[d_i,j+1]$. 
            \item If $x_{j}$ appears negated in $C_i$, then add $2(\nballot+2)$ to $\cmatrix_2'[c_i,j+1]$ and $2(\nballot+2)$ to $\cmatrix_2'[d_i,j]$.
        \end{itemize}
        Finally, set $\cmatrix_2[c_i,j]$ and $\cmatrix_2[d_i,j]$ to the resulting $\cmatrix_2'[c_i,j]$ and $\cmatrix_2'[d_i,j]$ for each $j\in [\nballot ]$. 

Say $A=\{a_i\}_{i \in [\nvars]}, B=\{b_i\}_{i \in [\nvars]}, C=\{c_i\}_{i \in [\nclause]},$ and $D=\{d_i\}_{i \in [\nclause]}$. We now construct a profile that corresponds to the above $\cmatrix_\set$. First, add $\nballot$ more candidates $E=\{e_i\}_{i \in [\nballot]}$. Now for each $\set \in \onetwo$, each $f \in A \sqcup B \sqcup C \sqcup D$ and each $i \in [\nballot]$, add $\cmatrix_\set[f,i]$ voters to the set $N_\set$ that rank $f$ in the $i^{\text{th}}$ position, and ranks $e_{j}$ in the $j^{\text{th}}$ position for all $j \in [\nballot] \setminus \{i\}$. Say we have added $n_1$ and $n_2$ voters to $N_1$ and $N_2$ so far respectively and that $\set'= \argmax_{\set \in \onetwo}n_\set$ (If it's a tie, pick $\set'$ it arbitrarily). Add $(14\nballot+28) n_{\set'}$ and $(14\nballot+29)n_{\set'}-n_{3-\set'}$ voters to $N_{\set'}$ and $N_{3-\set'}$, respectively, all of whom rank $e_i$ in the $i^{\text{th}}$ position for all $i \in [\nballot]$. This also ensures that $|N_1|=|N_2|$. For each $\set \in \onetwo$ say $\profile^{(\set)}$ is the final vector of rankings of $N_\set$, as specified.

By construction, for each $\set \in \onetwo$, $f \in A \sqcup B \sqcup C \sqcup D $, and $i \in [\nballot]$, $\profile^{(\set)}$ indeed has $\cmatrix_\set[f,i]$ voters that rank $f$ in their $i^{\text{th}}$ position. Further, we have a total of $| A \sqcup B \sqcup C \sqcup D \sqcup E|=2(\nvars+\nclause)+\nballot= 3\nvars+2\nclause+2$ candidates (which polynomial in $\nvars,\nclause$), and $|N_1|=|N_2|=(14\nballot+29)\max_{\set \in \onetwo} \sum_{f \in A \cup B \cup C \cup D} \sum_{i=1}^\nballot \cmatrix_\set[f,i] \leq (14\nballot+29)2(\nvars+\nclause)\nballot \cdot \max_{{f \in A \cup B \cup C \cup D}, i \in [\nballot]}\cmatrix_\set[f,i]$ (which is also polynomial in $\nvars,\nclause$ since all entries of $\cmatrix_\set[f,i]$ are). 

We now claim that $\phi$ is satisfiable if and only if there exists a positional scoring rule that gives full agreement between $\profile^{(1)}$ and $\profile^{(2)}$. 

\noindent $(\Leftarrow):$ Assume there is a positional scoring rule $f_s$ with $s=(s_i)_{i \in \nballot}$ that gives full agreement for $\cmatrix_1$ and $\cmatrix_2$. Say $T_\set[a]= \sum_{i=1}^\nballot  \cmatrix_\set[a,i] s_i$ for all $a \in \cand$ and $\set \in \onetwo$. In particular, we must have agreement between $a_i$ and $b_i$ for each $i\in [\nvars]$. We have
\begin{align*}
    T_1[a_i]=(1+(\nballot +3)(i-1))s_1 +(\nballot+2) s_{i+1}, &\quad 
    T_1[b_i]=((\nballot +3)(i-1))s_1 +(\nballot+2) s_{i}+s_\nballot,\\
    T_2[a_i]=(1+(1/\varepsilon +1)(i-1))s_1 + s_{i+1}/\varepsilon, &\quad 
    T_2[b_i]=((1/\varepsilon +1)(i-1))s_1 +s_i/\varepsilon +s_\nballot.
\end{align*}
Since $s_1=1$ and $s_\ncand=0$, perfect agreement implies that we must have
\begin{align*}
    (T_1[a_i]-T_1[b_i])(T_2[a_i]-T_2[b_i])= (1-(\nballot+2)(s_i-s_{i+1}))(1-(s_i-s_{i+1})/\varepsilon)>0.
\end{align*}
This implies we must either have $s_i-s_{i+1}< \varepsilon$ or $s_i-s_{i+1}> \frac{1}{\nballot+2}$. Set the binary variable $x_i$ to False if $s_i-s_{i+1}< \varepsilon$ and to True if $s_i-s_{i+1}> \frac{1}{\nballot+2}$. We now argue that the resulting $\{x_i\}_{i \in [\nvars]}$ satisfies $\phi$, \emph{i.e.}, satisfies all of its clauses. Fix any $i \in [\nclause]$. We will show that $C_i$ is satisfied. By assumption of full agreement, 
\begin{align*}
    (T_1[c_i]-T_1[d_i])(T_2[c_i]-T_2[d_i])>0.
\end{align*}
Since $T_1[c_i]-T_1[d_i]= (t(\nballot+3)+2i)s_1 - (t(\nballot+3)+2i-1)s_1 -s_\nballot = 1>0 $, this implies that $(T_2[c_i]-T_2[d_i])>0$. Say $\{x_{j}\}_{j \in V_i}$ are the variables that appear in $C_i$ (for $V_i\subseteq [\nvars]$), and that $Z= \{j\in V_i: x_{j}\text{ is negated in }C_i\}$ with $|Z|=z$. This implies we have
        \begin{align*}
            &T_2[c_i]=  ( 2z+\nvars(1/\varepsilon+1) +6(\nballot+3)(i-1))s_1 + s_\nballot+2(\nballot+2)\left(\sum_{g \in V_i\setminus Z }s_{g} +\sum_{g \in Z }s_{g+1} \right),\\ 
            &T_2[d_i]=  (1+ \nvars(1/\varepsilon+1) +6(\nballot+3)(i-1))s_1 + 2zs_\nballot + 2(\nballot+2)\left(\sum_{g \in V_i \setminus Z }s_{g+1} +\sum_{g \in Z }s_{g} \right), \\
            &T_2[c_i]-T_2[d_i]=2z-1+2(\nballot+2)\left(\sum_{g \in V_i \setminus Z}(s_{g}-s_{g+1})-\sum_{g \in Z  }(s_{g}-s_{g+1}) \right).
        \end{align*}
        The only way for $C_i$ to be not satisfied is if $x_{g}$ is assigned to True (\emph{i.e.}, $s_{g}-s_{g+1}>\frac{1}{\nballot+2}$) for all $g\in Z$ and $x_{h}$ is assigned to False (\emph{i.e.},s $s_{h}-s_{h+1}<\varepsilon$) for all $ h\in V_i\setminus Z$. This would imply, however $ T_2[c_i]-T_2[d_i] < 2z-1+2(\nballot+2)\left(\sum_{g \in V_i \setminus Z} \varepsilon -\sum_{g \in Z  }\frac{1}{\nballot+2} \right)= 2z-1+ \frac{2(|V_i|-z)}{7}-2z<-1+\frac{6}{7}<0$, which gives a contradiction.

        As assuming $C_i$ is not satisfied gives a contradiction to the assumption that $s$ gives agreement between $\profile^{(1)}$ and $\profile^{(2)}$ for $c_i$ and $d_i$, $C_i$ must be satisfied. Since this is true for all $i \in [\nclause]$, this implies that $\phi$ is satisfiable.

        \noindent $(\Rightarrow):$ Assume $\phi$ is satisfiable for truth assignments $\{x^*_i\}_{i \in [\nvars]}$. For any $i \in [\nvars]$, define $\delta_i =
        \begin{cases}
            \frac{1}{\nballot+1} &\text{if }x^*_i\text{ is True}\\
            \frac{\varepsilon}{2} &\text{otherwise.}
        \end{cases}$.   
        Define the positional scoring rule $s=(s_i)_{i \in [\nballot]}$ as $s_i= \begin{cases}
            1&\text{if }i=1\\
            1-\left(\sum_{j=1}^{i-1}\delta_j\right)&\text{if }\nballot>i>1\\
            0&\text{if }i=\nballot
        \end{cases}$. Since $s_{\nballot-1}= 1-\left(\sum_{j=1}^{\nballot-2}\delta_j\right)=  \geq  1-\frac{\nballot-2}{\nballot+1}>0=s_\nballot$, monotonicity is satisfied, and $s$ is a valid scoring rule. We will show that $s$ gives perfect agreement between $\profile^{(1)}$ and $\profile^{(2)}$. As always, say $T_\set[f]=\sum_{i=1}^\nballot \cmatrix_\set[f,j]s_i$ for all $\set \in \onetwo$ and $f \in \cand$. Since  $1=s_1\geq s_2 \geq \ldots \geq s_\nballot =0$ and since $s_i>0$ for each $i\in [\nballot-1]$, the total scores for each $i\in [\nvars]$ and $j\in[\nclause]$ are 
        \begin{align}
            \tag{A1} T_1[a_i]&=(1+(\nballot +3)(i-1))s_1 +(\nballot+2) s_{i+1} \Rightarrow (\nballot +3)i \geq T_1[a_i] >  (\nballot +3)(i-1) \label{eq:A1}\\
            \tag{A2} T_2[a_i]&=(1+(1/\varepsilon +1)(i-1))s_1 +(1/\varepsilon) s_{i+1} \Rightarrow   (1/\varepsilon +1)i \geq T_2[a_i] >  (1/\varepsilon +1)(i-1) \label{eq:A2} \\
            \tag{B1} T_1[b_i]&=(\nballot +3)(i-1)s_1 +(\nballot+2) s_{i} + s_\nballot \Rightarrow (\nballot +3)i > T_1[b_i] >  (\nballot +3)(i-1) \label{eq:B1}\\
            \tag{B2} T_2[b_i]&=(1/\varepsilon +1)(i-1)s_1 +(1/\varepsilon) s_{i}+s_\nballot \Rightarrow   (1/\varepsilon +1)i > T_2[b_i] >  (1/\varepsilon +1)(i-1)  \label{eq:B2}\\
            \tag{C1} T_1[c_j]&=(t(\nballot+3)+2j)s_1 \Rightarrow t(\nballot+3)+2j=T_1[c_j]>t(\nballot+3)+2(j-1)  \label{eq:C1}\\
            \tag{C2} \begin{split}
                T_2[c_j]&= ( 2|Z_j|+\nvars(1/\varepsilon+1) +6(\nballot+3)(j-1))s_1 + s_\nballot+2(\nballot+2)\left(\sum_{g \in V_j\setminus Z_j }s_{g} +\sum_{g \in Z_j }s_{g+1} \right)\\ 
                &\Rightarrow \nvars(1/\varepsilon+1) +6(\nballot+3) j \geq  T_2[c_j] > \nvars(1/\varepsilon+1) +6(\nballot+3)(j-1)
            \end{split}\label{eq:C2}\\
            \tag{D1} T_1[d_j]&=(t(\nballot+3)+2j-1)s_1 +s_k\Rightarrow t(\nballot+3)+2j>T_1[d_j]>t(\nballot+3)+2(j-1)  \label{eq:D1}\\
            \tag{D2} \begin{split}
                 T_2[d_j]&=  (1+ \nvars(1/\varepsilon+1) +6(\nballot+3)(j-1))s_1 + 2|Z_j|s_\nballot + 2(\nballot+2)\left(\sum_{g \in V_j \setminus Z_j }s_{g+1} +\sum_{g \in Z_j }s_{g} \right) \\ 
                &\Rightarrow \nvars(1/\varepsilon+1) +6(\nballot+3)j >  T_2[d_j] > \nvars(1/\varepsilon+1) +6(\nballot+3)(j-1)
            \end{split} \label{eq:D2}
        \end{align}
        where $V_j \subseteq [\nvars]$ and $Z_j \subseteq V_j$ indicate the indices of the variables that appear in and that appear negated in the clause $C_j$, respectively.
        We now show that $s$ gives perfect agreement between $\profile^{(1)}$ and $\profile^{(2)}$, \emph{i.e.}, $(T_1[g]-T_1[h])(T_2[g]-T_2[h])>0$ for all distinct pairs of $g,h \in  A \sqcup B \sqcup C \sqcup D \sqcup E$. We will proceed by a case by case analysis: 
        \begin{itemize}
            \item \textbf{Case 1:} $g \in A \sqcup B$, $h \in C \sqcup D$. By \cref{eq:A1,eq:B1}, we have $T_1[g] \leq t(\nballot+3)$, since $i \leq t$.  By \cref{eq:C1,eq:D1}, we have $T_1[h] > t(\nballot+3)$, as $j\geq 1$. This implies $T_1[g]<T_1[h]$. Similarly, by \cref{eq:A2,eq:B2} we have $T_2[g] \leq t(1/\varepsilon+1)$, and by \cref{eq:C2,eq:D2} we have $T_2[h] >  t(1/\varepsilon+1)$, as $j\geq 1$. This implies $T_2[g]<T_2[h]$. Hence, $(T_1[g]-T_1[h])(T_2[g]-T_2[h])>0$, as desired.
            
            \item  \textbf{Case 2:} $g \in \{a_i,b_i\}$, $h \in \{a_j,b_j\}$ for some $t\geq i>j\geq 1$.  By \cref{eq:A1,eq:B1}, we have $T_1[g] > (\nballot+3)(i-1) \geq (\nballot+3)j$ and  $T_1[h] \leq (\nballot+3)j$. This implies $T_1[g] > T_1[h]$. Similarly, by \cref{eq:A2,eq:B2} we have $T_2[g] > (1/\varepsilon+1)(i-1) \geq (1/\varepsilon+1)j$ and  $T_2[h] \leq  (1/\varepsilon+1)j$. This implies $T_2[g]>T_2[h]$. Hence, $(T_1[g]-T_1[h])(T_2[g]-T_2[h])>0$, as desired.
            
            \item \textbf{Case 3:} $g = a_i$, $h=b_i$ for some $i \in [t]$. In this case $T_1[g]-T_1[h]=1-(\nballot+2)(s_i - s_{i+1}) = 1-(\nballot+2)\delta_i$ and $T_2[g]-T_2[h]=1-(1/\varepsilon)(s_i - s_{i+1}) = 1-\delta_i/\varepsilon$. If $\delta_i=\frac{\varepsilon}{2}$ (\emph{i.e.} $x^*_i$ is False), $(T_1[g]-T_1[h])(T_2[g]-T_2[h])=(1-\frac{1}{14})(1-\frac{1}{2})>0$. If $\delta_i=\frac{1}{\nballot+1}$ (\emph{i.e.} $x^*_i$ is True), $(T_1[g]-T_1[h])(T_2[g]-T_2[h])=(1-\frac{\nballot+2}{\nballot+1})(1-\frac{7(\nballot+2)}{\nballot+1})= (\frac{-1}{\nballot+1})(\frac{-6\nballot-13}{\nballot+1})>0$.

            \item \textbf{Case 4:} $g \in \{c_i,d_i\}$, $h \in \{c_j,d_j\}$ for some $t\geq i>j\geq 1$. By \cref{eq:C1,eq:D1}, we have $T_1[g] > t(\nballot+3)+2(i-1) \geq  t(\nballot+3)+2j$ and  $T_1[h] \leq t(\nballot+3)+2j$. This implies $T_1[g] > T_1[h]$. Similarly, by \cref{eq:C2,eq:D2} we have $T_2[g] > t(1/\varepsilon+1)+6(\nballot+3)(i-1) \geq   t(1/\varepsilon+1)+6(\nballot+3) j$ and  $T_2[h] \leq t(1/\varepsilon+1)+6(\nballot+3) j$. This implies $T_2[g] > T_2[h]$. Hence, $(T_1[g]-T_1[h])(T_2[g]-T_2[h])>0$, as desired.
            
            \item \textbf{Case 5:} $g = c_i$, $h=d_i$ for some $i \in [\nclause]$. In this case $T_1[g]-T_1[h]=1$ and 
            \begin{align*}
                 T_2[g]-T_2[h]&=2|Z_i|-1+2(\nballot+2)\left(\sum_{g \in V_i \setminus Z_i}(s_{g}-s_{g+1})-\sum_{h \in Z_i  }(s_{h}-s_{h+1}) \right)\\
                 &=2|Z_i|-1+2(\nballot+2)\left(\sum_{g \in V_i \setminus Z_i}\delta_g-\sum_{h \in Z_i  }\delta_h \right),
            \end{align*}
            where $V_i \subseteq [\nvars]$ and $Z_i \subseteq V_i$ indicate the indices of the variables that appear in and that appear negated in the clause $C_i$, respectively. Since $\{x^*_i\}_{i \in \nvars}$ is a satisfying assignment by assumption, we must either have $\delta_{g'} =\frac{1}{\nballot+1}$ (\emph{i.e.} $x^*_{g'}$ is True) for some $g' \in V_i \setminus Z_i$ or $\delta_{h'}=\frac{\varepsilon}{2}$ (\emph{i.e.} $x^*_{h'}$ is False) for some $h' \in Z_i$. If the former is true ($\delta_{g'} =\frac{1}{\nballot+1}$ for some $g'\in V_i \setminus Z_i$):
            \begin{align*}
                 T_2[g]-T_2[h]&=2|Z_i|-1+2(\nballot+2)\left(\frac{1}{\nballot+1}+\sum_{g \in V_i \setminus (Z_i\cup\{g'\} )}\delta_g-\sum_{h \in Z_i  }\delta_h \right) 
                 \\& \geq 2|Z_i|-1+2(\nballot+2)\left(\frac{1}{\nballot+1}+\frac{(|V_i|-|Z_i|-1)\varepsilon}{2}-\frac{|Z_i|}{\nballot+1} \right) \\&\geq 2|Z_i|-1+2(\nballot+2)\left(\frac{1-|Z_i|}{\nballot+1}-\frac{\varepsilon}{2} \right)  \\&=-\frac{2|Z_i|}{\nballot+1}+\frac{\nballot+3}{\nballot+1} - \frac{1}{7}= \frac{-2|Z_i| +\frac{6}{7}\nballot + \frac{20}{7}}{\nballot+1}.
            \end{align*}
            Similarly, if the latter is true ($\delta_{h'} =\frac{\varepsilon}{2}$ for some $h'\in Z_i$):
            \begin{align*}
                 T_2[g]-T_2[h]&=2|Z_i|-1+2(\nballot+2)\left(\sum_{g \in V_i \setminus Z_i}\delta_g-\frac{\varepsilon}{2}-\sum_{h \in Z_i \setminus \{h'\}  }\delta_h \right) 
                 \\& \geq 2|Z_i|-1+2(\nballot+2)\left(-\frac{\varepsilon}{2}-\frac{|Z_i|-1}{\nballot+1} \right)=  -\frac{2|Z_i|}{\nballot+1}+\frac{\nballot+3}{\nballot+1}-\frac{1}{7}\\&= \frac{-2|Z_i| +\frac{6}{7}\nballot + \frac{20}{7}}{\nballot+1},
            \end{align*}
            which gives the same inequality. Consider two cases: if $\nvars \geq 2$ (\emph{i.e.}, $\nballot=\nvars +2 \geq 4$), we get
            \begin{align*}
                   T_2[g]-T_2[h] \geq   \frac{-2|Z_i| +\frac{6}{7}\nballot + \frac{20}{7}}{\nballot+1}  \geq \frac{-6 +\frac{24}{7} + \frac{20}{7}}{\nballot+1}= \frac{\frac{2}{7}}{\nballot+1} >0,
            \end{align*}
            since $|Z_i| \leq 3$. If $\nvars =1$ (\emph{i.e.}, $\nballot=3$), then $|Z_i|\leq 1$, since $\nvars$ is the number of variables in $\phi$. Then,
            \begin{align*}
                   T_2[g]-T_2[h] \geq   \frac{-2|Z_i| +\frac{6}{7}\nballot + \frac{20}{7}}{\nballot+1}  \geq \frac{-2 +\frac{18}{7} + \frac{20}{7}}{\nballot+1}= \frac{\frac{24}{7}}{\nballot+1} >0.
            \end{align*}
            In both cases, we have $T_2[g]-T_2[h]>0$ and hence $(T_1[g]-T_1[h])(T_2[g]-T_2[h])>0$, as desired.
   
            \item \textbf{Case 6:} $g \in E\setminus \{e_\nballot\}, h \notin E$. Say $g=e_i$ for some $1\leq i< \nballot$ and $n_\set = \sum_{f \in  A \sqcup B \sqcup C \sqcup D }\sum_{j=1}^\nballot \cmatrix_\set[f,j]$ for $\set \in \onetwo$. Due to the last set of voters that we added while constructing $\profile^{(1)}$ and $\profile^{(2)}$ (those that rank only elements of $E$), we have
            \begin{align*}
            \cmatrix_\set[e_i,i] \geq 14(\nballot+2) \max_{\gamma \in \onetwo} n_{\gamma} &\geq 14(\nballot+2) \sum_{j=1}^\nballot \cmatrix_\set [h, j] \\
            &\geq 14(\nballot+2) \sum_{j=1}^\nballot \cmatrix_\set [h, j] s_j = 14(\nballot+2) T_\set[h]
            \end{align*} 
            Moreover, since $s_i \geq s_{\nballot-1} \geq 1-\frac{\nballot-2}{\nballot+1} =\frac{3}{\nballot+1} > \frac{1}{\nballot+1}$, we have
                $T_\set[g]= \cmatrix_\set[g,i]s_i > \frac{\cmatrix_\set[e_i,i]}{\nballot+1} > T_\set[h]$
            for each $\set \in \onetwo$. This gives $(T_1[g]-T_1[h])(T_2[g]-T_2[h])>0$, as desired.
            
            \item \textbf{Case 7:} $g=e_\nballot$. By construction $\cmatrix_\set[e_\nballot,j] > 0 \Rightarrow j=\nballot$, and hence $T_\set[g]=T_\set[e_\nballot]= \cmatrix_\set[e_\nballot,\nballot] s_\nballot=0$ for each $\set \in \onetwo$. If $h \in  A \sqcup B \sqcup C \sqcup D $, we have $T_\set[h]>0$ for both $\set \in \onetwo$ by \cref{eq:A1,eq:A2,eq:B1,eq:B2,eq:C1,eq:C2,eq:D1,eq:D2}. If $h=e_i$ for some $i \in [\nballot-1]$, on the other hand, $\cmatrix_\set[h,i]>0$ by the last set of voters added to $\profile^{(1)}$ and $\profile^{(2)}$ (those that only rank the elements of $E$), so once again we have $T_\set[h] = \cmatrix_\set[h,i]s_i>0$. Hence, we have $(T_1[g]-T_1[h])(T_2[g]-T_2[h])>0$, as desired.
            
            \item \textbf{Case 8:} $g=e_i$, $h=e_j$ for some $i<j<k$. Fix some $\set \in \onetwo$. By construction, $\profile^{(\set)}$ contains two types of rankings: (a) those that contain one elements in $ A \sqcup B \sqcup C \sqcup D $ and all remaining elements are those in $E$, of which there $n_\set = \sum_{f \in  A \sqcup B \sqcup C \sqcup D } \sum_{j=1}^\nballot \cmatrix_\set[f,j]$ many and (b) those that rank only and all elements in $E$, of which there are say $y_\set$ many, with $y_\set \geq \max_{\gamma \in \onetwo} 14(\nballot+2)n_\gamma \geq 14(\nballot+2)n_\set$. Hence, for any $j' \in [\nballot-1]$, we have $n_\set + y_\set > \cmatrix[e_{j'},j'] \geq y_\set$, where the first inequality is strict since $\cmatrix_\set[f,j']>0$ for some $f \in  A \sqcup B \sqcup C \sqcup D $ for all $j' \in [\nballot-1]$ (see, for example, the definitions for $\cmatrix_\set[a_{j'},j'+1]$ and  $\cmatrix_\set[b_{j'},j']$ for each $j' \in [\nvars]$), so there is at least some voters in the group of $n_\set$ that do not rank $e_{j'}$.
            This implies
            \begin{align*}
                T_\set[g]-T_\set[h]&= T_\set[e_i]-T_\set[e_j] = \cmatrix_\set[e_i,i]s_i-\cmatrix_\set[e_j,j]s_j \\& > y_\set s_i - (n_\set+y_\set)s_j = y_\set(s_i-s_j)-n_\set s_j \\
                & \geq y_\set(s_{i}-s_{i+1})-n_\set=y_\set \delta_i -n_\set \\&\geq 14(k+2)n_\set \cdot \frac{\varepsilon}{2}-n_\set =n_\set -n_\set = 0
            \end{align*}
        Hence, we have $(T_1[g]-T_1[h])(T_2[g]-T_2[h])>0$, as desired.
    \end{itemize}
We have shown that $(T_1[g]-T_1[h])(T_2[g]-T_2[h])>0$ for all $g,h \in \cand$, proving that $s$ indeed gives full agreement between $\profile^{(1)}$ and $\profile^{(2)}$.

\end{proof}
\section{Experiment Details}
\label{sec:appendix_experiments}

This appendix provides additional details on experiments discussed in \Cref{sec:experiments}.

\subsection{Ground Truth and Disagreement}\label{app:gt_v_dis}

In our first experiment, we evaluate the relationship between the rankings produced by SWFs and the ground truth ranking of a preference distribution.

Both Mallows and Plackett-Luce (PL) preference models naturally correspond to an ``ideal'' ranking. These distributions also have known maximum likelihood estimators: Mallow's MLE is the Kemeny function while we estimate the MLE of PL preferences using the \texttt{choix} Python library\footnote{https://choix.lum.li/en/latest/index.html}.

In our experiment, for each distribution we generate 50 elections with 100 voters and 100 alternatives. We set $\phi = 0.4$ for Mallow's preferences and $\alpha_i = e^{0.5(m-i)}$ for Plackett-Luce. Each voter provides a ranking over 10 alternatives, sampled from the relevant distribution, such that each alternative is ranked by 10 voters. This type of partial ranking aligns with a paper reviewing framework where conference organizers may wish to ensure that each paper is reviewed by a certain number of reviewers while each reviewer receives a certain number of assignments.

For each generated election, we do the following for several SWFs: Assign each voter into one of two groups, chosen uniformly at random for each voter. Each of these groups is then used as a complete profile to generate a weak ranking according to the SWF. We measure the $KT$ distance between these two rankings. We report the mean over these 10 splits as the ``split distance.'' We also calculate the $KT$ distance from the ranking generated by the SWF applied to all voters to the ground truth ranking (the ``ground truth distance'').

We plot each pair of distances for each SWF. In \autoref{fig:ground_truth_vs_central_ranking} the MLE of each noise model tends to minimize each distance. In general, there is a very strong relationship between the two distances. As the split distance increases, so too does the ground truth distance.

In this experiment the SWFs we evaluated were Kemeny, Plackett-Luce MLE (as implemented by the \texttt{choix} library), Borda Min-Max, and Optimized Positional Scores (see \autoref{sec:appendix_voting_definitions}) in addition to positional scoring rules with the following vectors:

 \begin{itemize}
     \item Plurality $(1, 0, 0, \ldots, 0)$
     \item Plurality + Veto $(1, 0.5, 0.5, \ldots, 0.5, 0)$
     \item Veto $(1, 1, \ldots, 1, 0)$
     \item Two-Approval $(1, 1, 0, \ldots, 0)$
     \item Borda $(\frac{m-1}{m-1}, \frac{m-2}{m-1}, \ldots, \frac{m-m}{m-1})$
 \end{itemize}

\subsection{Axiom Violations}
\label{app:axiom_violation}

\begin{figure*}[t]
    \centering
    \includegraphics[width=\linewidth]{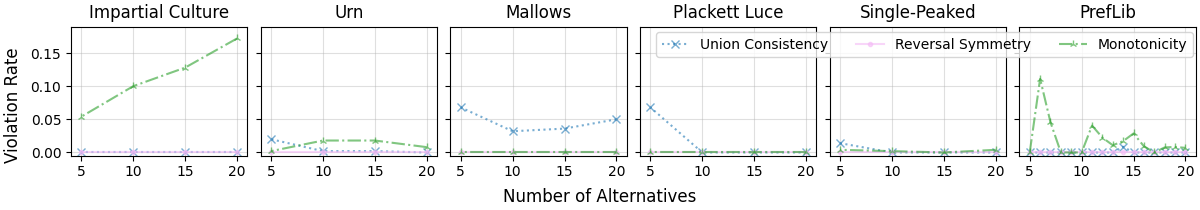}
    \caption{
    Violation rate of axioms across several preference distributions and real-world election data (rightmost). 
    Axiom violations generally decrease as the number of alternatives increases.
    }
    \label{fig:axiom_violation_rates_appendix}
\end{figure*}

To further explore the theoretical results of \Cref{sec:axiom} we analyze experimentally how often certain axioms are violated across several distinct preference distributions. Results of this experiment are found in \autoref{fig:axiom_violation_rates_appendix}. We measure the axiom violation rate, as defined by \citet{caiata2025}; an experimental measure of how often axioms are violated in practice on given preference distributions and voting rules.
For this experiment we sample, for each distribution, 500 profiles with 100 voters that each provide a full ranking over $m \in {5, 10, 15, 20}$ alternatives from the Impartial Culture, Urn (with $\alpha$ sampled from a Gamma distribution with shape $k = 0.8$ and scale $\theta = 1$ \cite{Boehmer21:Putting}), Mallows ($\phi = 0.4)$, Plackett-Luce ($\alpha_i = e^{0.5(m-i)}$), and Single-Peaked preference distributions \cite{Brandt16:Handbook}. Additionally, we consider all elections with complete preferences from PrefLib with up to 1000 voters and $m$ alternatives for $5 \leq m \leq 20$ (a total of 1392 elections) \cite{Mattei13:Preflib}. This real-world election data is compiled from a wide variety of sources with varying underlying preference distributions. 

On each election we take 50 random splits and calculate the fraction of splits on which each axiom is violated by a Rule Picking Rule using several positional scoring rules. 

As we evaluate the Reversal Symmetry axiom we are constrained to using positional scoring rules in our Rule Picking Rule. For all axioms, our RPR uses the following rules: Plurality, Plurality + Veto, Veto, Two-Approval, and Borda count. The scoring vector associated with each of these rules is given in \Cref{app:gt_v_dis}.

We provide an example to illustrate how we measure axiom violations.

\textit{Example.} Consider the monotonicity axiom. In a profile $\profile$ and a modification $\profile_a$ where some voters increase the rank they give to alternative $a$. Monotonicity is violated if the rank of $a$ under the RPR is lower in $\profile_a$ than in $\profile$.

We test for a violation of monotonicity in a profile $\profile$ by selecting the alternative $a$ ranked first in $\profile$ by the RPR, selecting some uniform random fraction of voters in (0.2, 0.8) to increase their ranking of $a$. If the RPR on $\profile_a$ assigns a different rank to $a$ we say that the axiom has been violated. We report the fraction of instances tested in which the axiom is violated.

As shown in \Cref{prop:reverse} (which easily extends to our implementation using Monte Carlo sampling), reversal symmetry is never violated and only checked as a test. Moreover, under all but one distribution, violations of union consistency and monotonicity are rare ($<0.05$ of the time for $\geq 10$ alternatives) and generally decrease as the number of alternatives increases.\footnote{The one exception to this trend is monotonicity in the Impartial Culture (IC) distribution, which is not centered around a ground truth. One possible explanation for this is IC maximizing the probability of majority cycles~\citep{Tsetlin03:Impartial}, under which monotonicity violations are more likely to occur.}

\subsection{Alternative Distance Functions}
\label{app:jaccard}

\begin{figure}[t]
    \centering
    \begin{subfigure}[t]{0.49\textwidth}
        \centering
        \includegraphics[width=\textwidth]{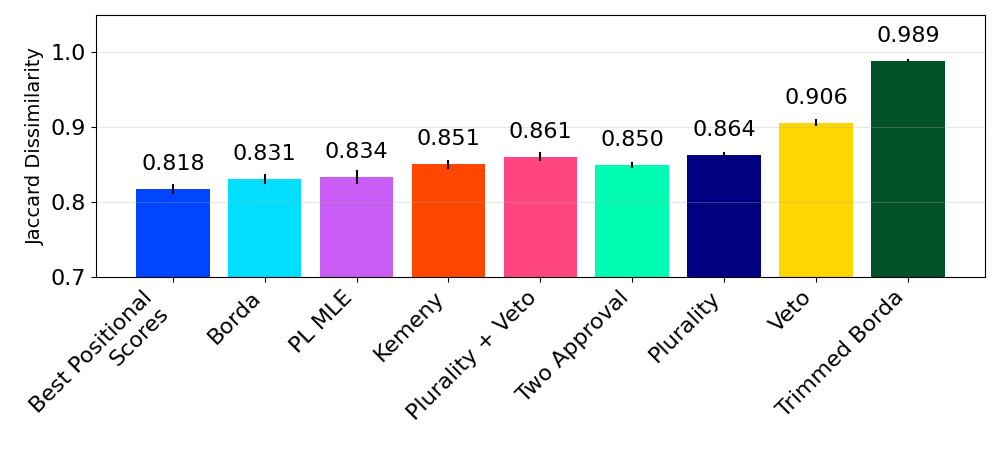}
        \caption{
        Jaccard dissimilarity between sets of winners for splits of partial rankings of ALMA Cycle 10 project proposals.
        } 
        \label{fig:alma_cycle10-jaccard}
    \end{subfigure}
    \hfill
    \begin{subfigure}[t]{0.49\textwidth}
        \centering
        \includegraphics[width=\textwidth]{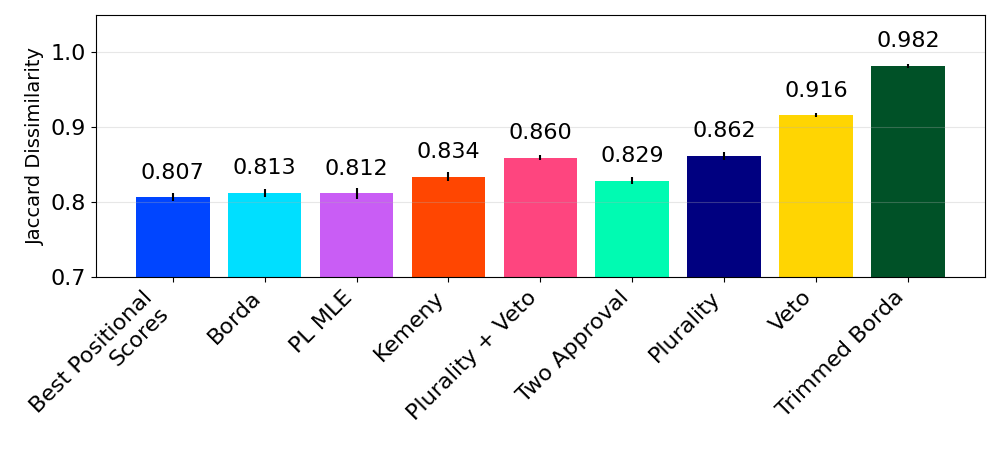}
        \caption{
        Jaccard dissimilarity between sets of winners for splits of partial rankings of ALMA Cycle 11 project proposals.
        }
        \label{fig:alma_output-jaccard}
    \end{subfigure}
    \caption{Jaccard dissimilarity and standard error of several rules on sets of winners using partial rankings over project proposals for the Atacama Large Millimeter Array (ALMA). Winners of a rule are the first 240 proposals in the ranking output by that rule.
    } 
    \label{fig:alma_jaccard_plots}
\end{figure}

As mentioned in \Cref{sec:aba}, the $\aba$ framework can also be used in conjunction with distance functions other than the (weighted) Kendall-Tau distance. This can be especially for settings such as peer review of poroposals for the Atacama Large Millimeter Array (ALMA) \citep{Meyer22:Analysis}. Here, the goal is not necessarily to rank all proposals, but to pick $k$ proposals to be funded, for some $k < \ncand$. It may thus be more useful to study the consistency the top $k$ proposals picked by each rule. Indeed, we explore this possibility by using the \textit{Jaccard dissimilarity} \citep{jaccard1901etude} to measure the similarity of winning proposals on ALMA data. 

Jaccard dissimilarity measures the overlap of two sets; it is defined as the ratio of the symmetric difference to the size of the union of two sets. For each voting rule that we evaluate, we select the 240 highest ranked proposals as the winning proposals (chosen to match the number of winning proposals chosen in ALMA Cycle 10).

In \Cref{fig:alma_jaccard_plots} we see the Jaccard dissimilarity for several rules on proposal rankings from Cycle 10 and 11 of ALMA. The differences between rules have a similar relative order to the Kendall-Tau distance of the rules applied to the same data (\Cref{fig:alma_bar_plots}).

\subsection{City Election Data}\label{sec:city-election}

In \autoref{fig:empirical_scatter} we show results of several rules on Preflib data from several real-world political elections. Each of these elections used Instant Runoff Voting (IRV) to compute the empirical winner. Given a profile, IRV iteratively eliminates the alternative with the least plurality score (\emph{i.e.}, with the least number of voters ranking them top) by removing them from the profile, until a single alternative remains. As an SWF, we interpret IRV as the rule outputing the alternatives in the reverse of this elimination order (the alternative eliminated first is ranked bottom, and so on). We provide details on each election shown in \autoref{tab:preflib_city_election_details}. The data from Preflib can be freely distributed and used under the GPL-3.0 license.

Initially we collected all elections included in Preflib Elections 5, 16, 17 ,18, 19, 20, 21, and 22. However, in \autoref{fig:empirical_scatter} we have filtered out two types of election from our starting data data in order to make the plot more readable:

\begin{itemize}
    \item Any elections where all rules had a split distance of $0$. While this is a random event, we found that this consistently included a set of 10 elections.
    \item Any election with more than $100$ candidates. This excluded two specific elections that occurred in 2009 in the city of Minneapolis with 379 and 477 alternatives. These two elections have exceptionally high split distances and were removed as outliers due to their split distance and unusual number of candidates.
\end{itemize}

\begin{table}[ht]
\centering
\begin{tabular}{@{}cp{1.6cm}p{5cm}cc@{}}
\toprule
Plot Index & Preflib Election ID & Election Name                                                                            & \# Voters & \# Candidates \\ \midrule
0          & 5                   & City (2009 Burlington Mayoral Election)                                                  & 8980   & 6  \\
1          & 16                  & City (Aspen City Council 2009)                                                           & 2477   & 11 \\
2          & 16                  & City (Aspen Mayor 2009)                                                                  & 2527   & 5  \\
3          & 17                  & City (2010 Berkeley City Council - District 7)                                           & 4173   & 4  \\
4          & 18                  & City (2009 Minneapolis Board of Estimate and Taxation Election - No Write In)            & 32086  & 7  \\
5          & 18                  & City (2009 Minneapolis Park and Recreation Commissioner At-Large Election - No Write In) & 36655  & 9  \\
6          & 19                  & City (2010 Oakland Mayor)                                                                & 119256 & 11 \\
7          & 19                  & City (2010 Oakland City Council - District 4)                                            & 20981  & 8  \\
8          & 19                  & City (2012 Oakland City Council - District 3)                                            & 22079  & 7  \\
9          & 19                  & City (M2012 Oakland City Council - District 1)                                           & 28660  & 8  \\
10         & 20                  & City (2008 Pierce County Assessor - Treasurer)                                           & 262312 & 7  \\
11         & 21                  & City (2011 San Francisco Mayor)                                                          & 194530 & 25 \\
12         & 21                  & City (2011 San Francisco Sheriff)                                                        & 183192 & 5  \\
13         & 21                  & City (2012 San Francisco Board of Supervisors - District 5)                              & 35183  & 9  \\
14         & 21                  & City (San Francisco Board of Supervisors - District 7)                                   & 31437  & 10 \\
15         & 21                  & City (2010 San Francisco Board of Supervisors - District 2)                              & 24109  & 7  \\
16         & 21                  & City (2008 San Francisco Board of Supervisors - District 9)                              & 26634  & 8  \\
17         & 21                  & City (2008 San Francisco - Board of Supervisors District 3)                              & 27310  & 10 \\
18         & 21                  & City (2008 San Francisco Board of Supervisors - District 1)                              & 28777  & 10 \\
19         & 21                  & City (2010 San Francisco Board of Supervisors - District 10)                             & 18001  & 22 \\
20         & 21                  & City (2008 San Francisco Board of Supervisors - District 11)                             & 24717  & 10 \\
21         & 21                  & City (2010 San Francisco Board of Supervisors - District 6)                              & 21188  & 15 \\
22         & 22                  & City (2010 San Leandro Mayor)                                                            & 22407  & 7  \\
23         & 22                  & City (2012 San Leandro City Council - District 4)                                        & 23236  & 5  \\
24         & 22                  & City (2012 San Leandro City Council - District 2)                                        & 25355  & 4  \\ \bottomrule
\end{tabular}
\caption{Details on the election represented at each index in \autoref{fig:empirical_scatter}.}
\label{tab:preflib_city_election_details}
\end{table}

\subsection{Formula One Data}\label{app:formulaone}

\begin{figure}[t]
    \centering
    \includegraphics[width=\textwidth]{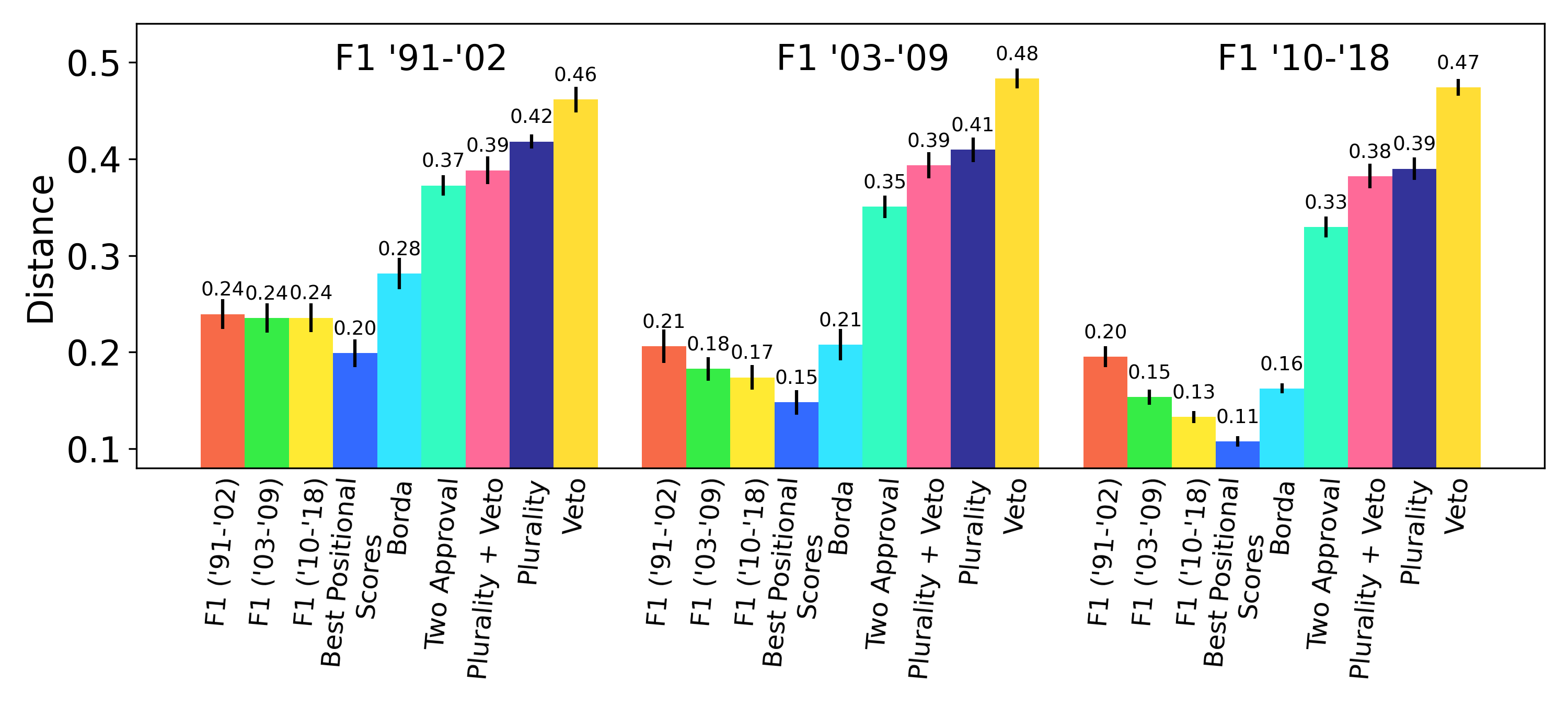}
    \caption{Distance between splits for SWFs aggregating rankings of drivers in F1 races. Each rule is evaluated on each period of races. Newer F1 rules provide lower distance on all race periods.
    }
    \label{fig:f1_all_rules-appendix}
\end{figure}

The F1 portion of \autoref{fig:olympic_f1_splits} is generated using data about races found on Preflib \cite{Mattei13:Preflib}; specifically, we use the complete form of dataset ID 00053.
This contains one profile for each Formula One season with a preference order corresponding to each individual race in the season. Each preference order lists the drivers that competed in \textit{every} race in that season, ordered by the position in which they finished that race. 

We show in \autoref{fig:f1_all_rules-appendix} the $KT$ distance for all rules divided by racing period. While the rules with highest distance (Two-Approval, Plurality + Veto, Plurality, Veto) stay quite consistent, all other rules provider lower $KT$ distances on more recent race periods. Notably, the F1 rules themselves become much more consistent over time.

\subsection{Olympic Data}\label{app:olympics}

\begin{figure}[t]
    \centering
    \includegraphics[width=.7\linewidth]{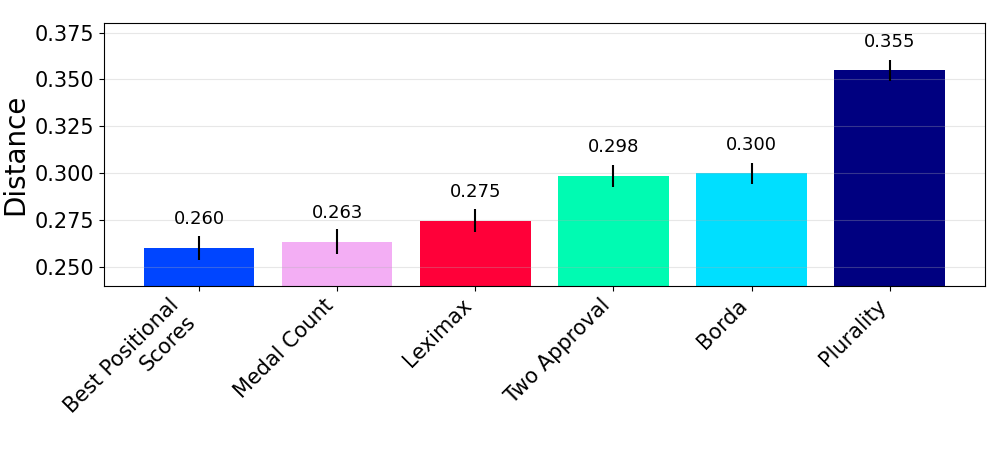}
    \caption{Average distance and standard error between splits for SWFs over rankings induced by Olympic medals. Optimization of positional scores is only occasionally able to improve upon the ranking generated by giving one point to each country for each medal they win, regardless of medal type.}
    \label{fig:olympics_bar_appendix}
\end{figure}

To evaluate rules on Olympic medal wins we use a Kaggle dataset providing details of all Olympic results (Summer and Winter) between 1896 and 2016 \citep{kaggle:olympics}. From this we extract the winning countries of each medal for each event. We convert each event into a partial ranking by assigning to first place all countries receiving gold medals, to second place all countries receiving silver medals, and to third place all countries receiving bronze medals. Countries that did not win medals or did not compete in an event are not included in a ranking.

In the large majority of cases this results in a partial order consisting of exactly three countries, each in a different rank. In rare cases a position might be empty or have multiple winners (e.g. in 1992 Canada and USA won Gold in women's solo synchronized swimming, no Silver was awarded, and Bronze was won by Japan), or a country might occur multiple times (e.g. in 2008 Jamaica won Gold and two Silver medals in the women's 100 metre; no Bronze was awarded).
In these exceptional cases we do nothing different and award that country points for each of the positions that it occupies.

This results in one ``election'' for each year in which the Olympics occurred where preference orders correspond to partial rankings induced by medal wins. To these rankings we apply the $\aba$ framework as we have in all other experiments: We generate splits by randomly placing the profile induced by each event into one split or the other, then find the distance between splits and report the average over many sets of splits. Here we add two new rules:

\begin{itemize}
    \item \textbf{Medal Count:} Each medal is treated equivalently. Each country receives a point each time it appears in a preference order, regardless of rank.
    \item \textbf{Leximax:} Rank countries by the number of Gold medals they receive, breaking ties by counting Silver medals, breaking remaining ties by counting Bronze medals. In practice we implement this as the score vector $(1000000, 1000, 1)$.
\end{itemize}

Results of this analysis are displayed in \Cref{fig:olympics_bar_appendix}. We see that simply counting the total number of medals each country receives results in much lower disagreement than standard voting rules, only very rarely do optimized scores improve upon the Medal Count rule.

\subsection{Additional Definitions of Voting Rules}
\label{sec:appendix_voting_definitions}

Through our experiments we use several rules which are not fully described. In this section we describe each rule found throughout our paper.

\subsubsection{Kemeny}\label{app:kemeny}

Kemeny's rule is defined as the ranking which minimizes the sum of Kendall-Tau distances between the output ranking and each voter's individual ranking \cite{Kemeny59:Mathematics}. We implement the Kemeny method code provided by \citet{Baharev21:Exact} and the Gurobi optimization library \citep{Baharev21:Exact,Gurobi24}.

\subsubsection{Trimmed Borda}

Calculate the number of voters ranking each alternative at each rank. For each alternative $a_i$, remove one of the highest rankings which $a_i$ has received and remove one of the lowest rankings which $a_i$ has received. After all removals are complete, calculate Borda scores as normal \citep{Meyer22:Analysis}.

\subsubsection{Best Positional Scores}\label{app:bestpos}

This method uses the \texttt{optimal-voting} package generate a positional scoring vector which minimizes $KT$ distance between splits on a given profile \citep{armstrong2025optimal}. This package uses simulated annealing to generate novel positional scoring rules which optimize a target function.
In each of our experiments, we run optimization multiple times; starting once from the initial score vector of each other positional scoring rule used in the experiment. For instance, if we were comparing with Borda's rule and Plurality we run optimization twice, starting from $(m-1, m-2, ..., 0)$ (Borda), and $(1, 0, ..., 0)$ (Plurality). In all cases, we run annealing for 500 steps.
At each step, \texttt{optimal-voting} updates one index in the state vector (\textit{i.e.}, the positional score vector) by some amount sampled uniformly at random from $(0.05, 1)$. We restrict updates to those that result in the state vector being weakly monotonically decreasing (the value at each index is not higher than the previous value).
Using the updated state vector as a positional scoring vector we calculate the mean $KT$ split distance over each split of voters. We use the same set of splits in annealing as we do when evaluating each other voting rule.
The new state is accepted with a probability related to the magnitude of difference between the current and previous mean split distances, and the number of steps that have already occurred. A small magnitude of difference is more likely to be accepted earlier than later. States with lower mean split distance are always accepted. 

\subsubsection{Leximax}

This rule is used exclusively in evaluating Olympic data. For a given profile, alternatives are ranked according to the number of gold medals they have received. To break ties, tied alternatives are ordered according to the number of silver medals they have received. To break subsequent ties, tied alternatives are ordered according to the number of bronze medals they have received. Note that we can instantiate Leximax as a positional scoring rule where each position is \textit{much} larger than the subsequent position. In our Olympic experiments we use the (pre-normalization) vector $(1000000, 1000, 1)$. As there are never more than 1000 opportunities for a single country to receive a medal of one type this is equivalent to Leximax.

\subsubsection{Medal Count}

This rule is used exclusively in evaluating Olympic data. For a given profile, alternatives are ranked purely based on the number of medals they received with no regard for the type of medals.

Our Olympics data uses as input the partial rankings containing only countries winning medals in an event. In this case, the Medal Count rule is equivalent to the positional scoring rule with the vector $(1, 1, 1)$.

\end{document}